\newcommand{\argmin}{\operatorname{argmin}}
\def\lsp{{\boldsymbol\ell}}
\renewcommand{\(}{\left(}
\renewcommand{\)}{\right)}
\newcommand{\abf}{{\bf a}}
\newcommand{\dbf}{{\bf d}}
\newcommand{\ebf}{{\bf e}}
\newcommand{\gbf}{{\bf g}}
\newcommand{\sbf}{{\bf s}}
\newcommand{\vbf}{{\bf v}}
\newcommand{\xbf}{{\bf x}}
\newcommand{\ybf}{{\bf y}}
\newcommand{\zbf}{{\bf z}}
\def\Abf{{\mathbf A}}
\def\Ibf{{\mathbf I}}
\def\Cbb{{\mathbb C}}
\def\Ebb{{\mathbb E}}
\def\Kbb{{\mathbb K}}
\def\Nbb{{\mathbb N}}
\def\Pbb{{\mathbb P}}
\def\Rbb{{\mathbb R}}
\def\Ccal{{\mathcal C}}
\def\Fcal{{\mathcal F}}
\def\Gcal{{\mathcal G}}
\def\Hcal{{\mathcal H}}
\def\Mcal{{\mathcal M}}
\def\Pcal{{\mathcal P}}
\def\Wcal{{\mathcal W}}
\newtheorem{acorol}{Corollary}
\newtheorem{adef}{Definition}
\newtheorem{atheorem}{Theorem}
\newtheorem{aprop}{Proposition}
\newtheorem{alemma}{Lemma}
\newtheorem{aconj}{Conjecture}
\theoremstyle{remark}
\newtheorem{rmk}{Remark}
\numberwithin{equation}{section}
\numberwithin{atheorem}{section}
\numberwithin{acorol}{section}
\numberwithin{rmk}{section}
\numberwithin{aprop}{section}
\newcommand{\correction}[1]{{{{#1}}}}
\date{\today}
\begin{document}

\title{Local sparsity and recovery of fusion frame structured signals}

\author{Roza Aceska}
\address{Department of Mathematical Sciences, Ball State University}


\author{Jean-Luc Bouchot}
\address{School of Mathematics and Statistics, Beijing Institute of Technology}

\author{Shidong Li}
\address{Department of Mathematics, San Francisco State University}


\begin{abstract}
The problem of recovering signals of high complexity from low quality sensing devices is analyzed via a combination of tools from signal processing and harmonic analysis. 
By exploiting fusion frames, we introduce a compressed sensing framework in which we split the dense information into subchannels and fuse the local estimations. 
Each piece of information is measured by linear, potentially low quality sensors, and recovered via compressed sensing. 
Finally, by a fusion process within the fusion frames, we are able to recover accurately the original signal. 

We illustrate our findings with numerical experiments, first consider various artificial setups in which we show that splitting a signal via local projections allows for accurate, stable, and robust estimation. 
We verify that by increasing the size of the fusion frame, a certain robustness to noise is also achieved. 
While the computational complexity remains relatively low, we achieve stronger recovery performance compared to usual single-device compressed sensing systems. 
We finally show how our techniques can be applied in various signal processing tasks such as Doppler signal denoising, natural scene scanning and reconstruction, and MR Image reconstruction. 
In these examples, we empirically verify that visually good reconstruction are obtained, even in highly undersampled and noisy regimes. 
\end{abstract}

\maketitle


{\bf Keywords}
Compressed sensing; Fusion frame; Sparse signal approximation

\section{Problem statement}
In a traditional sampling and reconstruction system, the sensors are designed so that the recovery of the signal(s) of interest is possible. 
For instance, when considering the sparse recovery problem, one tries to find the sparsest solution $\hat{\xbf} \in \Kbb^N$ from the noisy measurements $\ybf = \Abf \xbf +\ebf \in \Kbb^m$ and $m \ll N$. 
Here $\Kbb$ denotes the field $\Rbb$ or $\Cbb$. 
This is done by solving the mathematical program
\begin{equation} \label{eq:generalCS}
    \tag{$\lsp^0$-min}
    \hat{\xbf} := \argmin\|\zbf\|_0, \quad \text{ subject to } \|\Abf \zbf - \ybf\|_2 \leq \eta. 
\end{equation}
This problem is NP-Hard and usually only approximately solved, for instance by solving its convex relaxation, known as the Basis Pursuit denoising
\begin{equation}
    \label{eq:bpdn}
    \tag{BPDN}
    \hat{\xbf} := \argmin\|\zbf\|_1, \quad \text{ subject to } \|\Abf \zbf - \ybf\|_2 \leq \eta. 
\end{equation}
It is known that for a given \emph{complexity} (measured by the sparsity in our context) $s$ of the signal $\xbf$, the number of random subgaussian linear measurements needs to grow as $m \gtrsim s \log(N/s)$ for $\widehat{\xbf}$ to be a good enough approximation to $\xbf$. 
Said differently, if the design of a sensor can be made at will, then knowing the complexity of the signal, here characterized by the sparsity, is sufficient for a stable and robust recovery. 
This paper looks at the problem of sampling and reconstructing potentially highly non-sparse signals when the quality of the sensors is constrained. 
We emphasize in passing that throughout this paper, the sought after signal will always be considered of high-complexity. 
We use the sparsity or density as a measure of complexity, but one could consider other models. 
We investigate problems where the number of measurements $m$ cannot be chosen based on the complexity of the signals to recover. 
In the context described above, one would have a limit on the sparsity of the vectors that can be recovered by $s \lesssim m/\log(N/m)$. 
These constraints can be due to many reasons such as cost -- e.g. using 10 sensors at a coarser resolution is cheaper than one at the finest --, frequency rate -- sensors at 2000 THz might not exist for a while --, legal regulation -- e.g. in nuclear medicine where one should not expose a patient to too high radiations at once. 
Problems arise when the signals being sampled are \emph{too dense} for the usual mathematical theories. 
When one thinks about compressed sensing, the size of the sensor required is driven by a certain measure of complexity of the signals considered. 
Allowing for the recovery of signals with higher level of complexity entails the use of better sensors. 
Here we look at the problem differently:  first,  we assume  constraints on the sensor design which are fixed due to some outside reasons.  Under these assumptions, we take on the following challenge: split  the information carried by the signal   in   a clever way so that a mathematical recovery is possible. 

This paper revisits the theory of fusion frames and applies it to the dense signal recovery problem. 
We show that by using advanced mathematical techniques stemming from applied harmonic analysis, it is possible to handle very high complexity signals in an efficient and stable manner.  
Before we dig into the more technical details, we present some real-world scenarios where our framework appears useful, if not essential. 

\subsection{Examples}\label{IntroExamplesSubSection}

\subsubsection{Unavailability of high quality observation devices}
\label{sssec:sampling}
A typical time-invariant bounded linear operator is always represented by a circulant matrix $\Abf$.  So suppose $\Abf$ represents a sensing device whose number of rows $m$ is physically limited by the sampling rate (or resolution) of the device.  
Moreover, consider that the sparsity of the sampled signal $\xbf$ is substantially larger than what a single observation by $\Abf$ could handle/recover by various compressed sensing techniques.

In this context, the limitations on the sensing devices combined with the (potentially) high number of non-zeros in the signals makes it impossible for a state-of-the-art algorithm to recover the unknown signal $\xbf$. As illustrated in Figure~\ref{fig_1}, we suggest to apply $n$ such devices in parallel after prefiltering. The fused compressed sensing technique introduced later allows to resolve the problem that otherwise a single device can not! 
Such scenarios actually exist and show the necessity of the fused compressed sensing technique presented below.

 For example,   suppose an application requires a sensing device of  capacity $X$, described by a sensing matrix $\tilde \Abf$. 
In case such a device is either very expensive, or not available, we may choose to combine $n$ parallel projections   $\{P_j\}_{j=1}^{n}$ prior to measuring, and use $n$ low-quality  sensing devices of capacity $\frac{1}{n}X$, each described by the sensing matrix $\Abf$. 
The (sparse) signal $\xbf$ is then subsequently recovered by various techniques 
via each channel and, through the theory of fusion frames~\cite{Casazza2004framessubspaces,Casazza2008ff,Cahill2012nonOrthFF}, merged into a single vector.
As long as $\{P_j\}_{j=1}^{n}$ are projections - or any filtering operations - with the property that $C\Ibf\le \correction{\sum_j P_j^* P_j}\le D\Ibf $ for some $0<C\le D<\infty$, such a fusion operation is always possible.

\begin{figure}[!htb]
\centering
\includegraphics[width=3in,height=1.5in]{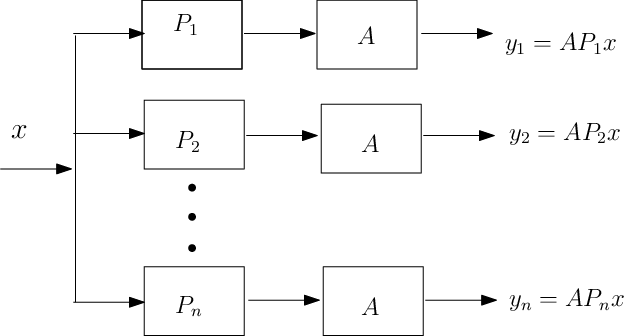}
\caption{Sparse linear array geometry}
\label{fig_1}
\end{figure}


In order for the fusion operator to be bounded away from $0$, in our work we  assume that the pre-filtering/projections $P_i$ 
 are \emph{complete}  in the sense that there won't be ``holes'' in the signal coverage for a whole class of signals; more precisely, we require that the collection of projections satisfy the fusion frame inequality \eqref{eq:ffInequality}.

Evidently, if we design such a fused compressed sensing technique, by enabling the subdivision of sparsity of $\xbf$ into individual subspaces, the sparse  recovery problem becomes a feasible one and can be resolved by multiple sensing devices with low quality or resolution, which are not only widely available but also economical.

The value of fused compressed sensing techniques presented in this paper is clearly reflected in this situation where an otherwise too expensive or impossible problem can now be resolved by using a number of lower resolution/sampling rate devices and by making a \emph{reasonable} number of observations, and processed by the fusion frame theory.

\subsubsection{SAR imaging and spatial filtering}
\label{ssec:examples:SAR}
In the Synthetic Aperture Radar (SAR) imaging process, a flying carrier (an airplane or a satellite) emits a sequence of radio waves to the field of observation (and then detects the reflections by the objects in the field). Each radio wave is sent through an antenna with a fixed aperture/size which physically presents an antenna beam (magnetic field) pattern, say $\{F_i\}$.

\begin{figure}[!htb]
\centering
\includegraphics[width=3in,height=2.5in]{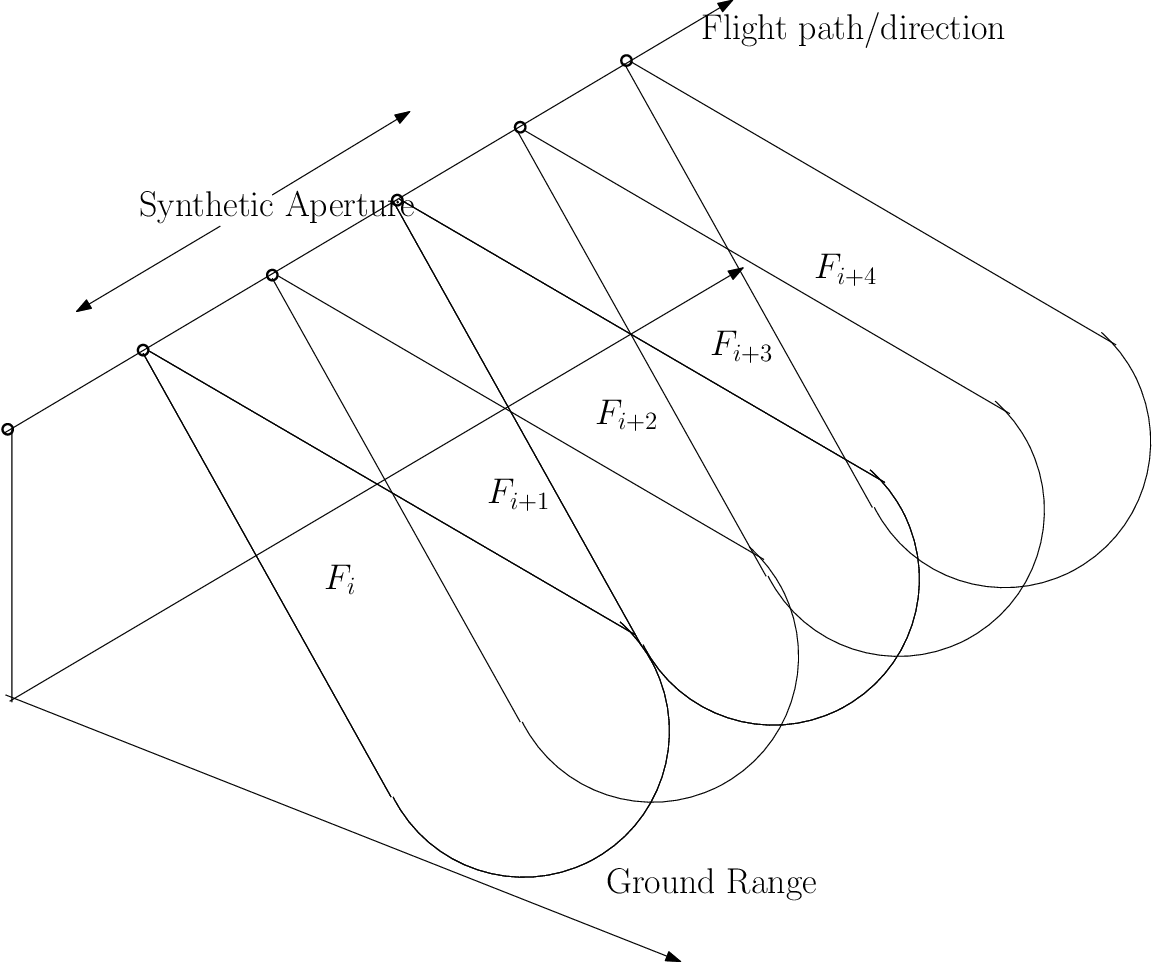}
\caption{Synthetic Aperture Imaging beam and data collection mechanism}
\label{fig_2}
\end{figure}

Consequently, as seen in Figure~\ref{fig_2}, during the $i^{th}$ data collection, the antenna beam pattern $F_i$ physically implements a spatial filtering operator.
And, naturally, between adjacent $i$'s (even among several adjacent $i$'s), the spatial antenna beams $F_i$'s have intentional overlaps.
The actual data observation, for each $i$, is modeled by $\ybf^{(i)} = \Abf F_i \xbf$, where $\xbf$ is the field image reflection coefficients that eventually form the SAR image of the field, and $\Abf$ is a fixed observation matrix determined by the SAR imaging mechanism, which is typically chirping (linearly changing frequencies in time) functions in two different dimensions \cite{Book_SARimaging_Curlander,Book_SARimaging_Franceschetti,Book_SARimaging_Cumming}.
This formulation has a natural ``distributed sparsity'' due to the spatial filtering operations inherent to the physical beams.

Here, to consider the spatial filtering effect, and to avoid the conventional ad-hoc ``alignment'' in the flight direction, the fused SAR imaging process needs to be considered \cite{Zeng2015fusedSAR}.
The new fused SAR imaging formulation is not only the should-be rigorous mathematical formulation of the SAR imaging process, but also potentially beneficial to image resolution and to the robustness of the (often) turbulent data collection process.

\subsection{Contributions}

In this  paper we combine mathematical tools from compressed sensing and fusion frame theory to break the limitations on the signal complexity induced by traditional recovery methods.

Our general approach can be described as the two following steps, or summarized in Algorithm~\ref{algo:framework}.
\begin{enumerate}
\item[1)] Estimate the local information $\widehat{\xbf^{(i)}}$, for $1 \leq i \leq n$ by any sparse recovery method, and
\item[2)] Approximate the fused solution $\widehat{\xbf} = S^{-1}\left(\sum_{i=1}^n \widehat{\xbf^{(i)}}\right)$, either directly or by virtue of the frame algorithm.
\end{enumerate}

\begin{algorithm}[H]
 \KwData{$\left(P_i\right)_{1 \leq i \leq n}$, $n$ projections such that $C\Ibf \leq \sum \correction{\|P_i\|_{2\to 2}^2} \leq D\Ibf$, an estimation of the measurement noise $\eta$}
 \KwResult{An estimation $\hat{{\bf x}}$ of the vector ${\bf x}$}
$k \leftarrow 1$\;
$\widehat{\xbf} = 0$\;
 \While{$k \leq n$}{
  Measure $k^{\text{th}}$ vector $\correction{\ybf^{(k)}} \leftarrow \Abf P_k \xbf + \ebf^{(k)}$\;
	Add local information $\widehat{\xbf} \leftarrow \widehat{\xbf} + \operatorname{argmin}\|\zbf\|_1$ s.t. $\|\Abf P_k \zbf - \ybf^{(k)}\|_2 \leq \eta$\;
	$k \leftarrow k+1$
 }
Fusion: $\widehat{\xbf} \leftarrow S^{-1}\widehat{\xbf}$
 \caption{Fused distributed sensing recovery framework}
\label{algo:framework}
\end{algorithm}

Note that the fusion operation either requires the inverse frame operator $S^{-1}$ or can be approximated using the (fusion-) frame algorithm, described at the end of Subsection~\ref{sssec:frames}.
An important aspect of the framework is that it can either be used sequentially or in a parallel manner. 
In the sequential approach, we get the local pieces of information one after the other. 
In this case, as described in Algorithm~\ref{algo:framework}, we have no need to save the local measurements and can update our guess in an online fashion. 
In the parallel approach, the local measurements are processed independently of each other, and the fusion is done by a central unit, once all the local information have been collected.

In particular, Proposition~\ref{prop:recoveryfixrank} later in the manuscript shows that the robustness is preserved independently from the number of subspaces considered, which can be phrased in simple terms as:

\begin{aprop}
Let $\Abf \in \Kbb^{m \times N}$, and let $\Wcal = (W_i,P_i)_{i=1}^n$, $n \geq 1$ be a fusion frame.
Assume the measurement vectors $\ybf^{(i)} = \Abf P_i \xbf + \ebf^{(i)}$ are corrupted by some (adversarial) independent noise uniformly bounded $\|\ebf^{(i)}\|_2 \leq \eta$. 
Then the solution $\widehat{\xbf}$ generated by Algorithm~\ref{algo:framework} satisfies
\begin{equation}
\|\xbf - \widehat{\xbf}\|_2^2 \leq K \eta^2.
\end{equation}
 \end{aprop}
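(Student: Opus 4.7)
The plan is to decouple the global reconstruction error into a local compressed sensing error in each channel and its propagation through the inverse fusion frame operator $S^{-1}$. Using the fusion frame identity one has $\xbf = S^{-1}\sum_{i=1}^n P_i^* P_i \xbf$, while the algorithm outputs $\widehat{\xbf} = S^{-1}\sum_{i=1}^n \widehat{\xbf^{(i)}}$. Since the $P_i$ are orthogonal projections, $P_i^* P_i = P_i$, so the error decomposes as
\begin{equation*}
\xbf - \widehat{\xbf} = S^{-1}\sum_{i=1}^n \left(P_i \xbf - \widehat{\xbf^{(i)}}\right).
\end{equation*}
This reduces the argument to controlling each of the $n$ individual residuals and then applying the fusion frame lower bound $\|S^{-1}\|_{2\to 2} \leq 1/C$.

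First I would bound each local residual. Each $\widehat{\xbf^{(i)}}$ is the BPDN solution associated with $\ybf^{(i)} = \Abf P_i \xbf + \ebf^{(i)}$, where $\|\ebf^{(i)}\|_2 \leq \eta$. Assuming that $\Abf P_i$ satisfies a suitable restricted isometry (or null space) condition adapted to the local sparsity of $P_i \xbf$ -- which is the very premise behind the splitting strategy of this paper -- the standard stable and robust recovery theorem for basis pursuit yields $\|\widehat{\xbf^{(i)}} - P_i \xbf\|_2 \leq D_1 \eta$, with a constant $D_1$ depending only on the restricted isometry constants of the local sensing matrices and uniform across channels.

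Plugging these estimates into the decomposition and invoking the triangle inequality gives
\begin{equation*}
\|\xbf - \widehat{\xbf}\|_2 \leq \frac{1}{C}\sum_{i=1}^n \|P_i \xbf - \widehat{\xbf^{(i)}}\|_2 \leq \frac{n D_1}{C}\eta,
\end{equation*}
so that $K = (n D_1/C)^2$ suffices. Since tight-like fusion frames typically have $C$ scaling linearly in $n$, the prefactor remains bounded as the number of subspaces grows, which is the qualitative robustness statement the proposition emphasises.

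The main obstacle is not the fusion step -- that is essentially a triangle inequality paired with the fusion frame lower bound -- but rather establishing the uniform local recovery constant $D_1$. This requires that each restricted sensor $\Abf P_i$ inherits favourable conditioning from $\Abf$ at the sparsity level actually carried by $P_i \xbf$. This is exactly what the preceding sections of the paper arrange: the projections must simultaneously cover the signal space (so that $S$ is invertible) and reduce the local complexity of $P_i\xbf$ below the single-channel compressed sensing threshold of $\Abf$.
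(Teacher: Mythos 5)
Your fusion step coincides with the paper's: write $\xbf-\widehat{\xbf}=S^{-1}\sum_{i=1}^n\bigl(P_i\xbf-\widehat{\xbf^{(i)}}\bigr)$ and control the fusion through the lower frame bound $C$. The divergence is in the local step and in the resulting constant. The precise version of this statement, Proposition~\ref{prop:recoveryfixrank}, uses no BPDN and no restricted isometry or null space hypothesis: the projections have rank $r\le m$ and every $r$-column submatrix $\Abf_{\Omega_i}$ is assumed to have full rank, so each local problem is an \emph{overdetermined} least-squares problem whose solution is exactly $\widehat{\xbf^{(i)}}=P_i\xbf+\Abf_{\Omega_i}^{+}\ebf^{(i)}$; the local error is pure noise amplification by $\|\Abf_{\Omega_i}^{+}\|_{2\to2}$, the paper explicitly remarks that no sparsity assumption is made, and the ``with high probability'' clause only guarantees that the random supports cover $\{1,\dots,N\}$, i.e.\ that $(W_i)$ is indeed a fusion frame. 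Your argument instead assumes local sparsity of $P_i\xbf$ together with a RIP-type condition on $\Abf P_i$ and invokes the BPDN stability theorem; this is internally consistent and is essentially the mechanism of the RIP-based result (Theorem~\ref{thm:fCS_RIPresults}), but it proves a variant under hypotheses that are neither in the statement nor needed for the paper's route, so your constant $D_1$ remains conditional. Secondly, the constants differ: the paper's bound is $\|\xbf-\widehat{\xbf}\|_2^2\le \frac1C\sum_{i}\|\Abf_{\Omega_i}^{+}\ebf^{(i)}\|_2^2$, which is the announced $K=1/C$, whereas the plain triangle inequality plus $\|S^{-1}\|_{2\to2}\le 1/C$ gives only $(nD_1/C)^2$. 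The sharper form uses that each residual $P_i\xbf-\widehat{\xbf^{(i)}}$ lies in $W_i$, so one may bound $\bigl\|S^{-1}\sum_i \vbf^{(i)}\bigr\|_2^2\le \frac1C\sum_i\|\vbf^{(i)}\|_2^2$ via the synthesis operator (indeed $\|S^{-1}T^*\|_{2\to2}\le C^{-1/2}$, or a per-coordinate Cauchy--Schwarz argument in the coordinate-projection case), rather than the operator norm of $S^{-1}$ alone. Under the paper's Conjecture~\ref{conj:expectedLowerFramebound}, $C\gtrsim c'n$, so both bounds are independent of $n$ and you do capture the qualitative robustness message; but the subspace-aware estimate is what justifies the precise constant $K=1/C$ the paper advertises.
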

In the more precise statement (Proposition~\ref{prop:recoveryfixrank}), we specify precisely the value of the constant $K = 1/C$ where $C$ denotes the lower frame bound. 
As a special case of Theorem~\ref{thm:fCS_RIPresults}, we proved the following
\begin{atheorem}
Let $\Wcal = (W_i,P_i)_{i=1}^n$ be a fusion frame system for $\Kbb^N$ with frame operator $S$ and frame bounds $0 < C \leq D < \infty$. 
Let $\Abf \in \Kbb^{m \times N}$ be a matrix satisfying some Partial RIP condition (see Definition~\ref{def:p-rip}) . 
Any $s$ distributed-sparse vector $\xbf$ whose sparsity is uniformly distributed among the $n$ subspaces (i.e. $s_i = s/n$ is the sparsity of $P_i\xbf$) can be recovered as 
\begin{equation*}
    \widehat{\xbf} = S^{-1}\left(\sum_{i=1}^n\widehat{\xbf^{(i)}}\right),
\end{equation*}
where the $\widehat{\xbf^{(i)}}$ are obtained as solutions to the $n$ local sparse recovery problems
\begin{equation*}
    \widehat{\xbf^{(i)}} := \argmin\|\zbf\|_0, \quad \text{ subject to } \|\Abf P_i \zbf - \ybf^{(i)}\|_2 \leq \eta_i.
\end{equation*}
Moreover, assuming the errors to be uniformly bounded by $\eta$, and that the RIP constants $\delta_1 = \delta_2 = \cdots = \delta_n = \delta < 4/\sqrt{41}$ are the same in all the subspaces, the solution approximate the true vector $\xbf$ in the following sense:
\begin{equation*}
 \|\widehat{\xbf} - \xbf\|_2 \leq \frac{n}{C}\Ccal \eta
\end{equation*}
where $\eta := \max_{i}\eta_i$ and $\Ccal \leq \frac{960\sqrt{2}}{\left( 16-41\delta^2 \right)^2}$ is a constant depending only on $\delta$. 
\end{atheorem}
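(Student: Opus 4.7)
The plan is to decompose the problem into a per-subspace local recovery step and a global fusion step. The local step is handled by invoking a single-device stability guarantee on each of the $n$ channels, after which the fusion step follows cleanly from the fusion frame inequality together with a triangle inequality.

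For the local step, observe that the $i$-th measurement reads $\mathbf{y}^{(i)} = \mathbf{A}(P_i\mathbf{x}) + \mathbf{e}^{(i)}$, so the $i$-th local problem is a standard compressed sensing problem whose unknown $P_i\mathbf{x}$ is $s/n$-sparse by hypothesis. Provided the Partial RIP constant at sparsity level $2s/n$ satisfies $\delta < 4/\sqrt{41}$, the underlying Cai--Zhang-type stability bound, invoked as a black box from Theorem~\ref{thm:fCS_RIPresults} (or from its source in the Partial RIP literature), yields
\[
\|\widehat{\mathbf{x}^{(i)}} - P_i\mathbf{x}\|_2 \;\leq\; \mathcal{C}\,\eta_i, \qquad \mathcal{C} \;\leq\; \frac{960\sqrt{2}}{(16-41\delta^2)^2}.
\]
This step is applied independently to each channel, and is the only place where the RIP threshold $4/\sqrt{41}$ enters the argument.

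For the fusion step, recall that the fusion frame operator acts as $S\mathbf{x} = \sum_{i=1}^n P_i^* P_i \mathbf{x} = \sum_{i=1}^n P_i\mathbf{x}$, the $P_i$ being orthogonal projections. Consequently,
\[
\widehat{\mathbf{x}} - \mathbf{x} \;=\; S^{-1}\Bigl(\sum_{i=1}^n \widehat{\mathbf{x}^{(i)}}\Bigr) - S^{-1}(S\mathbf{x}) \;=\; S^{-1}\sum_{i=1}^n \bigl(\widehat{\mathbf{x}^{(i)}} - P_i\mathbf{x}\bigr).
\]
The fusion frame inequality $C\mathbf{I} \leq S \leq D\mathbf{I}$ implies $\|S^{-1}\|_{2\to 2} \leq 1/C$; combining this with the triangle inequality over the $n$ channels and the local bound above gives
\[
\|\widehat{\mathbf{x}} - \mathbf{x}\|_2 \;\leq\; \frac{1}{C}\sum_{i=1}^n \|\widehat{\mathbf{x}^{(i)}} - P_i\mathbf{x}\|_2 \;\leq\; \frac{n}{C}\,\mathcal{C}\,\eta,
\]
which is precisely the announced bound.

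The hard part is not the fusion algebra, which is essentially linear algebra once the fusion frame inequality is invoked, but rather securing the quantitative local recovery bound with its explicit Cai--Zhang constant. Establishing that Partial RIP at level $2s/n$ with $\delta < 4/\sqrt{41}$ implies stability with constant $\frac{960\sqrt{2}}{(16-41\delta^2)^2}$ is the real technical burden; this is expected to be handled by the more general Theorem~\ref{thm:fCS_RIPresults}. Once that ingredient is available, the present statement follows as the immediate specialization to the uniform regime $s_i = s/n$, $\eta_i \leq \eta$, $\delta_i = \delta$, together with the straightforward summation performed above.
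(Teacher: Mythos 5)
Your proposal is correct and follows essentially the same route as the paper: per-channel stability under the RIP-type condition $\delta < 4/\sqrt{41}$, then fusion via $\widehat{\xbf}-\xbf = S^{-1}\sum_i(\widehat{\xbf^{(i)}}-P_i\xbf)$ with $\|S^{-1}\|_{2\to 2}\leq 1/C$ and a triangle inequality over the $n$ channels. The only difference is that you invoke the local bound as a black box, whereas the paper derives it through its partial null space property machinery (Theorem~\ref{secondthm} converting P-RIP into the $\lsp^2$-RP-NSP and Theorem~\ref{thm:rdnspRecoveryl2NSP}, together with Lemma~\ref{lemma:suppSolution} guaranteeing the local minimizers lie in $W_i$), which is consistent since the statement is presented precisely as a special case of Theorem~\ref{thm:fCS_RIPresults}.
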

Note that under the usual compressed sensing assumptions, this allows to have a number of measurements per sampled vector that decay linearly with the number of sensors. 

The paper is articulated as follows.
We review 
basics from compressed sensing and fusion frames in Section~\ref{GeneralToolssection}. 
In Section~\ref{motivatingExamplesSection} we explore several theoretical examples to motivate our research,  and to emphasize the importance of our findings. 
Our claims are empirically verified by numerical results.
In the flair of traditional compressed sensing, we extend the standard compressed sensing results 
to the case of recovery where we make explicit use of local redundancy in the fusion frame decomposition in Section~\ref{ShidongSection}. 
Finally, Section~\ref{sec:extensions} derives a mathematical theory allowing to work with fusion frames where sparsity is exploited along a subspace decomposition. 
These findings extend the traditional compressed sensing problem~\cite{foucart2013book} and show some similarities with recent developments in parallel acquisition~\cite{adcock2016CSparallel} and structured sensing~\cite{boyer2015compressed}.

\section{General tools and models}\label{GeneralToolssection}

\subsection{(Traditional) Compressed sensing}
  Compressed sensing  (CS) relies (see \cite{foucart2013book, fornas2011} and references therein) on the inherent sparsity of natural signals $\xbf$ for their recovery from seemingly few measurements $\ybf = \Abf\xbf + \ebf$ given some linear measurement matrix $\Abf \in \Kbb^{m \times N}$, with $m \ll N$. Here the vector $\ebf \in \Kbb^m$ contains the noise and is usually assumed to be bounded, $\|\ebf\|_2 \leq \eta$. With the sparsity assumption, CS aims at finding (approximate) solutions to the~\eqref{eq:generalCS} problem. 
General recovery guarantees ensure that the recovery is stable and robust, that is, the solution satisfies the following approximation bounds
\begin{align}
\|\widehat{\xbf} - \xbf\|_2 \leq \frac{C}{\sqrt{s}}\sigma_s(\xbf)_1+D\eta, \\
\|\widehat{\xbf} - \xbf\|_1 \leq C\sigma_s(\xbf)_1+D\sqrt{s}\eta,
\end{align}

where $\sigma_s(\xbf)_1 := \correction{\inf}_{\zbf: \|\zbf\|_0 \leq s}\|\xbf - \zbf\|_1$ defines the best $s$-term approximation of $\xbf$ in the $\lsp^1$ norm.

Plethora of conditions (some of which we look closer in Section~\ref{sec:extensions}) on $\Abf$ have been derived to ensure that the previous (or similar) estimates hold. These are based on restricted isometry constants ($\delta_{2s} < \sqrt{2}-1$~\cite{candes2008rip} or $\delta_{2s} < 4/\sqrt{41}$~\cite{foucart2013book}, $\delta_s < 1/3$), null space properties ($\Abf$ fulfills the robust and stable $\operatorname{NSP}(s,\rho,\tau)$ if $\|\vbf_S\|_1 \leq \rho \|\vbf_{\overline{S}}\|_1 + \tau \|\Abf \vbf\|_2$ for any vector $\vbf \in \Kbb^N$, and $S$ any index set with $|S| \leq s$), coherence ($\mu_1(s)+\mu_1(s-1) < 1$), quotient property, and so on.
Similar conditions and bounds can be found when greedy and thresholding algorithms are used to approximate~\eqref{eq:generalCS} (see~\cite{blumensath2009iht,foucart2011htp,bouchot2013ghtp,bouchot2014generalized,zhang2011omprip,needell2009cosamp} for some relatively recent results in this direction).

However, to ensure such recovery bounds, there is still a need for building adequate sensing matrices $\Abf$. 
So far, no deterministic matrices $\Abf$ can be built with reasonable numbers of rows (i.e. with decent, small enough, number of measurements) and we have to rely on randomness to build measurement matrices. 
It has been shown, for instance, that matrices with independent random subgaussian entries fulfill the $\operatorname{RIP}(s,\delta)$ provided that the number of measurements scales as $m \asymp \delta^{-2} s \log(N/s)$~\cite{baraniuk2008RIP}.
Similar results have been obtained (with different $\log$ factors) for structured random matrices or matrices from bounded orthonormal systems, see~\cite{rauhut2010structMat} for instance.

\subsection{Frames, fusion frames, and distributed signal processing}
\label{sssec:frames}
 By definition, a sequence $\Fcal = \{f_i\}_{i \in I}$ in a Hilbert space $\Hcal$ is a frame \cite{christensen2003framebook} for $\Hcal$ if there exist $0< A \leq B < \infty$ (lower and upper frame bounds) such that
  \begin{equation}
  \label{frameineq}
  A\Vert f \Vert^2 \leq \sum_{i \in I} \vert \langle f, f_i \rangle \vert^2 \leq B\Vert f \Vert^2 \; \text{for all} \;  f \in \Hcal.
  \end{equation}
 The representation space associated with $\Fcal$ is $\lsp^2(I)$ and its analysis and synthesis operators are respectively given by
 \[T(f) = \{\langle f, f_i \rangle \}_{i \in I}   \; \text{and} \;   T^*(\{c_i\}_{i \in I}) = \sum_{i \in I} c_i f_i,   \]  {for all} $f \in \Hcal$ and  $\{c_i\}_{i \in I} \in \lsp^2(I)$. It shows that the frame operator $S := T^*T$ is   a positive, self-adjoint and invertible operator; this means that recovery of any $f \in \Hcal$ is possible,  if $Sf$ is known; however, computing $S^{-1}$ can be computationally challenging.
Luckily, each frame $\Fcal$ is accompanied by at least  one so-called dual frame $  \Gcal= \{g_i\}_{i \in I}$,  which satisfies 
 \begin{equation}
   \label{framerepr}
    f = \sum_{i \in I}  \langle f, f_i \rangle  g_i = \sum_{i \in I}  \langle f, g_i \rangle  f_i \; \text{for all} \;  f \in \Hcal, 
   \end{equation} and ensures recovery of the function $f$. Whenever a frame is $A$-tight ($A=B$), the problem of function reconstruction is simplified, since the frame operator $S$ in this case is  a scalar multiple of the identity operator.

 {\bf {Fusion frames}}
have been  initially created \cite{Casazza2004framessubspaces} to model the setting of a wireless sensor network. Sensor networks are composed of wireless sensors with constraints in their processing power and transmission bandwidth, which reduces the costs but also affects the precision of the system. The sensors are distributed over a significantly large area of interest, to measure, for instance, pollution,  temperature, sound,   pressure, motion etc. The network is redundant, i.e., there is no orthogonality among sensors, so each sensor can be interpreted  as a frame element.     In addition, a large sensor network is split into redundant sub-networks; the local measurements within each sub-network are sent to a local sub-station, which submits  the gathered information further  to a central processing station for final reconstruction.

Every (local) sensor is represented by a single frame vector; that is, each sub-network is related to a frame for a subspace in a Hilbert space. 
The subspaces have to satisfy a certain overlapping property, which ensures that the overlaps are not too large.   
The reconstruction in such a system is done in two steps: first, within each subspace the conventional frame reconstruction is employed; then, the local pieces of information serve as the inputs for the fusion frame reconstruction, which reconstructs the initial signal completely.
  \begin{adef}[Fusion frames]
	\label{def:fusionframes}
  Given an  index set $I$, let $\Wcal: = \{W_i \, | \, i \in I \}$ be a family of closed subspaces in $\Hcal$. We denote the orthogonal projections\footnote{ an operator $P$ is an orthogonal projection if $P^2 = P$ and  $P^* = P $ } onto $W_i$ by $P_i$. Then  $\Wcal$ is a fusion frame, if there exist  $C, D >0$ such that
 \begin{equation}
\label{eq:ffInequality}
 C \Vert f \Vert^2 \leq  \sum_{i \in I}    \Vert P_i(f)  \Vert^2  \leq  D\Vert f \Vert^2  \; \text{for all} \; f \in \Hcal. 
\end{equation}
 \end{adef}
\begin{rmk}
	Fusion frames are often accompanied by  respective weights.
In the weighted case, the frame condition reads $C \Vert f \Vert^2 \leq  \sum_{i \in I}  v_i^2  \Vert P_i(f)  \Vert^2  \leq  D\Vert f \Vert^2  \; \text{for all} \; f \in \Hcal$ and some positive weights $(v_i)_{i \in I}$. 
	All (unweighted) results derived below apply mutatis mutandis to the weighted case.
\end{rmk}
	
 Given  a fusion frame  $\Wcal$  for a Hilbert space  $\Hcal$,   let  $\Fcal_{i}:=\{f_{ij} \, | \, j \in J_i \}$  be a frame for $W_i$, $i \in I$. Then   $\{ \(W_i,  \Fcal_i \) | i \in I \}$ is a fusion frames system for $\Hcal$. 
In fact fusion frames and frames are not unrelated. 
 It is known~\cite{Casazza2004framessubspaces} that the following statements are equivalent. 
%
 \begin{itemize}
 \item $\cup_{i \in I} \{f_{ij} \, | \, j \in J_i \}$ is a frame for $\Hcal$.
 \item $\{ W_i\}_{i \in I}$ is a fusion  frame for $\Hcal$.
 \end{itemize} 
In particular, if $\{ \( W_i, \Fcal_i \) \}_{i \in I}$ is a fusion  frame system for $\Hcal$ with frame bounds $C$ and $D$, then $\cup_{i \in I} \{f_{ij} \, | \, j \in J_i \}$ is a frame for $\Hcal$ with frame bounds $AC$ and $BD$.
 In the fusion frame theory, an input signal is represented by a collection of vector coefficients that represent the projection onto each subspace. 
The representation space used in this setting is
 \[  \( \sum_{i \in I} \oplus W_i \)_{\lsp^2} =
 \{   { \{ f_i \}_{i \in I} \, | \, f_i \in W_i } \} \]
 with $ \{\Vert f_i \Vert \} _{i \in I} \in \lsp^2(I)$.

  The analysis operator $T$ is then  defined by
 $$T(f) := \{ P_i(f)\}_{i \in I} \; \text{for all} \; f \in \Hcal, $$ while its adjoint operator is the synthesis operator $T^*: \( \sum_{i \in I} \oplus W_i \)_{\lsp^2} \to \Hcal$, defined by
 $$T^* (f) = \sum_{i} f_i, \; \text{where} \;  f = \{f_i\}_{i \in I} \in  \( \sum_{i \in I} \oplus W_i \)_{\lsp^2}. $$
 The   fusion frame operator $S = T^* T$   is given by
  $$S(f) =   \sum_{i \in I} P_i(f).$$ It is easy to verify that $S$ is a positive and invertible operator on $\Hcal$. In  particular, it holds $C \Ibf \leq S \leq D \Ibf$, with $\Ibf$ denoting the identity.

  If the dual frames $\Gcal_i$, $i \in I$, for each local frames are known then the fusion frame operator can be expressed in terms of the local (dual) frames \cite{Casazza2008ff}:
  \[ S = \sum_{i \in I} T^*_{\Gcal_i} T_{\Fcal_i} = \sum_{i \in I} T^*_{  \Fcal_i} T_{\Gcal_i}.\]

 For computational needs,
we may only consider the fusion frame operator in finite frame settings, where the fusion frame operator  becomes a sum of   matrices of each subspace frame operator. The evaluation of the fusion frame operator $S$  and its inverse $S^{-1}$ in finite frame settings are conveniently  straightforward.  By $F_i$ we denote  the frame matrices formed by the frame vectors from $\Fcal_i$, in a column-by-column format. Let $G_i$ be defined in the same way from the dual frame $\{ g_{ij}\}_{j \in J_i}$. Then the fusion frame operator is
$$  S(f) =  \sum_{i \in I} F_iG_i ^T f=  \sum_{i \in I}  G_iF_i ^T f. $$
 Hence a distributed fusion processing is feasible in an elegant way, since the reconstruction formula for all $f \in \Hcal$ is
 \[
  f = \sum_{i \in I}  S^{-1}  P_i(f). \]

 The standard distributed fusion procedure uses the local projections of each subspace. In this procedure, the local reconstruction takes place first in each subspace $W_i$, and the inverse fusion frame is applied to each local reconstruction and combined together:
 \begin{equation}
 \label{firstapproach}
 f = \sum_{i \in I}  S^{-1} P_i(f) =  \sum_{i \in I} S^{-1} \(\sum_{j \in J_i} \langle f,f_{ij} \rangle g_{ij} \) \; \text{for all} \;  f \in \Hcal.
\end{equation}
 Alternatively, one may use a reconstruction procedure acting globally, which is possible if the coefficients of signal/function decompositions are available:
\begin{equation}
 \label{secondapproach}  f =  \sum_{i \in I} \sum_{j \in J_i} \langle f,f_{ij} \rangle  (S^{-1}  g_{ij}) \; \text{for all} \;  f \in \Hcal.
 \end{equation}   The difference in   procedure \eqref{secondapproach}, compared with a global frame reconstruction lies in the fact that the (global) dual frame $\{ S^{-1}  g_{ij}\}$ is first calculated at the local level, and then fused into the global dual frame by applying the inverse fusion frame operator. This potentially makes the evaluation of (global) duals much more efficient.

As stated above, computing the inverse (fusion) frame operator is often a challenging task. 
Instead, one can approximate the solution $\widehat{\xbf}$ by employing the so-called frame algorithm. 
All we need to start the iterative algorithm is $S\widehat{\xbf}$, which we already have. We recall the relevant result   below for completeness. 

\begin{aprop}\label{CasLiIterative}\cite{Casazza2008ff}
       Let $(W_i)_{i\in I}$ be a fusion frame in $\Kbb^N$, with fusion frame operator $S=S_W$, 
and fusion frame bounds $C, D$. Further, let $\xbf \in  \Kbb^N$, and define the sequence $(\xbf_k)$
by
$\xbf_0 = 0$ and $\xbf_k = \xbf_{k-1} + \frac{2}{C+D}S(\xbf - \xbf_{k-1})$,  $k \geq  1$. Then we have $\xbf = \lim_{k\rightarrow \infty} \xbf_k$, with the
error estimate $\Vert \xbf - \xbf_{k}  \Vert \leq \left( \frac{D-C}{D+C}\right)^k \Vert \xbf  \Vert.  $
\end{aprop}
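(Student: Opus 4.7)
\medskip

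The plan is to analyze the error vectors $\ebf_k := \xbf - \xbf_k$ and show that they are produced by iterating a contraction. Subtracting the recursion for $\xbf_k$ from $\xbf$ gives
\begin{equation*}
\ebf_k = \xbf - \xbf_{k-1} - \frac{2}{C+D}S(\xbf - \xbf_{k-1}) = \left(\Ibf - \frac{2}{C+D}S\right)\ebf_{k-1}.
\end{equation*}
Setting $T := \Ibf - \frac{2}{C+D}S$ and unfolding the recursion, together with the initial condition $\xbf_0 = 0$ (so that $\ebf_0 = \xbf$), we obtain $\ebf_k = T^k \xbf$.

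The main step is then to control the operator norm of $T$. Since $S$ is self-adjoint and satisfies the operator inequality $C\Ibf \leq S \leq D\Ibf$ guaranteed by the fusion frame property \eqref{eq:ffInequality}, the operator $T$ is also self-adjoint and its spectrum lies in
\begin{equation*}
\left[1 - \frac{2D}{C+D},\; 1 - \frac{2C}{C+D}\right] = \left[-\frac{D-C}{D+C},\; \frac{D-C}{D+C}\right].
\end{equation*}
Consequently $\|T\|_{\mathrm{op}} \leq \frac{D-C}{D+C} < 1$.

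Combining the two observations, $\|\xbf - \xbf_k\| = \|T^k \xbf\| \leq \|T\|_{\mathrm{op}}^k \|\xbf\| \leq \left(\frac{D-C}{D+C}\right)^k \|\xbf\|$, which yields both convergence $\xbf_k \to \xbf$ and the stated geometric error estimate. The only genuinely non-routine ingredient is the spectral bound on $T$, but this is immediate from the fusion frame inequality rewritten as an operator inequality on $S$; there is no real obstacle beyond carefully tracking the endpoints of the spectrum to see that the symmetric choice of step size $\frac{2}{C+D}$ minimizes $\|T\|_{\mathrm{op}}$ and produces the tight ratio $\frac{D-C}{D+C}$.
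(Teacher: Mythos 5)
Your proof is correct: the error recursion $\xbf-\xbf_k=\bigl(\Ibf-\tfrac{2}{C+D}S\bigr)(\xbf-\xbf_{k-1})$ together with the operator inequality $C\Ibf\le S\le D\Ibf$ (so that the self-adjoint iteration operator has norm at most $\tfrac{D-C}{D+C}<1$) is exactly the standard argument for the frame algorithm. The paper itself states this proposition without proof, citing \cite{Casazza2008ff}, and your argument is essentially the proof given in that reference, so there is nothing to flag.
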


Concretely, using the fusion frame reconstruction, the updates read
\begin{equation*}
\xbf_k = \xbf_{k-1} + \frac{2}{C+D}\sum_{i=1}^n\xbf^{(i)} - \frac{2}{C+D}\sum_{i=1}^nP_i \xbf_{k-1}.
\end{equation*}
The middle term is computed once and for all. 
The following updates then only require some basic matrix-vector multiplications. 
Note that starting the algorithm with $\widehat{\xbf}_0 = 0$ yields $\widehat{\xbf}_1 = \frac{2}{C+D}S\widehat{\xbf}$.

Similarly, in Section~\ref{ShidongSection}, we have  $\widehat{\xbf}= S^{-1}\left(\sum_j \widehat{\xbf^{(j)}}\right)$, and  $\widehat{f}=D \widehat{\xbf}$. 
We can find $\widehat{\xbf}$ via the iterative approach of Proposition~\ref{CasLiIterative}, by starting the algorithm with $\widehat{\xbf_0 }= 0$, $\widehat{\xbf_1} = \frac{2}{C+D}S(\widehat{\xbf})=\frac{2}{C+D}\sum_j S \widehat{\xbf^{(j)}}$, and so on. 
The $n-$th term approximation of $\widehat{f}$ is then estimated with $\widehat{f}_n=D \widehat{\xbf}_n$.

\subsection{Signal recovery in fusion frames}
This work describes an approach for sensing and reconstructing signals in a fusion frame structure. 
As presented above, given some local information $\xbf^{(i)} := P_{i}(\xbf)$, for $1 \leq i \leq n$, a vector can easily be reconstructed by applying the inverse fusion frame operator
\begin{equation}
\label{eq:reconstructionFormula}
\xbf := S^{-1}S(\xbf) = S^{-1}\sum_{i=1}^n P_{i}(\xbf) = S^{-1}\(\sum_{i = 1}^n \xbf^{(i)}\), \text{ where $\xbf^{(i)} := P_{i}(\xbf)$, for $1 \leq i \leq n$}.
\end{equation}
Throughout the work, we assume that the projected vectors are sampled independently from one another, with $n$ devices modeled by the same sensing matrix $\Abf$. 
Formally, the problem is as follows \emph{From the measurements $\ybf^{(i)} = \Abf \xbf^{(i)} + \ebf^{(i)}$, reconstruct an estimation $\widehat{\xbf^{(i)}}$ of the local information to compute the approximation $\widehat{\xbf}$ of the signal $\xbf$}.

From this point on, there are two ways of thinking about the problem. 
In a first scenario, the signals are measured in the subspace and the recovery of the $\xbf^{(i)}$ are done locally. 
In other words, it accounts for solving $n$~\eqref{eq:generalCS} problems (or their approximations via~\eqref{eq:bpdn} for instance) in the subspaces, and then transmitting the estimated local signals to a central unit taking care of the fusion via Equation~\eqref{eq:reconstructionFormula}.
The other approach consists of transmitting the local observations $\ybf^{(i)} = \Abf P_{i}\xbf$ to a central processing station which takes care of the whole reconstruction process. 
In this case, the vector $\xbf$ can be recovered by solving a unique \eqref{eq:generalCS} problem directly with, letting $\Ibf_n$ denote the $n$ dimensional identity matrix,
\begin{equation}
\label{eq:centralReconstruction}
\ybf = \( \begin{array}{c} \ybf^{(1)} \\ \vdots \\ \ybf^{(n)} \end{array}\) = \Ibf_n \otimes \Abf \left[ \begin{array}{c} P_{1} \\ \vdots \\ P_{n} \end{array} \right] \xbf
\end{equation}
While the latter case is interesting on its own (see for instance~\cite{adcock2016CSparallel}), we investigate here some results for the first case. 
Our results can be investigated and generalized further, integrating ideas  where the measurements matrix (here, the sensors) vary locally, as is the case in~\cite{adcock2016CSparallel} or driven with some structured acquisition (see for instance~\cite{boyer2015compressed}).

We would like to put our work in context. 
This paper is not the first one to describe the use of fusion frames in sparse signal recovery. 
However, it is inherently different from previous works in~\cite{boufounos2011csfs,ayaz2014ff}. 
In~\cite{boufounos2011csfs} the authors provide a framework for recovering sparse fusion frame coefficients. 
In other words, given a fusion frame system $\{ \(W_i, P_i \) \}_{i=1}^n$ a vector is represented on this fusion frame as a set of $n$ vectors $(\xbf^{(i)})_{i=1}^n$ where each of the $\xbf^{(i)}$ corresponds to the coefficient vector in subspace $W_i$.
The main idea of the authors is that the original signal may only lie in few of the $n$ subspaces, implying that most of the $\xbf^{(i)}$ should be $0$. To rephrase the problem, we can say that the vector has to lie in a sparse subset of the original fusion frame.
While~\cite{boufounos2011csfs} is concerned with some recovery guarantees under (fusion-) RIP and average case analysis, the paper~\cite{ayaz2014ff} gives uniform results for subgaussian measurements and derive results on the minimum number of vector measurements required for robust and stable recovery.

We, on the other hand, exploit structures (or sparsity) locally\footnote{This sparsity assumption is needed only when a given frame component has a large dimension. We hope that in most of the cases we can control this dimension to be small enough avoiding the necessity of sparse assumptions. See the following section for more details.}. 
We do not ask that only a few of the subspaces be active for a given signal, but that the signal only has a few active components per subspace. 
This justifies the use of the local properties of fusion frame systems.

\section{Examples and applications}
\label{motivatingExamplesSection}
This section introduces some theoretical examples where the application of our fusion frame-based recovery shows increased performance over traditional recovery techniques. 
In particular, we justify in Proposition~\ref{prop:recoveryfixrank} that the fusion frame recovery may provide better robustness against noise.

\subsection{Orthogonal canonical projections}

Consider the problem of recovering $\xbf \in \Kbb^N$ from the measurements:
\begin{equation*}
\ybf^{(i)} = \Abf_i \xbf + \ebf^{(i)}, \quad 1 \leq i \leq n,
\end{equation*}
where $\Abf_i$'s are defined as $\Abf_i = \Abf P_i$ and $P_i$ is the orthogonal projection onto $\Omega_i$, with $\Omega_i \subseteq \{1,\cdots, N\}$.
We assume that the noise vectors $\ebf^{(i)}$ are uncorrelated and independent, with $\|\ebf^{(i)}\|_2 \leq \eta_i$. 
$\Abf$ corresponds to a matrix of linear measurements in $\Kbb^{m \times N}$.
Note that this framework is an example of recovery from Multiple Measurement Vectors (\emph{MMV}) (see~\cite{duarte2005distributed,gribonval2008atoms,rauhut2008csRdeundant} and references therein) where the signal is the same in every measurements,
and the matrices are different for each set of measurements. 
We consider the problem of recovering local vectors $\xbf^{(i)}$, for $1 \leq i \leq n$, where the $\xbf^{(i)}$ are defined as $\xbf^{(i)} := P_i\xbf$.
In this case, we have that the sparsity of each local signal is at most $N_i := \operatorname{rank}(P_i) = |\Omega_i|$.
Once the local pieces of information $\widehat{\xbf^{(i)}}$ are recovered, the original signal $\xbf$ can be estimated from a \emph{fusion frame like} reconstruction:
\begin{equation}
 \label{eq:ffreconstructionfromProj}
 \widehat{\xbf} = S^{-1}\( \sum_{i=1}^n \widehat{\xbf^{(i)}} \) = \sum_{i=1}^n S^{-1}\widehat{\xbf^{(i)}},
\end{equation}
where the fusion frame operator $S$ is defined in the usual way as $S: \Kbb^N \to \Kbb^N$, $S(\xbf) = \sum_{i = 1}^n P_i(\xbf)$.
The problem is to recover the unknown vector $\xbf$ from the measurements $\{\ybf^{(i)}\}_{i=1}^{n}$ by solving local problems.
Clearly, a necessary condition for uniform recovery is that we have $\bigcup_{i=1}^n \Omega_i = \{1,\cdots,N\}$ in a deterministic setting, or with high probability in a probabilistic setting.

The main idea of our approach is to solve the (very) high-dimensional, high-complexity, and demanding problem $\ybf = A \xbf + \ebf$ by combining results obtained from $n$  problems that are much easier to solve.
We investigate first the case where the orthogonal projections do not overlap, i.e. $\Omega_i \cap \Omega_j = \varnothing$, for any $i,j \in \{1,\cdots,n\}$ with $i \neq j$ and $\bigcup_{i=1}^n \Omega_i = \{1,\cdots, N\}$. 
The results developed are reminiscent of some work on partial sparse recovery, when part of the support is already known~\cite{bandeira2013partialNSP}.
In a second step we analyze the use of random and deterministic projections that may overlap for the recovery of the signal $\xbf$.

\begin{rmk}
\label{rmk:nbOfVariables}
It is important to pause here and understand the problem above. 
Knowing the support of the projections, one could work on the local vectors $\xbf^{(i)}$ of smaller sizes $N_i < N$. 
Doing this would allow faster computations of the solutions but also has the implication that the sparsity (per subspace) has to remain small. 
On the other hand, considering large vectors with many zero components allows to keep a sparsity rather large (compared to the subspace dimension). 
We choose to investigate mainly the latter case as we try to break the limitation on the complexity for a given sensing matrix.
\end{rmk}

\subsubsection{Decomposition via direct sums}
\label{ssec:direcsums}
It is clear that the dimension $N$ of the ambient space will play an important role. Indeed, if we consider no overlap between the projections, we have two drastically different extreme cases.
These scenarios are important as they shed light on the main ideas behind our recovery results.

\underline{Case $n=N$} This case is characterized by (up to a permutation of the indices) $\Omega_i = \{i\}$ for all $1 \leq i \leq N$.
In other words, each (set of) measurements $\ybf^{(i)} = \Abf P_i\xbf + \ebf^{(i)}$ gives information about a single entry $x_i$ of the input vector $\xbf$ via an overdetermined linear system.
In this case, assuming $\Abf$ has full rank, we can compute an estimate $\widehat{x_i}$ of the entry $x_i$ as the solution to the $\lsp^2$ minimization problem:
\begin{equation}
 \label{eq:l2minNproj}
 \widehat{x_i} := \argmin_{x \in \Rbb} \|\ybf^{(i)} - x \abf_i\|_2^2 = \frac{\abf_i^T\ybf^{(i)}}{\abf_i^T\abf_i}
\end{equation}
where $\abf_i$ denotes the $i^{th}$ column of the matrix $\Abf$.
The $N$ independent $\lsp^2$ minimizations ensure that the final solution $\widehat{\xbf}$ satisfies the bound
\begin{equation}
 \label{eq:l2bound}
 \|\xbf - \widehat{\xbf}\|_2 = \sqrt{\sum_{i=1}^N\frac{|\langle \abf_i, \ebf^{(i)}\rangle|^2}{\|\abf_i\|_2^4}} \leq \sqrt{\sum_{i=1}^N \frac{ \|\ebf^{(i)}\|_2^2 }{\|\abf_i\|_2^2} } \leq \sum_{i=1}^N \frac{\|\ebf^{(i)}\|_2}{\|\abf_i\|_2}.
\end{equation}
In particular, in the noiseless scenario ($\ebf^{(i)} =0$, for all $i$), the recovery is exact. In terms of (normalized) matrices from Gaussian measurements, which ensures that $\Abf$ has full rank -- even numerically -- with high-probability~\cite{chen2005condition,rudelson2009smallestSingValue}, it holds $\Ebb\|\abf_i\|_2 = 1$ and it follows $\Ebb\|\xbf - \widehat{\xbf}\|_2 \leq \sum_{i=1}^N \|\ebf^{(i)}\|_2$.

Notice that even though we have only one fixed matrix $\Abf \in \Kbb^{m \times N}$, (obviously) we are no longer limited to the recovery of sparse vectors, which is the motivation for using fusion frames as described below.
However, this has a price as the (actual) number of measurements drastically increased to $m \cdot N$.
This is clearly too many measurements in any practical scenarios, but with similar ideas, we can reach practical applications such as using multiple lower sampling rate devices in order to resolve a harder problem, typically required for much higher rate of sampling devices (see Section~\ref{sssec:sampling}). 
Note also that all the results here should be put in the context of MMV problems, and not with the usual CS setup.

\underline{Case $n=2$}
We consider now two sets $\Omega_1$ and $\Omega_2$ such that $\Omega_1 \cup \Omega_2 = \{1,\cdots, N\}$ and $\Omega_1 \cap \Omega_2 = \varnothing$. 
Without loss of generality we can assume that $\Omega_1 = \{1, \cdots, N_1\}$ and $\Omega_2 = \{N_1+1,\cdots,N_1+N_2\}$ with $N_1+N_2 = N$.
We measure twice the unknown vector $\xbf$ as $\ybf^{(1)} = \Abf_1 \xbf + \ebf^{(1)}= \Abf P_{1} \xbf + \ebf^{(1)}$ and $\ybf^{(2)} = \Abf_2 \xbf + \ebf^{(2)} = \Abf P_{2} \xbf + \ebf^{(2)}$. If both $N_1$ and $N_2$ are smaller than $m$, then two $\lsp^2$ minimizations recover $\widehat{\xbf^{(1)}}$ and $\widehat{\xbf^{(2)}}$ independently and it follows that the solution $\widehat{\xbf} = \widehat{\xbf^{(1)}}+\widehat{\xbf^{(2)}}$ obeys the following error bound:
\begin{equation}
 \|\widehat{\xbf} - \xbf\|_2^2 \leq \|\Abf_{\Omega_1}^+\ebf^{(1)}\|_2^2 + \|\Abf_{\Omega_2}^+\ebf^{(2)}\|_2^2.
\end{equation}
The reconstruction is again perfect in the noiseless case.
Moreover, as in the previous case, there is no need for a sparsity assumption on the original vector $\xbf$.
The oversampling ratio is not very large as we only have a total of $2m$ measurements.

If however one at least of $N_1$ or $N_2$ (say $N_1$) is larger than $m$ then we need to use other tools as we are now solving an underdetermined linear system.
Driven by ideas from CS, we can assume the vector to be sparse on $\Omega_1$. The recovery problem becomes
\begin{equation}
 \label{eq:l1CS}
 \text{Find } \widehat{\xbf^{(1)}}	\text{ that minimizes } \|\zbf\|_1, \quad
			\text{ subject to } \|\Abf_1 \zbf - \ybf^{(1)}\|_2 \leq \eta_1 
\end{equation}

The $\lsp^1$ minimization problem is introduced as a convex relaxation of the NP-hard $\lsp^0$ minimization.
Note that the constraints apply only on the support $\Omega_1$ of the unknown vector.
Hence the usual sparsity requirements encountered in the CS literature need not to apply to the whole vector.
Unfortunately, the sparsity assumption being applied independently on $\Omega_1$ and/or $\Omega_2$ restricts ourselves to non-uniform recovery guarantees only, at least when considering the full set of sparse signals.
The following definition will become handy in the analysis.
\begin{adef}[Partial null space property]
\label{def:prnsp}
A matrix $\Abf \in \Kbb^{m \times N}$ is said to satisfy a partial robust and stable null space property of order $s$ with respect to a subset $\Omega \subset \{1, \cdots, N\}$ with $N \geq |\Omega | > m$, $P$ the orthogonal projection onto $\Omega$, and constants $0 < \rho < 1$, $0 < \tau$ if 
\begin{equation}
\|(P\vbf)_S\|_1 \leq \rho \|(P\vbf)_{\bar{S}}\|_1 + \tau \|\Abf\vbf\|_2
\end{equation}
holds for any vector $\vbf$ and any set $S \subset \Omega$ such that $|S| \leq s$.
\end{adef}
\begin{rmk}
The previous definition is a weakening of the usual robust null space property, where what happens in the complement of the set $\Omega$ is irrelevant.
It is worth noticing that it also coincides with the usual definition when $\Omega = \{1, \cdots, N\}$.
\end{rmk}
The partial robust null space property of the measurement matrix $\Abf$ on $\Omega_1$ ensures that the recovered local vector $\widehat{\xbf^{(1)}}$ obeys the following error bound~\cite[Theorem 4.19]{foucart2013book}:
\begin{equation}
\label{eq:errorBoundMixedLargeSmallRanks}
\|\widehat{\xbf^{(1)}} - P_1\xbf\|_1 \leq \frac{2(1+\rho)}{1-\rho}\sigma_{s}(\xbf^{(1)})_1 + \frac{4\tau}{1-\rho}\|\ebf^{(1)}\|_2.
\end{equation}
Equation~\eqref{eq:errorBoundMixedLargeSmallRanks} is obtained by modifying the proofs of~\cite[Theorem 4.19, Lemma 4.15]{foucart2013book} and adapting them to the presence of the projection $P_1$. 
A formal proof of a more general statement is given in the proof of the later Theorem~\ref{thm:rdnspRecovery}.
This yields the following direct consequence
\begin{aprop}
 \label{prop:recovery2proj}
 Let $N$ and $m$ be positive integers with $N > m$ and $\Abf \in \Kbb^{m \times N}$. 
There exist two integers $N_1$ and $N_2$ such that, up to a permutation, $\Omega_1 = \{1, \cdots, N_1\}$ and $\Omega_2 = \{N_1+1, \cdots N_1+N_2\}$.
 Assume that $N_2 \leq m$ and $N_1 > m$.
 Assume that the matrix $\Abf$ satisfies the partial robust null space property of order $s$ with respect to $\Omega_1$ with constants $0 < \rho < 1$ and $\tau > 0$ and that $\Abf_{\Omega_2}$ has full rank.
 For $\xbf \in \Kbb^N$, let $\widehat{\xbf} := \widehat{\xbf^{(1)}} + \widehat{\xbf^{(2)}}$ with $\widehat{\xbf^{(1)}}$ solution to Problem~\eqref{eq:l1CS} and $\widehat{\xbf^{(2)}}$ solution to the overdetermined $\lsp^2$ minimization problem on $\Omega_2$. Then the solution obeys:
 \begin{equation}
  \label{eq:boundOverUnderRecovery}
  \|\xbf - \widehat{\xbf}\|_2 \leq \|\Abf_{\Omega_2}^+\ebf^{(2)}\|_2 + \frac{2(1+\rho)}{1-\rho}\sigma_{s}(\xbf^{(1)})_1 + \frac{4\tau}{1-\rho}\|\ebf^{(1)}\|_2.
 \end{equation}
 Moreover, the total number of measurements amounts to $2m$ for the recovery of an $N_2+s$ sparse vector.
\end{aprop}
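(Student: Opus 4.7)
The plan is to exploit the fact that the canonical projections are onto disjoint coordinate sets, so that the two local minimizers can be taken with disjoint supports, and then bound the full error by the triangle inequality, controlling each piece separately.

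First I would check that a minimizer $\widehat{\xbf^{(1)}}$ of Problem~\eqref{eq:l1CS} may be chosen to be supported on $\Omega_1$: given any feasible $\zbf$, the vector $P_1\zbf$ is still feasible since $\Abf P_1(P_1\zbf)=\Abf P_1\zbf$, and $\|P_1\zbf\|_1 \leq \|\zbf\|_1$. By the same argument $\widehat{\xbf^{(2)}}$ can be chosen supported on $\Omega_2$. Since $\Omega_1\cap\Omega_2=\varnothing$ and $\Omega_1\cup\Omega_2=\{1,\dots,N\}$, we have the exact decomposition $\xbf = P_1\xbf+P_2\xbf$, and the triangle inequality yields
$$\|\xbf - \widehat{\xbf}\|_2 \leq \|P_1\xbf - \widehat{\xbf^{(1)}}\|_2 + \|P_2\xbf - \widehat{\xbf^{(2)}}\|_2.$$

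For the overdetermined piece, $N_2\leq m$ combined with the full column rank of $\Abf_{\Omega_2}$ gives $\Abf_{\Omega_2}^+\Abf_{\Omega_2}=\Ibf_{N_2}$, so that the $\lsp^2$ minimizer satisfies $(\widehat{\xbf^{(2)}})_{\Omega_2} = \xbf_{\Omega_2}+\Abf_{\Omega_2}^+\ebf^{(2)}$ with zeros elsewhere, which immediately produces $\|P_2\xbf-\widehat{\xbf^{(2)}}\|_2 = \|\Abf_{\Omega_2}^+\ebf^{(2)}\|_2$. For the underdetermined piece I invoke estimate~\eqref{eq:errorBoundMixedLargeSmallRanks}, which is the partial-NSP counterpart of~\cite[Theorem~4.19]{foucart2013book} and whose formal proof is deferred to Theorem~\ref{thm:rdnspRecovery}, yielding an $\lsp^1$ bound on $\widehat{\xbf^{(1)}} - P_1\xbf$ in terms of $\sigma_s(\xbf^{(1)})_1$ and $\|\ebf^{(1)}\|_2$. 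Using $\|\cdot\|_2 \leq \|\cdot\|_1$ and summing yields exactly~\eqref{eq:boundOverUnderRecovery}.

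The only nontrivial step is~\eqref{eq:errorBoundMixedLargeSmallRanks}, which is a mild adaptation of the classical robust-NSP recovery proof: since both $\widehat{\xbf^{(1)}}$ and $P_1\xbf$ are supported on $\Omega_1$, the error vector lies in the range of $P_1$, and splitting its support into the best $s$-term support $S\subset\Omega_1$ of $\xbf^{(1)}$ and its complement inside $\Omega_1$ lets the partial NSP of Definition~\ref{def:prnsp} take over verbatim as in the standard argument. This is essentially bookkeeping and will be subsumed by Theorem~\ref{thm:rdnspRecovery}; the count of measurements, $2m$ total, and the effective sparsity $N_2+s$ follow directly from the two-channel construction.
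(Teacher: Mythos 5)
Your proposal is correct and follows essentially the same route as the paper: the disjoint-support decomposition, exact pseudo-inverse analysis on $\Omega_2$, the partial-NSP bound~\eqref{eq:errorBoundMixedLargeSmallRanks} on $\Omega_1$ (with its formal proof deferred to Theorem~\ref{thm:rdnspRecovery}), and the triangle inequality together with $\|\cdot\|_2\leq\|\cdot\|_1$. Your preliminary observation that the $\lsp^1$ minimizer is supported on $\Omega_1$ is the same point the paper records later as Lemma~\ref{lemma:suppSolution}, so nothing is missing.
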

Recovering the vector $\xbf^{(2)}$ does not create any problem, as long as the pseudo-inverse has a reasonable $\lsp^2$ norm, i.e. as soon as the rank of the projection is reasonably small (see discussion below and Theorem~\ref{thm:minSingValue} in particular).
The recovery of the vector $\xbf^{(1)} \in \Kbb^{N}$ is ensured provided the number of subgaussian measurements $m$ scales as~\cite[Corollary 9.34]{foucart2013book}
\begin{equation}
 m \gtrsim 2s \ln(eN/s)\( 1+\rho^{-1} \)^2.
\end{equation}
Proposition~\ref{prop:recovery2proj} slightly reformulated yields the following result:
\begin{acorol}
 Let $\Abf := \Abf P_{1} + \Abf P_{2} \in \Kbb^{m \times N}$ with $\Omega_1 \cap \Omega_2 = \varnothing$ and $\Omega_1 \cup \Omega_2 = \{1, \cdots, N\}$ and $N_1 = |\Omega_1|, N_2 = |\Omega_2|$.
 If $\Abf$ is a subgaussian matrix and $\Abf_{\Omega_2}$ has full rank, then provided that
 \begin{equation*}
  m \gtrsim 2s \ln(eN/s)\( 1+\rho^{-1} \)^2,
 \end{equation*}
 any vector $\xbf \in \Theta := \Sigma_{s}^{N_1} + \Kbb^{N_2} \subset \Sigma_{s+N_2}^N$ can be recovered with the bound~\eqref{eq:boundOverUnderRecovery}.
\end{acorol}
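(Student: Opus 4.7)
The plan is to observe that the corollary is essentially a direct translation of Proposition~\ref{prop:recovery2proj} into the subgaussian setting; all that must be verified is that the two structural hypotheses of that proposition -- the partial robust null space property of $\Abf$ on $\Omega_1$ and the full rank of $\Abf_{\Omega_2}$ -- hold with high probability under the stated scaling of $m$. Once these are in hand, the conclusion follows by applying Proposition~\ref{prop:recovery2proj} pointwise to every $\xbf \in \Theta$, and by noting that $\Sigma_s^{N_1} + \Kbb^{N_2} \subset \Sigma_{s+N_2}^N$ simply reflects the fact that a vector which is $s$-sparse on $\Omega_1$ and unconstrained on $\Omega_2$ has at most $s+N_2$ nonzero entries.

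First, I would invoke the standard compressed sensing result (see, e.g., \cite[Corollary 9.34]{foucart2013book}) stating that if $\Abf \in \Kbb^{m \times N}$ has independent subgaussian entries and $m \gtrsim 2s \ln(eN/s)(1+\rho^{-1})^2$, then with high probability $\Abf$ satisfies the robust and stable null space property of order $s$ with constants $\rho$ and some $\tau > 0$ proportional to $1/\sqrt{m}$. Next, I would show that the (full) robust null space property on $\{1,\dots,N\}$ trivially implies the partial robust null space property on $\Omega_1$ in the sense of Definition~\ref{def:prnsp}. Indeed, for every $\vbf \in \Kbb^N$ and $S \subset \Omega_1$ with $|S| \leq s$, one can apply the global NSP to the vector $P_1 \vbf$, yielding
\begin{equation*}
 \|(P_1 \vbf)_S\|_1 \leq \rho \|(P_1 \vbf)_{\bar S}\|_1 + \tau \|\Abf P_1 \vbf\|_2,
\end{equation*}
and then control $\|\Abf P_1 \vbf\|_2 \leq \|\Abf \vbf\|_2 + \|\Abf P_2 \vbf\|_2$, where the second term is absorbed using the fact that the NSP inequality is also valid for indices in $\Omega_2$. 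Actually, the cleanest route is to apply the global NSP directly to $\vbf$ with the support set $S$, giving $\|\vbf_S\|_1 \leq \rho\|\vbf_{\bar S}\|_1 + \tau\|\Abf \vbf\|_2$, and then observe that since $S \subset \Omega_1$, we have $\vbf_S = (P_1 \vbf)_S$, while $\|\vbf_{\bar S}\|_1 = \|(P_1 \vbf)_{\bar S \cap \Omega_1}\|_1 + \|P_2 \vbf\|_1 \geq \|(P_1 \vbf)_{\bar S}\|_1$ only if we discard the $\Omega_2$ part, so one must be careful. The cleanest path is the first one: feed the zero-padded vector $P_1 \vbf$ into the NSP, which is legitimate since the NSP holds for \emph{all} vectors in $\Kbb^N$.

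Second, I would verify the full rank of $\Abf_{\Omega_2} \in \Kbb^{m \times N_2}$. Since $N_2 \leq m$ (this is forced by the assumptions of Proposition~\ref{prop:recovery2proj} on which this corollary implicitly rests) and $\Abf$ has independent continuous (subgaussian) entries, the submatrix $\Abf_{\Omega_2}$ has rank exactly $N_2$ almost surely. Combining this with the partial NSP established above, the matrix $\Abf$ satisfies both hypotheses of Proposition~\ref{prop:recovery2proj} with probability at least $1 - e^{-c m}$ for some absolute constant $c > 0$.

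Finally, I would conclude by applying Proposition~\ref{prop:recovery2proj} directly: on this event of high probability, every $\xbf \in \Theta = \Sigma_s^{N_1} + \Kbb^{N_2}$ admits a decomposition $\xbf = P_1 \xbf + P_2 \xbf$ where $P_1\xbf$ is $s$-sparse (so $\sigma_s(\xbf^{(1)})_1 = 0$), and the recovery bound~\eqref{eq:boundOverUnderRecovery} collapses to $\|\xbf - \widehat{\xbf}\|_2 \leq \|\Abf_{\Omega_2}^+ \ebf^{(2)}\|_2 + \frac{4\tau}{1-\rho}\|\ebf^{(1)}\|_2$. The main (minor) obstacle is the bookkeeping in the NSP-to-partial-NSP deduction, specifically the choice of test vector to plug into the global NSP so as to make only indices in $\Omega_1$ appear on both sides of the resulting inequality; padding $P_1 \vbf$ by zeros on $\Omega_2$ and noting that $\Abf(P_1 \vbf) = \Abf \vbf - \Abf P_2 \vbf$ gives the cleanest bookkeeping, and the constants $\rho, \tau$ can be absorbed without loss.
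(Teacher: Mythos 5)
Your overall plan -- reduce to Proposition~\ref{prop:recovery2proj}, get the null space property from \cite[Corollary 9.34]{foucart2013book} for subgaussian matrices under the stated scaling of $m$, and note that $\sigma_s(\xbf^{(1)})_1=0$ on $\Theta$ -- is exactly the reduction the paper intends (the paper presents the corollary as Proposition~\ref{prop:recovery2proj} ``slightly reformulated'' plus that citation). The genuine gap is in your deduction of the \emph{partial} robust NSP of Definition~\ref{def:prnsp} from the global robust NSP. Feeding the zero-padded vector $P_1\vbf$ into the global NSP gives
$\|(P_1\vbf)_S\|_1 \leq \rho\|(P_1\vbf)_{\bar S}\|_1 + \tau\|\Abf P_1\vbf\|_2$,
whose last term is $\|\Abf P_1\vbf\|_2$, not the $\|\Abf\vbf\|_2$ required by Definition~\ref{def:prnsp}; your claim that the discrepancy $\|\Abf P_2\vbf\|_2$ can be ``absorbed'' into the constants is unjustified and in fact cannot work: that quantity depends on $\vbf$ and is controlled by nothing on the right-hand side of the partial NSP. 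Indeed the implication you assert is false in general -- take $\vbf\in\Ker\Abf$ whose $\Omega_1$-part is concentrated on a set $S$ of size $s$ and whose $\Omega_2$-part is large; the global NSP is satisfied because $\rho\|\vbf_{\bar S}\|_1$ contains the $\Omega_2$ mass, while the partial NSP inequality, which discards that mass, can fail. Applying the global NSP directly to $\vbf$ (your second attempt) fails for the same reason, as you yourself noticed.

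The corollary survives because the full strength of Definition~\ref{def:prnsp} is never needed: the only vector to which the property is applied in the proof of Proposition~\ref{prop:recovery2proj} (via the argument of Theorem~\ref{thm:rdnspRecovery}) is the error $\vbf = P_1\xbf - \widehat{\xbf^{(1)}}$, which is supported on $\Omega_1$ because the minimizer of the local problem lies in $W_1$ (Lemma~\ref{lemma:suppSolution}). For such $\vbf$ one has $P_1\vbf=\vbf$ and $\Abf P_1\vbf=\Abf\vbf$, so the global robust NSP of order $s$, which the subgaussian assumption and the scaling $m \gtrsim 2s\ln(eN/s)(1+\rho^{-1})^2$ provide with high probability, already yields precisely the inequality used -- no upgrade to the partial NSP for arbitrary $\vbf$ is required. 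With that repair your argument closes; two minor points: the full rank of $\Abf_{\Omega_2}$ is a hypothesis of the corollary, so you need not (and for discrete subgaussian entries such as Rademacher cannot ``almost surely'') derive it, and $N_2\leq m$ is indeed inherited from the setting of Proposition~\ref{prop:recovery2proj} as you say.
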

Here, $\Sigma_s^{N_1}$ denotes the set of $s$ sparse vectors in $\Kbb^{N_1}$.
The previous Corollary provides a uniform recovery result, with a proviso that we restrict the model for the vector $\xbf$.

\subsubsection{Controlled deterministic projections}
\label{ssec:controlledDetProj}

We look here at a scenario where the sets $\Omega_i$ (and hence the projections $P_i$) are completely and deterministically controlled.
We consider an integer $n > 0$ and subsets $\Omega_i \subset \{1,\cdots,N\}$ with $|\Omega_i| = N_i$, $\cup_{i=1}^n \Omega_i = \{1,\cdots,N\}$ and $N_i \leq m$, and assume again that $\Abf$ has full rank.
In this case, we can always enforce the disjointness of the support $\Omega_i \cap \Omega_j = \varnothing$, for any $i \neq j$.
This yields the trivial recovery of the input signal from its local information vectors:
\begin{equation}
 \widehat{\xbf} := \sum_{i = 1}^n \widehat{\xbf^{(i)}}.
\end{equation}
Here, all the local vectors are recovered via generalized Moore-Penrose pseudo inverses of the sub-matrices $\Abf_{\Omega_i}$ composed of only the columns supported on $\Omega_i$:
\begin{equation}
 \widehat{\xbf^{(i)}} = \Abf_{\Omega_i}^+ \ybf^{(i)} = \xbf^{(i)} + \Abf_{\Omega_i}^+ \ebf^{(i)}.
\end{equation}
The error estimate follows directly:
\begin{equation}
 \label{eq:errorControlledDeterministic}
 \|\xbf - \widehat{\xbf}\|_2^2 \leq \sum_{i = 1}^n \|\Abf_{\Omega_i}^+ \ebf^{(i)}\|_2^2
\end{equation}

Note that it holds $\|\Abf^+\ebf\|_2 \leq \|\Abf^+\|_{2 \to 2} \|\ebf\|_2$ and that $\|\Abf^+\|_{2 \to 2} = 1/\operatorname{min}_{1 \leq i \leq r}(\sigma_i)$, with $r$ the rank of $\Abf$ and $\{\sigma_i\}_{i=1}^r$ its singular values.
As a consequence, when dealing with ``nice'' matrices, the bound~\eqref{eq:errorControlledDeterministic} is reasonable.
In a CS setup, normalized sensing matrices generated at random from a Gaussian distribution have the property of being \emph{well-behaved}, as suggested by~\cite[Theorem 9.26]{foucart2013book}:
\begin{atheorem}
\label{thm:minSingValue}
Let $\widetilde{\Abf}$ be an $m \times s$ Gaussian matrix with $m > s$ and $\Abf := \frac{1}{\sqrt{m}}\widetilde{\Abf}$ be its variance-normalized counterpart. 
Then for $t > 0$,
 \begin{align*}
  \Pbb\(\sigma_\text{max}(\Abf) \geq 1 + \sqrt{s/m} + t\) &\leq e^{-\frac{mt^2}{2}} \\
  \Pbb\(\sigma_\text{min}(\Abf) \leq 1 - \sqrt{s/m} - t\) &\leq e^{-\frac{mt^2}{2}},
 \end{align*}
where $\sigma_{max}(\Theta)$ and $\sigma_{min}(\Theta)$ are the maximum and the minimum singular values of $\Theta$, respectively.
\end{atheorem}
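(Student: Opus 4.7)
The plan is to combine two classical ingredients: a Gaussian concentration inequality for Lipschitz functionals of the matrix entries, and Gordon's bounds on the expected extreme singular values of a Gaussian matrix (see e.g.~\cite[Chapter 8]{foucart2013book}).

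First I would observe that both $\sigma_{\max}$ and $\sigma_{\min}$, viewed as functions on $\Rbb^{m \times s}$, are $1$-Lipschitz with respect to the operator norm and hence with respect to the (larger) Frobenius norm. Indeed, Weyl's perturbation inequality for singular values gives, for any $M, N \in \Rbb^{m \times s}$,
\[
|\sigma_{\max}(M) - \sigma_{\max}(N)| \leq \|M - N\|_{2 \to 2} \leq \|M - N\|_F,
\]
and the analogous bound holds for $\sigma_{\min}$. Rescaling by $1/\sqrt{m}$, the maps $\widetilde{\Abf} \mapsto \sigma_{\max}(\widetilde{\Abf}/\sqrt{m})$ and $\widetilde{\Abf} \mapsto \sigma_{\min}(\widetilde{\Abf}/\sqrt{m})$ are both $(1/\sqrt{m})$-Lipschitz in the Frobenius norm of $\widetilde{\Abf}$.

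The second step is to apply the standard Gaussian concentration inequality: if $f : \Rbb^{m \times s} \to \Rbb$ is $L$-Lipschitz with respect to the Frobenius norm and the entries of $\widetilde{\Abf}$ are i.i.d.\ $N(0,1)$, then for every $t > 0$, $\Pbb\left( f(\widetilde{\Abf}) - \Ebb f(\widetilde{\Abf}) \geq t \right) \leq e^{-t^2/(2L^2)}$. Setting $L = 1/\sqrt{m}$ and applying this to $f = \sigma_{\max}(\,\cdot\,/\sqrt{m})$ and then to $f = -\sigma_{\min}(\,\cdot\,/\sqrt{m})$ gives
\[
\Pbb\left( \sigma_{\max}(\Abf) \geq \Ebb \sigma_{\max}(\Abf) + t \right) \leq e^{-mt^2/2}, \qquad \Pbb\left( \sigma_{\min}(\Abf) \leq \Ebb \sigma_{\min}(\Abf) - t \right) \leq e^{-mt^2/2}.
\]

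To finish the proof it remains to bound the expectations by $1 \pm \sqrt{s/m}$. This is precisely the content of Gordon's theorem: $\Ebb\,\sigma_{\max}(\widetilde{\Abf}) \leq \sqrt{m} + \sqrt{s}$, and when $m \geq s$, $\Ebb\,\sigma_{\min}(\widetilde{\Abf}) \geq \sqrt{m} - \sqrt{s}$. Dividing by $\sqrt{m}$ yields the desired bounds on $\Ebb\,\sigma_{\max}(\Abf)$ and $\Ebb\,\sigma_{\min}(\Abf)$, and substituting them into the concentration inequalities above absorbs the mean into the offset, producing exactly the two tail bounds stated in the theorem. The main obstacle is Gordon's bound on the expectations, whose proof rests on a Gaussian comparison argument (Slepian's lemma applied to the minimax representation $\sigma_{\max}(\widetilde{\Abf}) = \sup_{\|u\|_2 = \|v\|_2 = 1} u^T \widetilde{\Abf} v$ and its analogue for $\sigma_{\min}$, compared with a simpler Gaussian process of the form $\langle g, u \rangle + \langle h, v \rangle$ for independent standard Gaussian vectors $g,h$). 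Rather than reproving these comparison inequalities from scratch, I would cite the standard statement and keep the argument here limited to the Lipschitz/concentration bookkeeping.
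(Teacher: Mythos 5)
Your argument is correct, and it is essentially the same as the paper's: the paper does not prove this statement at all but simply quotes it as Theorem 9.26 of the cited Foucart--Rauhut book, whose proof is exactly your combination of the Lipschitz/Gaussian concentration inequality with the bounds $\Ebb\,\sigma_{\max}(\widetilde{\Abf}) \leq \sqrt{m}+\sqrt{s}$ and $\Ebb\,\sigma_{\min}(\widetilde{\Abf}) \geq \sqrt{m}-\sqrt{s}$. The only small caveat is that for $\sigma_{\min}$ the expectation bound needs Gordon's min--max comparison rather than Slepian's lemma alone, but since you defer to the standard statement this is a matter of attribution, not a gap.
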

For small projection ranks, and $\Abf$ as in Theorem~\ref{thm:minSingValue}, it holds $s \leq r < m$, and therefore $\Pbb(\|\Abf_{\Omega_i}^+\|_{2 \to 2} \geq \frac{1}{1-\sqrt{\frac{r}{m}} - t}) \leq e^{-mt^2/2}$ which justifies that the bound~\eqref{eq:errorControlledDeterministic} is small.

\subsubsection{Rank-controlled projections}
\label{ssec:rankRdmSupport}
This scenario differs from the previous one by the fact that we may control only the ranks $N_i$'s of the projections but let the support of projections be random.
This example is motivated by the SAR applications where the rank of the projections is controlled by the sensing device itself. 
In this case, for uniform recovery of any vector $\xbf$ we need to ensure that, with high probability, the whole support $\{1,\cdots,N\}$ is covered by the random projections.
We assume that all the projections have the same rank $r$ for simplicity of calculations, but similar ideas apply if variations in the rank of the projections were needed.
We pick uniformly at random $n$ sets $\Omega_i$ of size $r$ in $\{1,\cdots,N\}$. It holds
\begin{equation}
\label{eq:probaElementNotInOmega} 
\Pbb\left[ \exists i \in \{1,\cdots,N\}: i \notin \Omega := \cup_{j = 1}^n \Omega_j \right] = N \Pbb\left[ 1 \notin \Omega \right] = N\(\Pbb\left[ 1 \notin \Omega_1 \right] \)^n 
 = N\( \frac{N-r}{N} \)^n.
\end{equation}
This gives that $\Pbb\left[ \exists i \in \{1,\cdots,N\}: i \notin \Omega := \cup_{j = 1}^n \Omega_j \right] \leq \varepsilon$ whenever
\begin{equation}
 \label{eq:nbMeasVectors}
 n \geq \frac{\log(N/\varepsilon)}{\log(N/(N-r))}.
\end{equation}

This direct consequence follows.
\begin{aprop}
 \label{prop:recoveryfixrank}
 Let $\Abf \in \Kbb^{m \times N}$ with $m < N$, $r \leq m$, $\varepsilon > 0$, and $n$ be such that Equation~\eqref{eq:nbMeasVectors} holds. 
In addition, assume that every submatrix with $r$ columns extracted from $\Abf$ has full rank. 
Let $\Omega_1, \Omega_2, \cdots, \Omega_n$ be $n$ subsets in $\{1,\cdots,N\}$ of size $r$ chosen independently and uniformly at random.
 Any vector $\xbf$ recovered from the fusion of local recoveries as $\widehat{\xbf} := S^{-1} \sum_{i=1}^n \widehat{\xbf^{(i)}}$ from the local measurements $\ybf^{(i)} = \Abf P_i\xbf + \ebf^{(i)}$, $1 \leq i \leq n$ satisfies, with probability at least $1-\varepsilon$ 
 \begin{equation}
  \label{eq:errorBoundProp2} \| \xbf - \widehat{\xbf} \|_2^2 \leq \frac{1}{C} \sum_{i=1}^n \|\Abf_{\Omega_i}^+ \ebf^{(i)}\|_2^2
 \end{equation}
 where $S(\xbf) := \sum_{i=1}^n P_i \xbf$ and $\widehat{\xbf^{(i)}} := A_{\Omega_i}^+ \ybf^{(i)}$ and $C$ denotes the lower frame bound of $S$.
\end{aprop}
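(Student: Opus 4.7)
The plan is to establish the bound in three steps: first verify that with high probability the random projections cover $\{1,\dots,N\}$ so that $S$ is invertible, then identify the local recovery residuals explicitly, and finally bound the fusion-step error via an operator-norm identity.

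\textbf{Probabilistic coverage.} The bound on $n$ in \eqref{eq:nbMeasVectors} was designed so that the union-bound computation in \eqref{eq:probaElementNotInOmega} yields
\[
\Pbb\Bigl[\,\bigcup_{i=1}^n \Omega_i = \{1,\dots,N\}\,\Bigr] \geq 1-\varepsilon.
\]
On this event the coordinate projections $(P_i)_{i=1}^n$ satisfy the fusion frame inequality \eqref{eq:ffInequality}, so $S = \sum_i P_i$ is invertible, the lower frame bound $C > 0$ exists, and $\widehat{\xbf}$ is well-defined. The rest of the argument is conducted on this event.

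\textbf{Local residuals and error representation.} Since $|\Omega_i| = r \leq m$ and every $r$-column submatrix of $\Abf$ has full rank, $\Abf_{\Omega_i}$ is injective with $\Abf_{\Omega_i}^+\Abf_{\Omega_i}=\Ibf_r$, whence, after embedding each local vector into $\Kbb^N$ by zero-padding on $\Omega_i^c$,
\[
\widehat{\xbf^{(i)}} - \xbf^{(i)} = \Abf_{\Omega_i}^+ \ebf^{(i)} \in W_i.
\]
Combining this with the fusion reconstruction $\xbf = S^{-1}\sum_i P_i\xbf = S^{-1}\sum_i \xbf^{(i)}$ and the definition of $\widehat{\xbf}$ gives the compact error representation
\[
\widehat{\xbf} - \xbf = S^{-1}T^* \vbf, \qquad \vbf := \bigl(\Abf_{\Omega_i}^+ \ebf^{(i)}\bigr)_{i=1}^n,
\]
where $T^*\colon \bigl(\sum_i \oplus W_i\bigr)_{\lsp^2} \to \Kbb^N$ is the fusion-frame synthesis operator.

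\textbf{Operator-norm estimate.} The key claim is $\|S^{-1}T^*\|_{2\to 2}^2 \leq 1/C$. Using $T^*T = S$ and self-adjointness of $S$, I compute
\[
(S^{-1}T^*)(S^{-1}T^*)^* = S^{-1}T^*T S^{-1} = S^{-1},
\]
so that $\|S^{-1}T^*\|_{2\to 2}^2 = \|S^{-1}\|_{2\to 2} \leq 1/C$ by the lower frame bound $S \geq C\Ibf$. Squaring $\|\widehat{\xbf} - \xbf\|_2 \leq C^{-1/2}\|\vbf\|_{\lsp^2}$ then delivers the stated inequality \eqref{eq:errorBoundProp2}.

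\textbf{Anticipated obstacle.} The tempting shortcut of applying the lower fusion-frame inequality directly to $\xbf - \widehat{\xbf}$, i.e.\ $C\|\xbf-\widehat{\xbf}\|_2^2 \leq \sum_i \|P_i(\xbf-\widehat{\xbf})\|_2^2$, does \emph{not} close the argument, because $P_i \widehat{\xbf}$ is generically not equal to the local estimate $\widehat{\xbf^{(i)}}$ --- the latter lives in $W_i$, while $\widehat{\xbf}$ is produced by applying the global inverse $S^{-1}$ to a mixed sum of vectors from different subspaces, so projecting back onto $W_i$ re-couples contributions from all channels. The clean maneuver is the operator-theoretic route above: exploiting $T^*T = S$ to collapse $\|S^{-1}T^*\|^2$ to $\|S^{-1}\|$ is precisely what makes the constant $1/C$ surface without any suboptimal triangle inequality.
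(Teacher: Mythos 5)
Your proof is correct, and it in fact supplies a justification that the paper itself omits: Proposition~\ref{prop:recoveryfixrank} is stated there as a ``direct consequence'' of the coverage computation \eqref{eq:probaElementNotInOmega}--\eqref{eq:nbMeasVectors} and of the pseudo-inverse error analysis \eqref{eq:errorControlledDeterministic} from the non-overlapping case, with the factor $1/C$ attributed informally to the inverse fusion frame operator. The route that reasoning suggests --- write $\widehat{\xbf}-\xbf=S^{-1}\sum_i \Abf_{\Omega_i}^{+}\ebf^{(i)}$ and invoke $\|S^{-1}\|_{2\to2}\le 1/C$ together with the triangle inequality --- only yields $\frac{1}{C^{2}}\bigl(\sum_i\|\Abf_{\Omega_i}^{+}\ebf^{(i)}\|_2\bigr)^{2}$, hence after Cauchy--Schwarz $\frac{n}{C^{2}}\sum_i\|\Abf_{\Omega_i}^{+}\ebf^{(i)}\|_2^{2}$, which is weaker than the stated bound since $C\le n$. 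Your maneuver of viewing the error as $S^{-1}T^{*}\vbf$ with $\vbf=(\Abf_{\Omega_i}^{+}\ebf^{(i)})_i$ in the representation space (each component does lie in $W_i$, by the full-rank assumption and $r\le m$ giving $\Abf_{\Omega_i}^{+}\Abf_{\Omega_i}=\Ibf_r$), and then collapsing $(S^{-1}T^{*})(S^{-1}T^{*})^{*}=S^{-1}$ so that $\|S^{-1}T^{*}\|_{2\to2}^{2}=\|S^{-1}\|_{2\to2}\le 1/C$, is exactly what produces the constant $1/C$ in front of the sum of squares; your observation that $P_i\widehat{\xbf}\neq\widehat{\xbf^{(i)}}$ blocks a direct application of the fusion frame inequality to $\xbf-\widehat{\xbf}$ is also accurate. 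The probabilistic step (coverage of $\{1,\dots,N\}$ with probability $\ge 1-\varepsilon$, hence invertibility of $S$ and $C\ge 1$) matches the paper's computation. In short, your argument proves the inequality as stated, whereas the paper's implicit estimate would only give a lossier constant, so your operator-theoretic detour is not merely stylistic but necessary.
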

The lower frame bound is defined for general fusion frame operators as in Definition~\ref{def:fusionframes}. 
The error bound above is applicable in case of any fusion/filtering scenario. 
In our particular set-up, we calculate exactly this bound.
\begin{alemma}
Given $n$ projections $P_i$ of rank $r$, the lower frame bound of the fusion frame operator $S := \sum P_i$ is given by
\begin{equation}
\label{eq:lowerFrameBound}
C := \min_{1 \leq k \leq N} \Mcal(k) =: \Mcal,
\end{equation}
where $\Mcal(k)$ is the multiplicity of index $k$. 
Formally, let $\Gamma_k := \{ j: 1\leq j \leq n: k \in \Omega_j \}$, then $\Mcal(k) := |\Gamma_k|$.
\end{alemma}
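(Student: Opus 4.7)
The plan is to compute the quantity $\sum_{i=1}^n \|P_i \xbf\|_2^2$ explicitly by swapping the order of summation, and then read off the optimal lower frame constant by matching it against the definition of the fusion frame inequality~\eqref{eq:ffInequality}. Since each $P_i$ is the orthogonal projection onto the coordinate subspace $\Omega_i$, its action is simply $P_i \xbf = \sum_{k \in \Omega_i} x_k \ebf_k$, so that $\|P_i \xbf\|_2^2 = \sum_{k \in \Omega_i} |x_k|^2$.

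The key identity I would then establish is
\begin{equation*}
\sum_{i=1}^n \|P_i \xbf\|_2^2 \;=\; \sum_{i=1}^n \sum_{k \in \Omega_i} |x_k|^2 \;=\; \sum_{k=1}^N |x_k|^2 \cdot \bigl|\{i : k \in \Omega_i\}\bigr| \;=\; \sum_{k=1}^N \Mcal(k)\,|x_k|^2,
\end{equation*}
where the middle equality is just Fubini on the finite double sum, grouping terms by the index $k$ rather than by the projection $i$. This converts the fusion frame norm into a diagonal quadratic form whose weights are precisely the multiplicities $\Mcal(k)$.

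From here the two inequalities in the frame bound are immediate. For the lower bound, since $\Mcal(k) \geq \Mcal$ for every $k$, one obtains $\sum_{i=1}^n \|P_i \xbf\|_2^2 \geq \Mcal \sum_{k=1}^N |x_k|^2 = \Mcal \|\xbf\|_2^2$ for all $\xbf \in \Kbb^N$, showing that $\Mcal$ is a valid lower frame constant and hence $C \geq \Mcal$. To show sharpness, pick any index $k^\ast$ achieving $\Mcal(k^\ast) = \Mcal$ and test against the canonical basis vector $\xbf = \ebf_{k^\ast}$; the identity above then collapses to $\sum_{i=1}^n \|P_i \ebf_{k^\ast}\|_2^2 = \Mcal$, while $\|\ebf_{k^\ast}\|_2^2 = 1$, forcing $C \leq \Mcal$.

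There is essentially no obstacle here: the statement is a direct computation using that the $P_i$ are canonical coordinate projections, and there is no nontrivial analytic step. The only thing one has to be slightly careful about is that $\Mcal$ should be defined over all $k \in \{1, \ldots, N\}$ (not only those lying in some $\Omega_j$); if some index $k$ lies outside every $\Omega_j$ then $\Mcal(k) = 0$ and the fusion frame inequality fails on $\ebf_k$, which is consistent with the covering hypothesis $\bigcup_j \Omega_j = \{1,\ldots,N\}$ being necessary for $\Wcal$ to be a fusion frame at all.
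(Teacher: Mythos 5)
Your proof is correct, and it rests on the same core observation as the paper's: with coordinate projections the relevant object is diagonal with entries given by the multiplicities $\Mcal(k)$, and sharpness is witnessed by a canonical basis vector $\ebf_{k^\ast}$ at a minimizing index. The one genuine difference is which quantity you diagonalize. You compute the quadratic form $\sum_{i=1}^n\|P_i\xbf\|_2^2=\langle S\xbf,\xbf\rangle=\sum_{k=1}^N\Mcal(k)|x_k|^2$, which is literally the quantity appearing in the fusion-frame inequality~\eqref{eq:ffInequality}, so the bounds $\Mcal\|\xbf\|_2^2\leq\sum_i\|P_i\xbf\|_2^2$ and the test vector $\ebf_{k^\ast}$ identify the optimal constant $C=\Mcal$ with no further translation. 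The paper instead writes $S(\xbf)=\sum_k\Mcal(k)x_k\ebf_k$ and estimates $\|S\xbf\|_2^2$ against $\Mcal^2\|\xbf\|_2^2$, i.e.\ it controls the smallest singular value of $S$ and concludes $C^2\geq\Mcal^2$; since $S$ is a positive diagonal operator its smallest singular value and smallest eigenvalue coincide, so the two routes agree, but your version is the more direct match to the definition and avoids the squared-constant bookkeeping. Your closing remark about indices not covered by any $\Omega_j$ (forcing $\Mcal=0$ and the failure of the frame inequality) is also a sound consistency check that the paper leaves implicit.
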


\begin{proof}
That $C \leq \Mcal$ is clear. 
Indeed, let $\xbf = \ebf_{k_0}$ where $k_0$ is one of the indices achieving the minimum and $\ebf_{k_0}$ denotes the canonical vector. 
Then it holds $S(\ebf_{k_0}) = \sum_{j \in \Gamma_{k_0}} \ebf_{k_0}$ and it follows that $\|S(\xbf)\|_2^2 \leq \Mcal^2$.

On the other hand, for any $\xbf \in \Kbb^N$, the fusion frame operator can be rewritten as $S(\xbf) = \sum_{i=1}^N \Mcal(i)x_i$. 
Consequently it holds that for any $\xbf \in \Kbb^N, \|S(\xbf)\|_2^2 \geq \Mcal^2 \sum_{i=1}^N|x_i|^2$ and hence $C^2 \geq \Mcal^2$.
\end{proof}

As a consequence, it is trivial to see that for any $1 \leq k \leq N$, $\Mcal(k)$ is a non-decreasing function of the number of projections $n$. Hence, the greater the number of projections (the redundancy, in fusion frame terms) the greater the bound $C$ and therefore the smaller the error in the recovery, according to Equation~\eqref{eq:errorBoundProp2}. 
This however can only hold true, as long as the noise per measurement vectors remains small.

In the case of subsets selected independently uniformly at random, we can have a precise statement. 
The lower frame bound is defined as the minimum number of occurrences of any index $k \in \{1,\cdots, N\}$. 
Let us consider $\Mcal(k)$ and $\Gamma_k$ as in the previous lemma. 
Let $k \in \{1,\cdots,N\}$ be any index and $\Omega_j$ denote a draw of a random set. 
It holds $\Pbb[k \in \Omega_j] = r/N$. 
The subsets being independent of each other, $\Mcal(k)$ is a binomial random variable with probability of success $P = r/N$ for each of the $n$ trials. 
It follows that $\Pbb[\Mcal(k) = l] = {n \choose l} P^l(1-P)^{n-l}$ for all $1 \leq k \leq N$ and $1 \leq l \leq n$.

Putting everything together, we get that $C := \min \limits_{k \in \{1,\cdots, N\}}\Mcal(k)$ is a random variable such that 
\begin{align}
\Pbb[C \geq l] &= \Pbb[\forall 1 \leq k \leq N, \Mcal(k) \geq l] \\ 
   &= \( \Pbb[\Mcal(k) \geq l] \)^N = \( 1 - \sum_{j=1}^{l-1}\Pbb[\Mcal(k) = j] \)^N = \( 1 - F(l-1,n,P) \)^N.
\end{align} 

These expressions resemble the calculations involved in Equations~\eqref{eq:probaElementNotInOmega} and ~\eqref{eq:nbMeasVectors}, however with a much more complicated probability distribution. 
To avoid unnecessarily tedious calculations that would infer the readability of the paper, we choose to leave the following result as a conjecture.
\begin{aconj}
\label{conj:expectedLowerFramebound}
Let $\Omega_1, \Omega_2, \cdots, \Omega_n$ be $n$ subsets of $r$ elements taken uniformly at random in $\{1,\cdots,N\}$. 
Let $\Gamma_k := \{ j: 1\leq j \leq n: k \in \Omega_j \}$, and $\Mcal(k) := |\Gamma_k|$.
Then $C := \min_{1 \leq k \leq N} \Mcal(k) =: \Mcal$ grows at least linearly with $n$, i.e., there exists constants $c > 0$ and $b \geq 0$, independent of $n$, such that 
\begin{equation}
\label{eq:linearGrowthC}
\Ebb[C] \geq  cn - b.
\end{equation}
\end{aconj}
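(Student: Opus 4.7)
The plan is to exploit the independence of the random subsets $\Omega_j$ to identify each $\Mcal(k)$ as a Binomial random variable, then combine a standard Chernoff tail bound with a union bound over the $N$ coordinates to show that $C$ cannot deviate far from the common mean $nr/N$. Since that mean already grows linearly in $n$, the linear lower bound $\Ebb[C] \geq cn - b$ follows after converting the tail bound to an expectation bound and absorbing a small-$n$ defect into the constant $b$.

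More concretely, first I would fix $k \in \{1,\ldots,N\}$ and observe that, since each $\Omega_j$ is a uniform random $r$-subset of $\{1,\ldots,N\}$, the indicator $\mathbb{1}[k \in \Omega_j]$ is a Bernoulli$(r/N)$ variable, and by the independence of the $\Omega_j$ across $j$, the variables $\{\mathbb{1}[k \in \Omega_j]\}_{j=1}^n$ are i.i.d. Hence $\Mcal(k) \sim \operatorname{Bin}(n,P)$ with $P := r/N$ and $\Ebb[\Mcal(k)] = nP$. Next I would apply a Chernoff bound for the lower tail of a binomial, $\Pbb[\Mcal(k) \leq (1-\delta)nP] \leq \exp(-\delta^2 nP/2)$ for any $\delta \in (0,1)$, and union-bound over $k$ to get
\begin{equation*}
 \Pbb\bigl[C \leq (1-\delta)nP\bigr] \leq N \exp\bigl(-\delta^2 nP/2\bigr).
\end{equation*}

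Third, since $C \geq 0$ almost surely, the Markov-type inequality $\Ebb[C] \geq (1-\delta)nP \cdot \Pbb[C > (1-\delta)nP]$ yields
\begin{equation*}
 \Ebb[C] \;\geq\; (1-\delta)nP\,\bigl(1 - N \exp(-\delta^2 nP/2)\bigr).
\end{equation*}
Fixing, e.g., $\delta = 1/2$, there exists $n_0 = n_0(N,r)$ (of the order $(N/r)\log(2N)$) such that for all $n \geq n_0$ we have $N\exp(-\delta^2 nP/2) \leq 1/2$, and hence $\Ebb[C] \geq (r/(4N))\,n$. For $n < n_0$ the trivial bound $\Ebb[C] \geq 0$ suffices, provided we pick $b$ large enough to absorb the gap. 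Setting $c := r/(4N)$ and $b := cn_0$ then yields $\Ebb[C] \geq cn - b$ uniformly in $n$, with both constants depending only on $N$ and $r$ as required.

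The main obstacle, and the only place where the argument is not automatic, is the handoff between the concentration regime ($n \geq n_0$) and the small-$n$ regime, which forces the additive correction $b$. The slope $c = r/(4N)$ we obtain is certainly not optimal: a sharper analysis, for instance letting $\delta \downarrow 0$ slowly with $n$ or using Bennett's inequality, should push $c$ arbitrarily close to $r/N$, which is the natural target since $\Ebb[\Mcal(k)]/n = r/N$ for every $k$. One could also attempt to exploit the negative association of the $\Mcal(k)$'s arising from sampling-without-replacement within each $\Omega_j$; however, for the mere linear lower bound claimed in the conjecture, the crude union bound suffices and the delicate correlation structure is not needed.
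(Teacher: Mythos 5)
Your proof is correct, and it does something the paper itself does not: the paper offers no proof of this statement at all, deliberately leaving it as a conjecture supported only by numerical experiments and a referenced special case, on the grounds that exact calculations with the binomial distribution function $F(l-1,n,P)$ would be too tedious. Your route sidesteps those exact calculations entirely: each $\Mcal(k)$ is $\operatorname{Bin}(n, r/N)$ under the independence-across-$j$ model (which is indeed the model the paper works with in the discussion just before the conjecture, even though the conjecture's wording does not spell out independence --- worth flagging explicitly), the multiplicative Chernoff lower tail plus a union bound over the $N$ indices gives $\Pbb[C \leq (1-\delta)nP] \leq N\exp(-\delta^2 nP/2)$, and the truncation inequality $\Ebb[C] \geq t\,\Pbb[C > t]$, valid since $C \geq 0$, converts this into $\Ebb[C] \geq (1-\delta)nP\bigl(1 - N\exp(-\delta^2 nP/2)\bigr)$. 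With $\delta = 1/2$ the exponent forces $n_0 \asymp (N/r)\log(2N)$, giving slope $c = r/(4N)$ for $n \geq n_0$, and the small-$n$ regime is absorbed into $b = c n_0$ precisely because $C \geq 0$; both constants depend only on $N$ and $r$, hence are independent of $n$ as required. So your argument actually settles the conjecture in the paper's model, with room to sharpen the slope toward the natural value $r/N$ as you note, which is exactly the form $\Ebb[C] \geq c'n$ that the paper goes on to use after the conjecture.
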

A direct consequence of this is that one can always find a multiplicative constant $c' \leq c$ such that for a number of projections $n \geq b/(c-c')$, $\Ebb[C] \geq  c'n$ (the case $c = c'$ can only be true for $b = 0$, in which case, there is no need for this remark).
Though the proof is not given, the result is backed up with numerical results shown in Figure~\ref{fig:lowerFrameBounds}.\footnote{It is also backed up with particular cases, when the ranks are exactly half the dimension, see http://math.stackexchange.com/questions/1135253/mean-value-of-minimum-of-binomial-variables} 
The graphs always show the expected linear growth, independently of the parameters of the problem. 

\begin{figure}[htb]
\centering
\subfigure[Large projections]{
\includegraphics[width=0.35\linewidth]{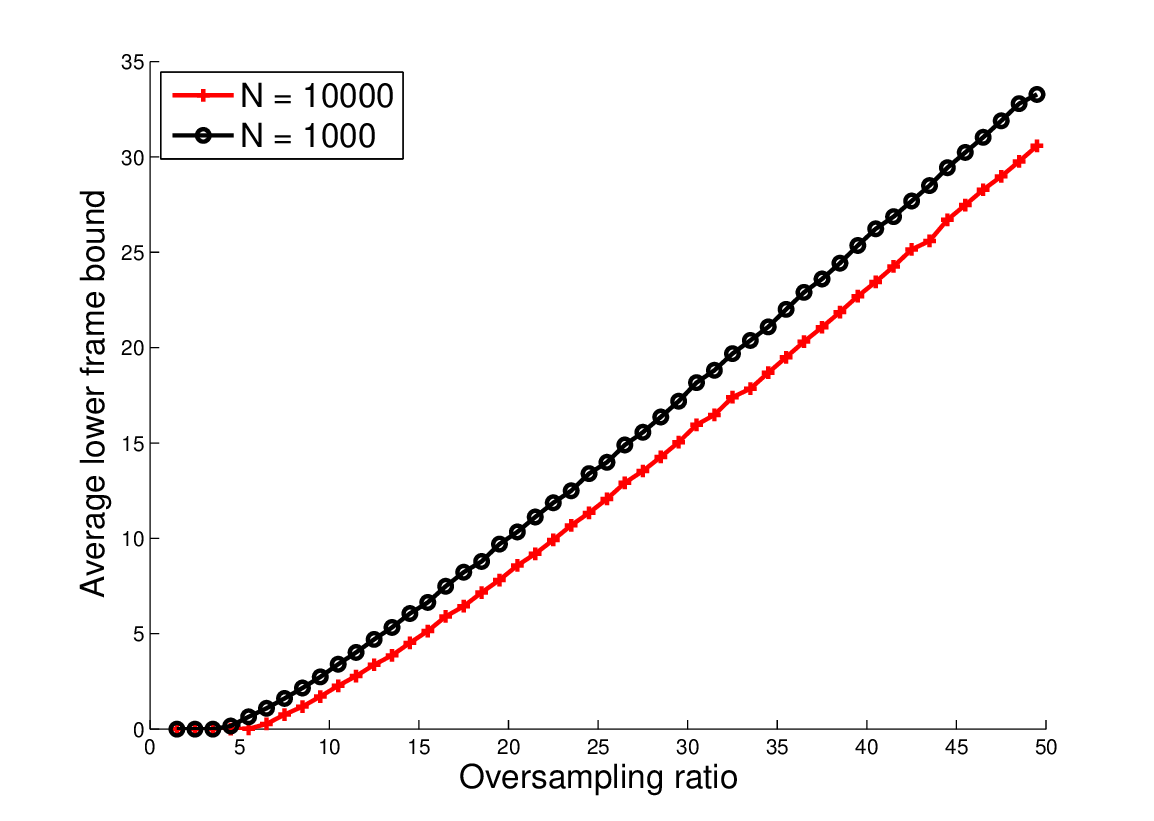}
\label{fig:lowFrameLargeProj}
}
\subfigure[Small projections]{
\includegraphics[width=0.35\linewidth]{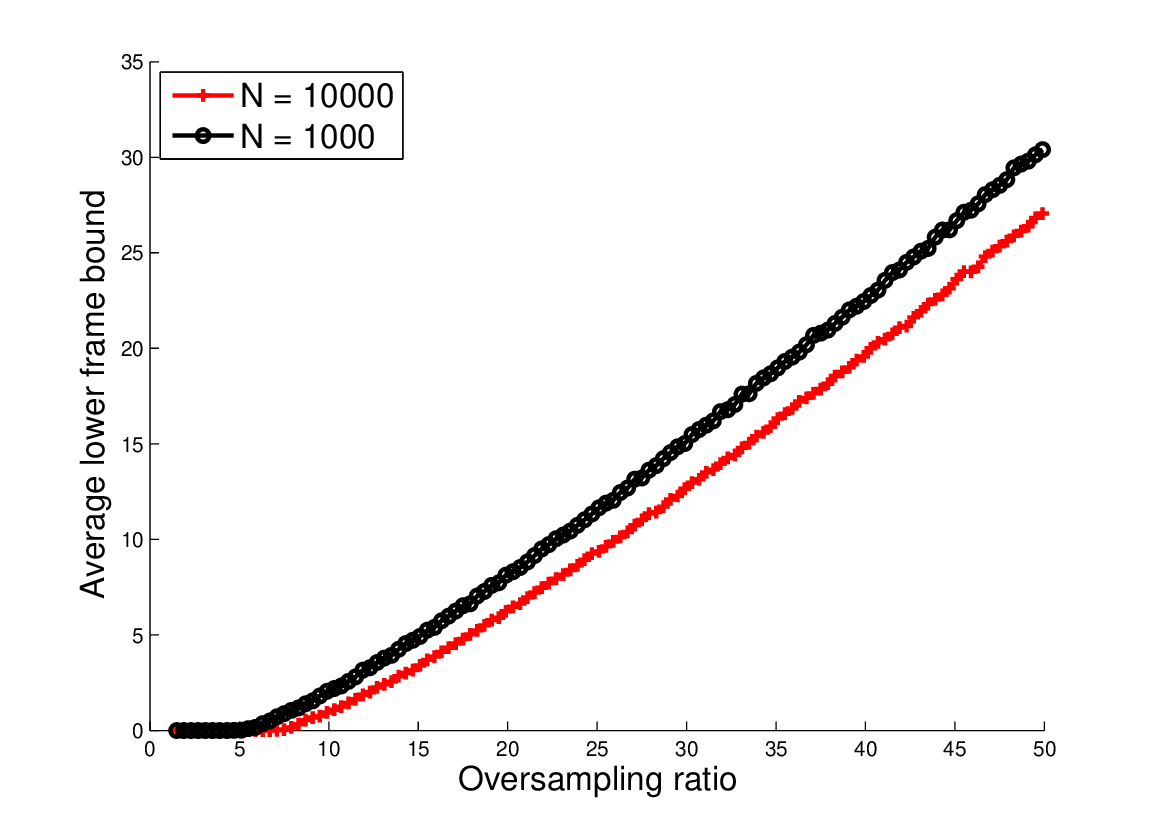}
\label{fig:lowFrameSmallProj}
}
\caption{Linear growth of the expectation of the lower frame bound for various use cases. The averages are calculated from 300 random experiments. The left graphs show the average lower frame bound when dealing with rather large projections (with rank $r = N/2$) while the right graphs are in the case of much smaller projections ($r = N/5$). The oversampling ratio is calculated as a linear function of the number of projections and is independent of the ambient dimension $N$; $\zeta = rn/N$.}
\label{fig:lowerFrameBounds}
\end{figure}

%
%
Assuming the conjecture to be true, Equation~\eqref{eq:errorBoundProp2} then simplifies to 
\begin{equation}
\Ebb\|\widehat{\xbf} - \xbf\|_2^2 \leq \frac{1}{c'n}\sum_{i=1}^n\|\Abf_{\Omega_i}^+\ebf^{(i)}\|_2^2 \leq \frac{\delta \nu}{c'},
\end{equation}
where the expectation is taken over the draw of the projections, for a certain constant $c' > 0$, where we let $\delta = \max_i\|\Abf_{\Omega_i}^+\|_{2 \to 2}$ and $\nu := \max_i\|\ebf^{(i)}\|_2^2$.
It shows that in expectation, the error should not grow as the number of measurements increases. 
Moreover, numerical results illustrated in Figure~\ref{fig:nbProjections} suggest that the expected reconstruction error decreases. 
An intuition for this behavior is provided in a simple example in Equations~\eqref{eq:simpleCalculations1} and ~\eqref{eq:simpleCalculations2}.
From a compressed sensing perspective, it has been shown that the least favorable case appears as the case when the vectors are repeated again and again. 
This translates in our scenario to the case where the increase of the measurements consist solely in repeating the same projections (which would not produce a fusion frame). 
Considering the example of a SAR imaging process developed in Section~\ref{ssec:examples:SAR}, this would correspond to a plane flying over a given region multiple times, with the spatial filtering being exactly the same.



\begin{rmk}
 It is important to note here that there is no sparsity assumption. It is possible to recover {\bf any} vector $\xbf$ with high probability, as long as the number of measurement vector scales reasonably with the dimension of the input space.
\end{rmk}

\subsubsection{Constrained number of measurements and ranks of projections}
In certain scenarios, the physical measurement devices limit our freedom in choosing the ranks and size of the measurements.
We hence deal here with fixed ranks $N_i$'s for the projections $P_i$.
Assume, without loss of generality, that the first $0 \leq k \leq n$ projections have ranks lower than the number of measurements $m$.
For these subspaces, the usual $\lsp^2$ minimization procedure yields perfect (or optimal in the noisy case) recovery of the local information $\xbf^{(i)}$, for $1 \leq i \leq k$.
For the remaining $n-k$ subspaces, the pseudo-inverse is not sufficient, and we use tools from  CS.
Here again we use the partial robust null space property from Definition~\ref{def:prnsp} on every subspace $\Omega_j$, for $k+1 \leq j \leq n$.
Combining results from Propositions~\ref{prop:recoveryfixrank} and~\ref{prop:recovery2proj} yields the following corollary as a consequence:
\begin{acorol}
\label{cor:constrainedRecovery}
Let $m < N$ and $A \in \Kbb^{m \times N}$. 
Assume that we are given a set of $n$ sets $\{\Omega_i\}_{i=1}^n$ in $\{1,\cdots,N\}$ of respective sizes $\{N_i\}_{i=1}^n$ with $N_1 \leq N_2 \leq \cdots N_n$.
Assume in addition that $\cup_{i}\Omega_i = \{1,\cdots,N\}$. 
Moreover, there exists a unique $0\leq k \leq n$ such that $N_k \leq m$ and $N_{k+1} > m$ (with the convention that $N_0 = 0$ and $N_{n+1} > m$). 
Assume that the submatrices $\Abf_{\Omega_i}$, for $1 \leq i \leq k$, have full rank.
If in addition the measurement matrix satisfies a partial robust null space property of order $s$ with respect to every subset $\Omega_i$, $i \geq k+1$, then, the approximation $\widehat{\xbf}$ defined as $\widehat{\xbf}:= S^{-1}\sum_{i=1}^n \widehat{\xbf^{(i)}}$ with $\widehat{\xbf^{(i)}} := \Abf_{\Omega_i}^+ \ybf^{(i)}$, for $1 \leq i \leq k$ and $\widehat{\xbf^{(i)}}$ solutions to the $\lsp^2$ constrained $\lsp^1$ minimization problems~\eqref{eq:bpdn}, for $k +1 \leq i \leq n$ fulfills the following estimate:
\begin{equation}
\|\xbf - \widehat{\xbf}\|_2 \leq \frac{1}{\Mcal}\( \sum_{i=1}^k \| \Abf_{\Omega_i}^+ \ebf^{(i)} \|_2 + \frac{2(1+\rho)}{1-\rho}\sum_{i = k+1}^n \sigma_s(\xbf^{(i)})_1 + \frac{4\tau}{1-\rho}\sum_{i=k+1}^n\|\ebf^{(i)}\|_2 \)
\end{equation}
\end{acorol}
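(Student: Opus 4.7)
The proof is a direct amalgamation of the earlier ingredients, so the plan is to decompose the global reconstruction error into local errors through the fusion frame identity and then handle each of the two regimes (full-rank overdetermined versus compressed-sensing underdetermined) using the bound already available in that regime.

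First, I would write the global error using the fusion identity. Since $\xbf = S^{-1}S\xbf = S^{-1}\sum_{i=1}^n P_i\xbf = S^{-1}\sum_{i=1}^n \xbf^{(i)}$, and $\widehat{\xbf} = S^{-1}\sum_{i=1}^n \widehat{\xbf^{(i)}}$ by definition of the fused estimate, linearity of $S^{-1}$ gives
\begin{equation*}
\widehat{\xbf} - \xbf \;=\; S^{-1}\sum_{i=1}^n \bigl(\widehat{\xbf^{(i)}} - \xbf^{(i)}\bigr).
\end{equation*}
Using $\|S^{-1}\|_{2\to 2} \leq 1/C$ together with the Lemma identifying $C = \Mcal$ for unweighted fusion frame operators of the form $S=\sum_i P_i$, I would then apply the triangle inequality to obtain
\begin{equation*}
\|\widehat{\xbf} - \xbf\|_2 \;\leq\; \frac{1}{\Mcal}\sum_{i=1}^n \bigl\|\widehat{\xbf^{(i)}} - \xbf^{(i)}\bigr\|_2.
\end{equation*}

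Second, I would split the sum at the index $k$ and handle each regime with its appropriate existing estimate. For $1 \leq i \leq k$, the rank hypothesis $N_i \leq m$ together with $\Abf_{\Omega_i}$ having full rank makes $\Abf_{\Omega_i}^+\Abf_{\Omega_i} = \Ibf_{\Omega_i}$, so direct substitution gives $\widehat{\xbf^{(i)}} - \xbf^{(i)} = \Abf_{\Omega_i}^+ \ebf^{(i)}$ and hence the first contribution of the bound. For $k+1 \leq i \leq n$, the partial robust null space property of $\Abf$ with respect to $\Omega_i$ together with the fact that $\widehat{\xbf^{(i)}}$ solves the $\lsp^1$ program~\eqref{eq:bpdn} yields the Lebesgue-type estimate~\eqref{eq:errorBoundMixedLargeSmallRanks}, i.e.
\begin{equation*}
\bigl\|\widehat{\xbf^{(i)}} - \xbf^{(i)}\bigr\|_1 \;\leq\; \frac{2(1+\rho)}{1-\rho}\sigma_s\bigl(\xbf^{(i)}\bigr)_1 + \frac{4\tau}{1-\rho}\bigl\|\ebf^{(i)}\bigr\|_2,
\end{equation*}
which is exactly the second and third contributions. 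Plugging these local bounds into the global estimate and collecting terms gives the claimed inequality.

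The main (minor) subtlety, and the only nontrivial obstacle, is the mixing of norms: the fusion step controls an $\lsp^2$ quantity by $\lsp^2$ local errors, whereas the partial NSP produces an $\lsp^1$ local error. This is handled exactly as in Proposition~\ref{prop:recovery2proj}: since $\|\vbf\|_2 \leq \|\vbf\|_1$ (used as in~\eqref{eq:boundOverUnderRecovery}, which is the $n=2$ analogue of the present statement), the $\lsp^1$ bounds from the partial NSP can be inserted directly into the $\lsp^2$ sum with no loss beyond the constants already displayed. Everything else in the argument is bookkeeping — splitting the index set at $k$, using the explicit formula for the pseudoinverse on the overdetermined blocks, and invoking Theorem 4.19 of~\cite{foucart2013book} in its partial-NSP form on the underdetermined blocks.
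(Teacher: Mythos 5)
Your proposal is correct and follows essentially the same route as the paper, which presents this corollary as a direct combination of Proposition~\ref{prop:recoveryfixrank} (fusion step with $\|S^{-1}\|_{2\to 2}\leq 1/\Mcal$ and pseudo-inverse errors on the full-rank blocks) and the partial robust NSP bound~\eqref{eq:errorBoundMixedLargeSmallRanks} on the underdetermined blocks, with the $\lsp^2\leq\lsp^1$ norm comparison handling the mixed norms exactly as you describe.
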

For simplicity we have assumed that the NSP is valid on every subspaces for the same set of parameters. 
To avoid notational encumbrance we do not write results where the $\rho$, $\tau$, and $s$ may depend on the subspace considered. 

\begin{rmk}
This is a direct application of the previous results. 
It is obtained by recovering every local pieces of information independently from one another.
It is possible to improve these estimates by a sequential procedure where we first estimate the $x_j$ for $j \in \Omega_i$, for some $1 \leq i \leq k$, and then using this reliable estimate to improve the accuracy of the $\lsp^1$ minimization program.
\end{rmk}

A final remark considers the case where the rank of the projection is also random. We can think of the sets $\Omega_i$, for $1 \leq i \leq n$ as Binomial random variables with probability of success $p$. In this case, the rank of the projection $P_i$ is controlled by the expectation of this random variable. An analysis similar to the previous one can be carried over to ensure recovery of any vector with high probability.

\subsection{Numerical examples}
\label{ssec:numerics}
This Section describes numerical recovery results on the case described above. 
We first show some particular examples of recovery, when dealing with dense signals and show that we can break the traditional sparsity limit by adding a few sensors. 
The next example shows the behavior of the quality of the reconstruction when slowly adding sensors. 
Finally, the last subsection validates our Conjecture and show that the noise tends to decrease while adding projections, and that the overall quality of recovery scales linearly with the noise level. 
In another contribution, we have also verified that we can recover a very dense Fourier spectrum, by using our approach with Fourier measurement matrices, see~\cite{ABL2017Conm}. 
\subsubsection{Examples of recovery}

\begin{figure}[htb]
\centering
\subfigure[Large ranks]{
\includegraphics[width=0.30\linewidth]{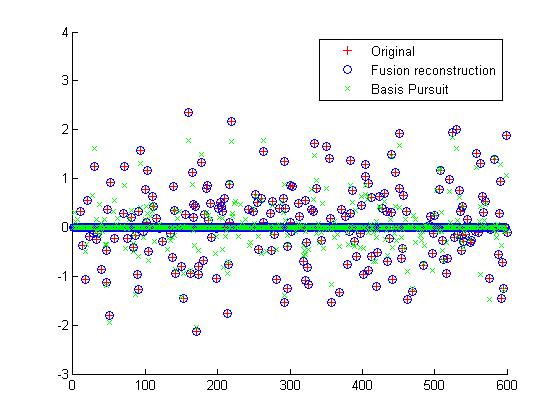}
\label{fig:lRSS}
}
\subfigure[Small ranks, small sparsity]{
\includegraphics[width=0.30\linewidth]{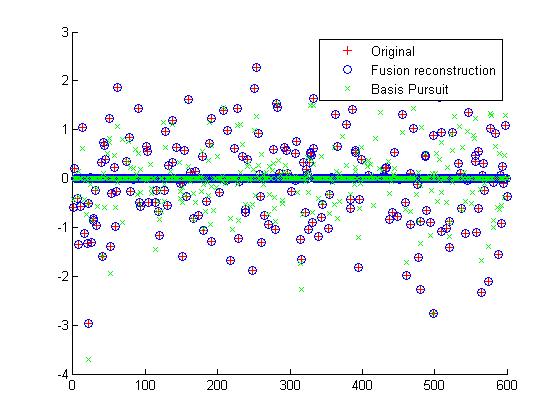}
\label{fig:sRSS}
}
\subfigure[Small ranks large sparsity]{
\includegraphics[width=0.30\linewidth]{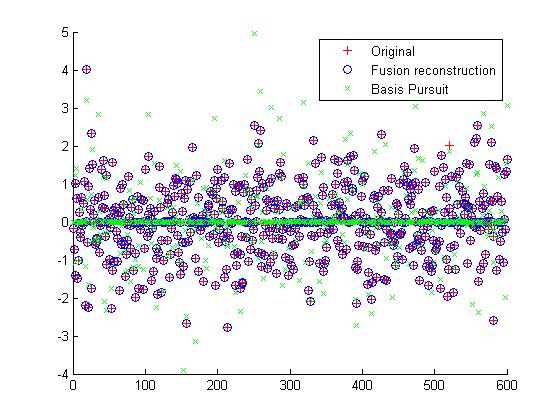}
\label{fig:sRLS}
}
\caption{Examples of fusion reconstruction.}
\label{fig:examples}
\end{figure}

The examples of reconstructions depicted on Figures~\ref{fig:examples} were created by generating $s$-sparse Gaussian vectors at random with $s=200$ for Figures~\ref{fig:lRSS} and~\ref{fig:sRSS}, and $s = 500$ for Figure~\ref{fig:sRLS}. 
For each case, the $250 \times 600$ matrix was generated at random with entries i.i.d. from a Gaussian distribution. 
In the first example, the rank is set to $r = 300 > m$ and hence, $\lsp^1$ minimization (traditional basis pursuit) is used on every subspaces. 
This yields a total of $13$ projections generated at random.
For the two last examples, the ranks of the projections are set to $r = 200 < m$. In this case, only a classical $\lsp^2$ inversion is needed on every subspace. 
Here, since the ranks are smaller, the number of projections has to be increased to $22$ in order to ensure that the whole set $\{1,\cdots, N\}$ is covered with high probability. 
By looking carefully at the last example, one can see that an index with non-zero magnitude has not been selected (around index 520). These cases, however, are rare. 
In every figure, the red '+' crosses represent the true signal, the blue circles represent the reconstructed signal from our fusion approach, and the green 'x' correspond to the reconstruction with traditional basis pursuit. 
While the very last set-up (very high number of non-zero components) is clearly not suitable for usual compressed sensing, the advantage of projections and fusions can be seen even in the first two (where the number of non-zeros remains relatively small). 
Another aspect to look at is that when dealing with small ranks, the solutions to the $\lsp^2$ minimization problems are computed efficiently. As a consequence, even when the number of projections increases, the calculation of the recovered $\widehat{\xbf}$ is still orders of magnitude faster. 
\subsubsection{Recovery and number of projections}
\label{ssec:constraintedProj}
Given any scenario introduced above, it is expected from Conjecture \ref{conj:expectedLowerFramebound} that increasing the number of projections (of a given rank) will increase the lower frame bound~\eqref{eq:lowerFrameBound} at least linearly. 
This linear increase of the lower frame bound compensates for the potential increase in the reconstruction error. 
One could however hope for better results according to the numerical evidence illustrated in Figure~\ref{fig:nbProjections}, where the average case error seems to be reduced as the number of projections increases.
Given a set of $n$ projections generating a fusion frame, the least favorable case, studied in~\cite{eldar2010average} for the case of a single filtering operation, when adding an extra $n$ projections, is when the exact same $\Omega_i$ are repeated.
The exact behavior of the reconstruction error with respect to the number of projections is still under research. 

%
\begin{figure}[htb]
\centering
\includegraphics[width=0.75\linewidth]{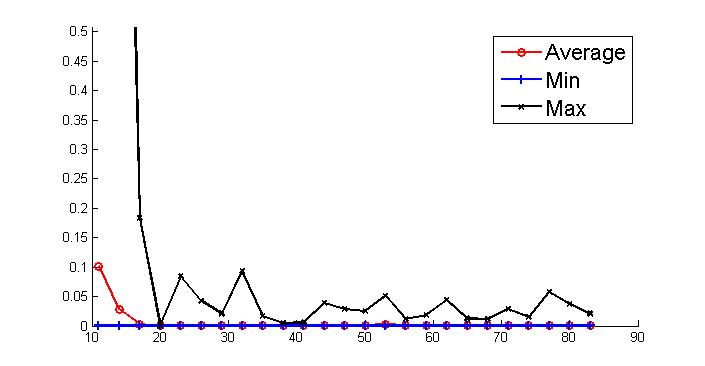}
\caption{$\lsp^2$ error of recovery for various number of projections.}
\label{fig:nbProjections}
\end{figure}
For a total of 10 generations of random Gaussian matrices of size $300 \times 1000$, and $10$ random Gaussian $200$ sparse vectors (note: that is a setting where usual compressed sensing has no recovery guarantees), we let the number of random projections of rank $r = 500$ go from $60\%$ up to $5$ times the minimum given by Equation~\eqref{eq:nbMeasVectors} ($17$ in our setting, for a success of covering the whole input set of $\varepsilon = 99\%$).
As can be seen from the figure, the recovery is unlikely to be correct as long as the number of projections remains small. The trend is generally towards the perfect recovery - though this can only be guarantee with high accuracy, which explains the little spikes. Moreover, while the maximum might be somewhat bigger, it appears that 1) it remains within reasonable bounds (depending on the application) and 2) the average of the error among the $100$ tests per number of projections is very small, suggesting that only a few of the recoveries are off. 
A way to understand this is to consider the case, where we have two projections that are slightly overlapping. Given the global set $\Omega = \{1, \cdots, N\}$, assume we split it into $\Omega_1$ and $\Omega_2$ with some overlap, i.e. $\Omega = \Omega_1 \cup \Omega_2$ and $\Omega_{12} := \Omega_1 \cap \Omega_2 \neq \varnothing$. We independently solve
\begin{align}
\widehat{\xbf^{(i)}} &:= \argmin \|\xbf \|_1 \\ 
							&\text{s.t. } \|\Abf P_i \xbf - \ybf^{(i)}\|_2 \leq \eta_i
\end{align}
for $1 \leq i \leq 2$. For now we consider only the noiseless case, i.e. $\varepsilon_1 = \varepsilon_2 = 0$ which is equivalent to solving two basis pursuit optimization problems. 
We can decompose the solutions recovered as $\widehat{\xbf^{(i)}} = \widehat{\xbf^{(i)}}|_{\Omega_i \backslash \Omega_{12}} + \widehat{\xbf^{(i)}}|_{\Omega_{12}}$. 
If we assume, for simplicity, that the first component is recovered exactly, we have that $\widehat{\xbf^{(1)}} = \xbf^{(1)}$. 
If however the second component is not recovered accurately, we can write $\widehat{\xbf^{(2)}} = \xbf^{(2)} + \ebf^{(2)}$ (for further generalizations, we can always write $\widehat{\xbf^{(i)}} = \xbf^{(i)} + \ebf^{(i)}$, for $1 \leq i \leq n$ with $\ebf^{(i)} = 0$ in case of successful recovery on the set $\Omega_i$). 
In this very simple example, we have $S(\xbf) = \sum_{i=1}^n \xbf^{(i)} = \xbf^{(1)} + \xbf^{(2)}$. 
Similarly, the inverse fusion frame operator can be seen as the point wise empirical average of the evidences (i.e. the local recoveries): 
\begin{equation}
\label{eq:simpleCalculations1}
S^{-1}(\xbf) := \xbf^{(1)}|_{\Omega_1 \backslash \Omega_{12}} + \xbf^{(2)}|_{\Omega_2 \backslash \Omega_{12}} + \frac{\xbf^{(1)}|_{\Omega_{12}} + \xbf^{(2)}|_{\Omega_{12}}}{2}.
\end{equation}
Note that the $2$ in the denominator denotes the number of times that the indices in $\Omega_{12}$ are picked in the creation of the (random) projections.
In other words, increasing the number of projections increases potentially the denominator under a given index and hence reduces the error. 
\begin{equation}
\label{eq:simpleCalculations2}
\widehat{\xbf} - \xbf = S^{-1}\( \widehat{\xbf^{(1)}} + \widehat{\xbf^{(2)}} - \xbf^{(1)} - \xbf^{(2)} \) = \ebf^{(2)}|_{\Omega_2 \backslash \Omega_{12}} + \frac{\ebf^{(2)}|_{\Omega_{12}}}{2}
\end{equation}
As a consequence, the error (on the set $\Omega_{12}$) is bounded by the maximum of the error from every recovery problem divided, as it corresponds to the average error on the domain. 
\subsubsection{Robustness to noise}

\begin{figure}[htb]
\centering
\subfigure[Minimum number of projections]{
\includegraphics[width=0.30\linewidth]{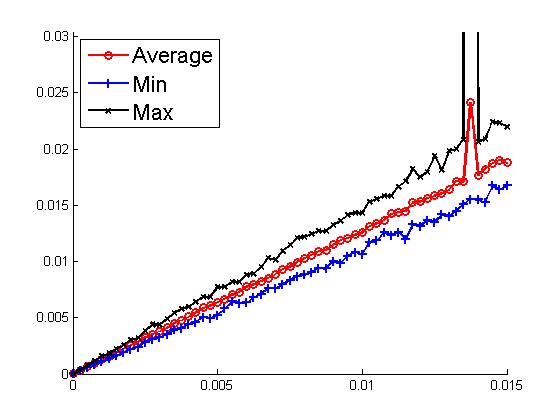}
\label{fig:noise_minProj}
}
\subfigure[Double number of projections]{
\includegraphics[width=0.30\linewidth]{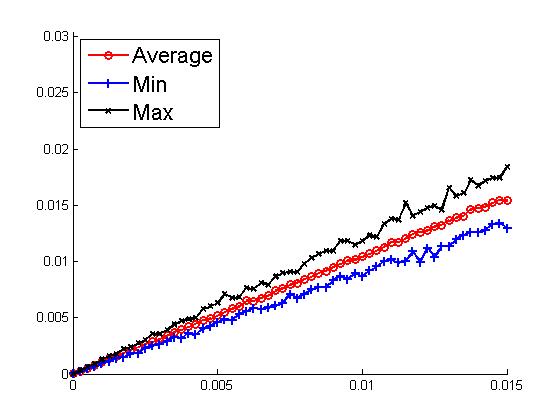}
\label{fig:noise_doubleProj}
}
\subfigure[Basis pursuit denoising]{
\includegraphics[width=0.30\linewidth]{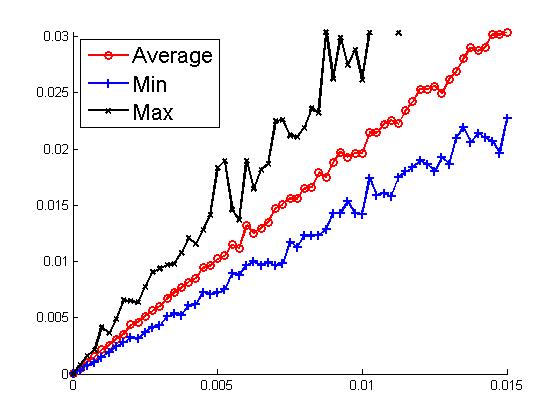}
\label{fig:noise_l1}
}
\caption{Behavior facing noise}
\label{fig:noise}
\end{figure}

In the set of experiments illustrated in Figures~\ref{fig:noise}, we have compared the robustness of our algorithm to traditional basis pursuit denoising when dealing with noise. 
We have set the parameters in a regime where~\eqref{eq:bpdn} works for fair comparison. 
For a $500$ dimensional ambient space and $300$ dimensional measurement space, we generated 5 random $80$-sparse Gaussian vectors for each of the 10 randomly generated Gaussian matrices. 
For every single test, some additive Gaussian noise with a given $\lsp^2$ norm $\theta$ has been added. 
This means that for the case of the distributed approach, $n$ independent noise components of norm $\theta$ have been added in total; one for each channel. 
The figures depict the evolution of the error of the $\lsp^2$ norm when the energy of the noise component increases. 
The black line represents the maximum of the error, the blue one the minimum, and the red one the average over every samples. 
The left graph shows the result when the number of projections is exactly set as in Equation~\eqref{eq:nbMeasVectors} while the second one doubles this number. 
The third figure shows the result when using the usual Basis Pursuit Denoising. 
It is important to notice that the higher peak appearing on the first figure is due to an index with non zero component from the support of the original vector $\xbf$ not being selected at all during the random projections. 
This however does not contradict the high probability of recovery. 
The improvement in the noise behavior from the first figure to the second shows how the fusion frame operator tends to average out the local errors to yield a better estimate (as described in the previous section). 
Finally, all of the algorithms scale linearly with the norm of the noise per measurements (as suggested by Equation~\eqref{eq:errorBoundProp2}) and even if the total noise is increased as the number of projections is increased, the recovery tends to be improved by considering more projections.


\section{Compressed sensing theory with fusion frames}
\label{sec:extensions}

Our results so far ensure that robust and stable recovery of dense signals is possible, by smartly combining local information.
In this section we show  that the ideas developed in the previous Section can also be considered as an extension of the traditional CS.
In particular, it gives solid mathematical foundations to our work. 
It is important to note that similar ideas have been developed in parallel in~\cite{adcock2016CSparallel}. 
There, the authors introduced a model similar to~\eqref{eq:centralReconstruction}, albeit asking that the fusion frame be tight with $C = D= 1$, but reconstruct the whole signal globally, without the use of the fusion process. 
Other authors~\cite{boyer2015compressed} looked at CS with structured acquisition and \correction{structured} sparsity. 
In simple words, they prove that adapting the sampling matrices to some prior knowledge of the sparsity pattern allowed for larger applicability of the CS framework. 

In this section we describe the recovery of a signal $\xbf$ by means of CS in the local subspaces and fusion processing. 
The local pieces of information are computed as solutions to the problems
\begin{equation}
\tag{$\Pcal_{1,\eta}$}
\label{eq:fcsbpdn}
\begin{array}{rl} 
\displaystyle
\min_{\correction{\zbf} \in \Kbb^N} & \|\zbf\|_1 \\ \text{s.t. } & \|\Abf P_i\zbf - \ybf^{(i)}\|_2 \leq \eta_i. \end{array}
\end{equation}

In the noiseless case, the problem is solved by the basis pursuit
\begin{equation}
\tag{$\Pcal_{1,0}$}
\label{eq:fcsbp}
\begin{array}{rl} \min_{\xbf \in \Kbb^N} & \|\zbf\|_1 \\ \text{s.t. } & \Abf P_i\zbf = \ybf^{(i)}. \end{array}
\end{equation}

\begin{alemma}
\label{lemma:suppSolution}
Let $\widehat{\zbf^{(i)}}$ be a solution to the noisy~\eqref{eq:fcsbpdn} or noiseless~\eqref{eq:fcsbp} basis pursuit problem. 
Then $\widehat{\zbf^{(i)}} \in W_i$. 
\end{alemma}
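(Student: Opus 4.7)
The plan is to argue by contradiction, exploiting the fact that the constraint depends on $\zbf$ only through $P_i \zbf$. Since $P_i$ is an orthogonal projection, $P_i^2 = P_i$, so $\Abf P_i \zbf = \Abf P_i(P_i \zbf)$; in particular, replacing any feasible $\zbf$ by $P_i \zbf$ leaves the residual $\Abf P_i \zbf - \ybf^{(i)}$ unchanged, so feasibility is preserved.

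Assuming $\hat{\zbf}$ is a solution to~\eqref{eq:fcsbpdn} with $(I - P_i)\hat{\zbf} \neq 0$, set $\tilde{\zbf} := P_i \hat{\zbf} \in W_i$. The observation above shows $\tilde{\zbf}$ is feasible with the same $\eta_i$ (and, for~\eqref{eq:fcsbp}, $\Abf P_i \tilde{\zbf} = \ybf^{(i)}$ exactly). The target is then to show $\|\tilde{\zbf}\|_1 < \|\hat{\zbf}\|_1$, contradicting the optimality of $\hat{\zbf}$ and forcing $(I - P_i)\hat{\zbf} = 0$, i.e.\ $\hat{\zbf} \in W_i$.

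The crux of the proof, and where I expect the main friction, is the strict $\lsp^1$-contraction $\|P_i \hat{\zbf}\|_1 < \|\hat{\zbf}\|_1$ whenever $(I-P_i)\hat{\zbf} \neq 0$. In the coordinate-projection setting underlying Section~\ref{motivatingExamplesSection} (where $W_i = \{\vbf \in \Kbb^N : v_k = 0 \text{ for } k \notin \Omega_i\}$), $P_i$ merely zeros out the entries indexed by the complement of $\Omega_i$ and the $\lsp^1$ norm decouples cleanly,
\begin{equation*}
\|\hat{\zbf}\|_1 = \|P_i \hat{\zbf}\|_1 + \|(I - P_i)\hat{\zbf}\|_1,
\end{equation*}
yielding strict inequality as soon as $(I - P_i)\hat{\zbf} \neq 0$. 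For a general orthogonal subspace this $\lsp^1$-contraction can fail (arbitrary orthogonal projections need not be $\lsp^1$-contractions), so extending the statement beyond coordinate subspaces would require an additional compatibility assumption aligning $P_i$ with the sparsity basis, such as a block-diagonal structure matching the canonical basis. The same argument works verbatim for~\eqref{eq:fcsbp} by taking $\eta_i = 0$, so both cases are handled simultaneously.
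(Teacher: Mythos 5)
Your argument is essentially the paper's: project the minimizer onto $W_i$, observe that the constraint depends on $\zbf$ only through $P_i\zbf$ so feasibility is preserved, and compare $\lsp^1$ norms — the two proofs coincide in substance. Your caveat is also well taken: the strict contraction $\|P_i\widehat{\zbf}\|_1 < \|\widehat{\zbf}\|_1$ when $(I-P_i)\widehat{\zbf} \neq 0$ is exactly the decoupling $\|\zbf\|_1 = \|P_i\zbf\|_1 + \|(I-P_i)\zbf\|_1$ available for coordinate-subspace projections, and the paper's one-line proof (whose displayed inequality is merely the triangle inequality and does not by itself force $\widehat{\zbf^{(i)}} = P_i\widehat{\zbf^{(i)}}$) implicitly relies on the same fact, so the lemma should indeed be read with $W_i$ aligned with the canonical sparsity basis rather than an arbitrary subspace.
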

\begin{proof}
Let $\widehat{\zbf{(i)}}$ be a solution and let $\tilde{\zbf} = P_i\widehat{\zbf^{(i)}}$. Then $\|\tilde{\zbf}\|_1 \leq \|\widehat{\zbf^{(i)}}\|_1 + \|(I-P_i)(\widehat{\zbf^{(i)}}))\|_1$. From the optimality of $\widehat{\zbf^{(i)}}$ and the admissibility of $\tilde{\zbf}$, it follows that $\|\tilde{\zbf}\|_1 = \|\widehat{\zbf^{(i)}}\|_1$ and $\widehat{\zbf^{(i)}} = P_i(\widehat{\zbf^{(i)}}))$. 
\end{proof}

\subsection{Signal models and recovery conditions}

\subsubsection{Extension of the sparsity model}
The traditional sparsity model is not appropriate in this setting. 
As an example, let us consider that all the $s$ nonzero components of a vector $\xbf$ fall within a certain subspace (say $W_1$), there is, a priori, no hope to improve the recovery performance compared to a single sensor/subspace problem. 
Indeed, in this case, the recovery is ensured (locally) by CS methods if the number of observations $m$ scales as 
\begin{equation*}
m \asymp s\log(N/s).
\end{equation*}
Since we are dealing here only with an identical sensing model, this yields a total number of observations $m_T$ scaling as
\begin{equation*}
m_T \asymp ns\log(N/s).
\end{equation*}
This may be acceptable if we consider only very few subspaces but may explode in certain cases. 
Therefore the model of distributed sparsity is more appropriate. 
\begin{adef}
A signal $\xbf \in \Kbb^N$ is said to be $\sbf = (s_1, \cdots s_n)$-distributed sparse with respect to a fusion frame $\Wcal = (W_i, P_i)_{1\leq i \leq n}$, if $\|P_{i}(\xbf)\|_0 \leq s_i$, for every $1 \leq i \leq n$. 
$\sbf$ is called the \emph{sparsity pattern} of $\xbf$ with respect to $\Wcal$. 
\end{adef}
We denote by $\Sigma_\sbf^{(\Wcal)}$ the set of all $\sbf$-distributed sparse vectors with respect to the family of subspaces $(W_i)_{i}$. 
We let $s = \|\sbf\|_1$ denote the global sparsity of the vector, with respect to $\Wcal$. 
In the case that the sparsity of the signal is uniformly distributed among the subspaces ($s_i = s/n$), the usual CS recovery guarantees ensure us that
\begin{equation*}
m \asymp s_i \log(N/s_i)
\end{equation*}
observations per subspace are required for a stable and robust recovery of the signal. 
This accounts for a total number of measurements scaling as $m_T \asymp n s_i \log(N/s_i) = s \log(N/s_i)$. 
In other words, we are able to recover similar sparsities as in the classical CS framework, while using only \emph{cheap} sensors.
But in opposition to the classical theory most of the computations can be easily carried in a distributed setting, where only pieces of the information are available. 
Only the fusion process requires all the local information to compute the final estimation of a signal. 

Locally, it only requires solving some very small CS system, which can be done faster than solving the original one. 
This is also the findings found in parallel in~\cite{adcock2016CSparallel}, where it is concluded that the number of measurements per sensor decreases linearly with the number of sensor. 
We describe a similar problem, while looking at it from a different perspective. 
In particular, we try to find the sparsity patterns that may be recovered for a given sensor design. 
The motivation for this problem comes from the applications in SAR imaging where the sensor is given and the same everywhere, and where we may not have any control on the number of observations per subspace. 
As it will become useful later, we also need to introduce the local best approximations.
\begin{adef}
Let $\Wcal = (W_i,P_i)_{i=1}^n$ be a fusion frame, and let $\xbf \in \Kbb^N$. 
For $p > 0$ and a sparsity pattern $\sbf = (s_1, \cdots, s_n)$ with $s_i \in \Nbb$ for all $1 \leq i \leq n$, the $\lsp^p$ errors of best $\sbf$-term approximations are defined as the vector
\begin{equation*}
\sigma_{\sbf}^{\Wcal}(\xbf)_p := \( \sigma_{s_1}(P_{1}\xbf), \sigma_{s_2}(P_{2}\xbf), \cdots, \sigma_{s_n}(P_{n}\xbf) \)^T
\end{equation*}
\end{adef}

\subsubsection{Partial properties}
The \emph{null space property} (NSP) has been used throughout the past decade in the CS literature as a necessary and sufficient condition for the sparse recovery problem via~\eqref{eq:generalCS}. 
A matrix $\Abf$ is said to satisfy the (robust) null space property with parameters $\rho \in (0,1)$ and $\tau > 0$ relative to a set $S \subset \{1, \cdots, N\}$ if 
\begin{equation*}
\|\vbf_S\|_1 \leq \rho\|\vbf_{\overline{S}}\|_1 + \tau \|\Abf \vbf\|_{2}, \quad \text{for all } \vbf \in \Kbb^{N}.
\end{equation*}
More generally, we say that the matrix $\Abf$ satisfies the NSP of order $s$ if it satisfies the NSP relative to all sets $S$ such that $|S| \leq s$. 
We extend here this idea to the context of distributed sparsity with respect to fusion frames, as already mentioned in Definition~\ref{def:prnsp}. 
Here we talk about a sparsity pattern and ask that the NSP property be valid for all local subspaces up to a certain (local) sparsity level. 
\begin{adef}[Robust and stable partial null space property (RP-NSP)]
\label{def:rpnsp}
Let $n$ be an integer and $\Wcal = (W_i, P_i)_{i=1}^n$ be a fusion frame for $\Kbb^N$.
Let $\sbf = (s_1,\cdots,s_n)$ be a sequence of non negative numbers representing the sparsity pattern with respect to $\Wcal$.
For a number $q \geq 1$, a sensing matrix $\Abf \in \Kbb^{m \times N}$ is said to satisfy the $\lsp^q$-RP-NSP with pattern $\sbf$ with respect to $\Wcal$ and with constants $\rho_1, \cdots, \rho_n \in (0,1)$ and $\tau_1, \cdots, \tau_n >0$ if 
\begin{equation*}
\|(P_{i}\vbf)_{S_i}\|_q \leq \frac{\rho_i}{s^{1-1/q}}\|(P_{i}\vbf)_{\overline{S_i}}\|_1 + \tau_i \|\Abf \vbf\|_{2}, \quad \text{for all } \vbf \in \Kbb^N, 1 \leq i \leq n, S_i \subset W_i, \text{ and } |S_i| \leq s_i.
\end{equation*} 
\end{adef}
This definition is reminiscent of the work on sparse recovery with partially known support~\cite{bandeira2013partialNSP}.
The difference here being that there is no need to enforce a condition on the vector $\vbf$ to lie in the range of the other subspaces. 
In a sense, this is taken care of by the fusion process and the fact that we have multiple measurement vectors. 

\begin{rmk}
Note that we could simplify the definition by asking that the parameters be uniform and independent of the local subspace. Namely, introducing $\tau := \max_{1 \leq i \leq n} \tau_i$ and $\rho := \max_{1 \leq i \leq n}\rho_i$, the above definition becomes
\begin{equation*}
\left\|\(P_{W_i}\vbf\)_{S_i}\right\|_q \leq \frac{\rho}{s^{1-1/q}}\left\|\(P_{W_i}\vbf\)_{\overline{S_i}}\right\|_1 + \tau \|\Abf \vbf\|_{2}, \quad \text{for all } \vbf \in \Kbb^N, 1 \leq i \leq n, S_i \subset W_i, \text{ and } |S_i| \leq s_i.
\end{equation*} 
\end{rmk}

A stronger, but easier to verify, condition often used as a sufficient recovery condition is the by-now well known \emph{Restricted Isometry Property} (RIP). 
Informally speaking a matrix is said to satisfy the $RIP(s,\delta)$ if it behaves almost like an isometry (up to a constant $\delta$) on every $s$-sparse vector $\vbf \in \Sigma_s$. 
Formally speaking, $\Abf \in \Kbb^{m \times N}$ satisfies $RIP(s,\delta)$, for some $s \geq 2$ and $\delta \in (0,1)$ if 
\begin{equation}
\label{eq:traditionalRIP}
(1-\delta)\|\vbf\|_2^2 \leq \|\Abf\vbf\|_2^2 \leq (1+\delta)\|\vbf\|_2^2, \quad \text{for every } \vbf \in \Sigma_s.
\end{equation}
The lowest $\delta$ satisfying the inequalities is called the \emph{restricted isometry constant}. 
Once again, we want to derive similar properties on our sensing matrix for the distributed sparse signal model.

\begin{adef}[Partial-RIP (P-RIP)] 
\label{def:p-rip}
Let $\Wcal = (W_i,P_i)_{i=1}^n$  be a fusion frame, and let $\Abf \in \Kbb^{m \times N}$. 
Assume that $\Abf$ satisfies the $RIP(s_i,\delta_i)$ on $W_i$, with $\delta_i \in (0,1)$, $i \in I=\{1,\cdots, n\}$.
Then, we say that $\Abf \in \Kbb^{m \times N}$ satisfies the partial RIP with respect to $\Wcal$, with bounds $\delta_1, \cdots, \delta_n$ and sparsity pattern $\sbf =(s_1, \cdots, s_n)$.
\end{adef}
In other words, $\Abf$ satisfies a P-RIP conditions, if it satisfies RIP-like conditions on every subset of vectors in $\operatorname{range}(P_i)$.

\begin{rmk}
This definition is consistent with the definition of the classical RIP in the sense that the case $n = 1$ (only one projection, one subspace) recovers the usual RIP. 
\end{rmk}

The P-RIP can be written in a form similar to the traditional RIP, Equation~\eqref{eq:traditionalRIP}.

\begin{aprop} 
Let   $\Wcal=(W_i)_{i \in I}$ be a fusion frame  (with frame bounds $0 < C \leq D < \infty$). Let  $\Abf \in \Kbb^{m \times N}$   satisfy the P-RIP with respect to $\Wcal$,  with bounds $\delta_1, \cdots, \delta_n$ and sparsity pattern $\sbf =(s_1, \cdots, s_n)$,  and let $C_o= C \min_i \{  1-\delta_i  \}$, $D_o= D\max_i \{ 1+\delta_i \}$. Then, for any $\vbf \in \Kbb^N$,   
\begin{equation*}\label{pr13}
 C_o  \|\vbf \|_2^2 \leq  \sum_{i} \|\Abf\vbf_i \|_2^2 \leq D_o \|  \vbf \|_2^2. 
\end{equation*}
\end{aprop}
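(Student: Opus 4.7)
The plan is to prove both inequalities by applying the P-RIP subspace-by-subspace and then invoking the fusion frame inequality on the resulting sum of $\|P_i \vbf\|_2^2$ terms. I am reading the statement with the (implicit) assumption that $\vbf$ is $\sbf$-distributed sparse with respect to $\Wcal$ — otherwise the P-RIP cannot be applied to the vectors $\vbf_i := P_i \vbf$. Under that reading the argument is essentially a two-step chain.

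First, for each $1 \le i \le n$, the vector $\vbf_i = P_i \vbf \in W_i$ has $\|\vbf_i\|_0 \le s_i$, so Definition~\ref{def:p-rip} together with the usual RIP formulation~\eqref{eq:traditionalRIP} applied on $W_i$ gives
\begin{equation*}
(1-\delta_i)\|\vbf_i\|_2^2 \;\le\; \|\Abf \vbf_i\|_2^2 \;\le\; (1+\delta_i)\|\vbf_i\|_2^2.
\end{equation*}
Summing over $i$ and factoring out the worst constants yields
\begin{equation*}
\min_i(1-\delta_i) \sum_{i} \|\vbf_i\|_2^2 \;\le\; \sum_i \|\Abf \vbf_i\|_2^2 \;\le\; \max_i(1+\delta_i) \sum_{i} \|\vbf_i\|_2^2.
\end{equation*}

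Second, since the $P_i$ are the projections associated with the fusion frame $\Wcal$, the fusion frame inequality~\eqref{eq:ffInequality} supplies
\begin{equation*}
C \|\vbf\|_2^2 \;\le\; \sum_i \|P_i \vbf\|_2^2 \;=\; \sum_i \|\vbf_i\|_2^2 \;\le\; D \|\vbf\|_2^2.
\end{equation*}
Combining the two displays, using the lower bound from the fusion frame on the left and the upper bound on the right, produces
\begin{equation*}
C \min_i(1-\delta_i)\,\|\vbf\|_2^2 \;\le\; \sum_i \|\Abf \vbf_i\|_2^2 \;\le\; D \max_i(1+\delta_i)\,\|\vbf\|_2^2,
\end{equation*}
which is exactly $C_o \|\vbf\|_2^2 \le \sum_i \|\Abf \vbf_i\|_2^2 \le D_o \|\vbf\|_2^2$ with the constants $C_o$ and $D_o$ as defined in the statement.

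There is no real obstacle here: the proof is a straightforward chaining of the two hypotheses, with the only subtle point being the consistency of assumptions. The key modelling question — not really a proof obstacle — is whether $\vbf$ must be $\sbf$-distributed sparse (so that the local $\vbf_i$ fall in the sparse regime where the P-RIP is active) or whether one wishes to state a weaker two-sided estimate holding for all $\vbf \in \Kbb^N$ by restricting the RIP to the best $\sbf$-sparse approximations. I would state the result cleanly under the distributed-sparsity hypothesis; then the above three displays close the argument in essentially three lines.
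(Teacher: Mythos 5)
Your proof is correct and follows essentially the same route as the paper: apply the local RIP bounds to each $\vbf_i = P_i\vbf$, take the worst constants, and chain with the fusion frame inequality \eqref{eq:ffInequality}. Your explicit remark that the P-RIP only applies when each $P_i\vbf$ is $s_i$-sparse is a fair observation — the paper's statement says ``for any $\vbf\in\Kbb^N$'' while its proof silently makes the same implicit assumption you flag.
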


\begin{proof}
Using the fusion frame inequality, and inequalities \eqref{eq:traditionalRIP} for all $i \in I$, we obtain 
  $$C \min_i \{  1-\delta_i  \}   \|\vbf \|_2^2 \leq   \min_i \{  1-\delta_i  \} \sum_{i} \| \vbf_i \|_2^2 \leq   \sum_{i} (1-\delta_i)\|\vbf_i\|_2^2\leq \sum_{i} \|\Abf\vbf_i \|_2^2  $$  $$ \leq    \sum_{i}( 1+\delta_i ) \| \vbf_i \|_2^2 \leq  \max_i \{ 1+\delta_i \}\sum_{i} \| \vbf_i \|_2^2   \leq D\max_i \{ 1+\delta_i \} \|  \vbf \|_2^2.$$
\end{proof}

\begin{atheorem}
\label{HFadaptDRIP}
Let $\varepsilon > 0$. 
Let $\Wcal = (W_i,P_i)_{i=1}^n$ be a fusion frame for $\Kbb^N$, $N \geq 1$. 
Let $\Abf \in \Kbb^{m \times N}$ be a subgaussian matrix with parameters $\beta, k$. 
Then, there exists a constant $C = C_{\beta,k}$ such that the P-RIP constants of $\frac{1}{\sqrt{m}}\Abf$ satisfy $\delta_{s_i} \leq \delta_i$, for $1 \leq i \leq n$ with probability at least $1-\varepsilon$, provided
\begin{equation*}
m \geq C \min(\delta)^{-2}\( \max(s_i)\ln(eN/\max(s_i)) + \ln(2\varepsilon n) \)
\end{equation*}
\end{atheorem}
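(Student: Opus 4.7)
The strategy is to reduce this statement to $n$ applications of the classical RIP bound for subgaussian matrices, combined via a union bound. The key observation is that a vector $\vbf \in W_i$ with $\|\vbf\|_0 \leq s_i$ is, in particular, an $s_i$-sparse vector in $\Kbb^N$; so if $\frac{1}{\sqrt m}\Abf$ satisfies the usual $\operatorname{RIP}(s_i,\delta_i)$ on $\Kbb^N$, then it automatically satisfies the RIP of Definition~\ref{def:p-rip} on $W_i$. Hence no intrinsic new concentration argument is needed: we only reuse results already available in the literature and carefully track constants.

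Concretely, I would proceed as follows. First, recall the classical subgaussian RIP bound (e.g.\ \cite{baraniuk2008RIP}, or~\cite[Theorem 9.11]{foucart2013book}): there is a constant $C=C_{\beta,k}$ depending only on the subgaussian parameters such that, for any $s \in \{1,\dots,N\}$, any $\delta \in (0,1)$, and any $\varepsilon' \in (0,1)$, the normalized matrix $\frac{1}{\sqrt m}\Abf$ satisfies $\operatorname{RIP}(s,\delta)$ with probability at least $1-\varepsilon'$ provided
\begin{equation*}
m \;\geq\; C\,\delta^{-2}\bigl(s\ln(eN/s) + \ln(2/\varepsilon')\bigr).
\end{equation*}
Next, fix $1\leq i\leq n$ and apply this estimate with $s=s_i$, $\delta=\delta_i$, and $\varepsilon'=\varepsilon/n$. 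Denote by $\mathcal{E}_i$ the event that $\frac{1}{\sqrt m}\Abf$ fails to satisfy $\operatorname{RIP}(s_i,\delta_i)$. Then $\Pbb[\mathcal{E}_i]\leq \varepsilon/n$ whenever
\begin{equation*}
m \;\geq\; C\,\delta_i^{-2}\bigl(s_i\ln(eN/s_i) + \ln(2n/\varepsilon)\bigr).
\end{equation*}

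Now I would perform a uniformization and a union bound. Since $\delta_i^{-2}\leq \min_j(\delta_j)^{-2}$ and, because $s \mapsto s\ln(eN/s)$ is increasing on $\{1,\dots,N\}$, $s_i\ln(eN/s_i)\leq \max_j(s_j)\ln(eN/\max_j(s_j))$, the hypothesis
\begin{equation*}
m \;\geq\; C\,\min_i(\delta_i)^{-2}\bigl(\max_i(s_i)\ln(eN/\max_i(s_i)) + \ln(2n/\varepsilon)\bigr)
\end{equation*}
implies the per-subspace requirement simultaneously for all $i$. A union bound over the $n$ events gives $\Pbb\bigl[\bigcup_{i=1}^n \mathcal{E}_i\bigr]\leq n\cdot(\varepsilon/n)=\varepsilon$, so with probability at least $1-\varepsilon$ we have $\delta_{s_i}\leq \delta_i$ for every $1\leq i\leq n$, which is exactly the Partial-RIP condition of Definition~\ref{def:p-rip}.

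There is no substantial obstacle: the heavy lifting (concentration of measure for subgaussian matrices plus a covering argument on the sphere of sparse vectors) is entirely encapsulated in the classical RIP bound. The only things to verify carefully are (i) the monotonicity of $s\mapsto s\ln(eN/s)$ on the relevant range so that replacing $s_i$ by $\max_i s_i$ only strengthens the conclusion, and (ii) the union bound constant, which is responsible for the $\ln(2n/\varepsilon)$ term appearing in the statement (the paper's $\ln(2\varepsilon n)$ should be read as $\ln(2n/\varepsilon)$). No structural assumption on $\Wcal$ beyond it being a fusion frame enters; in fact the fusion frame bounds $C,D$ play no role at this stage, since each local RIP condition is checked against all of $\Kbb^N$.
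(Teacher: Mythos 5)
Your proof is correct and follows essentially the same route as the paper: the classical subgaussian RIP bound is applied locally to each subspace with sparsity $s_i$, constant $\delta_i$ and failure probability $\varepsilon/n$, the per-subspace requirements are uniformized via the monotonicity of $s \mapsto s\ln(eN/s)$ and by replacing $\delta_i$ with $\min_j \delta_j$, and a union bound over the $n$ events yields the claim. Your reading of the paper's $\ln(2\varepsilon n)$ as $\ln(2n/\varepsilon)$ is also consistent with what the paper's own argument (choosing $\varepsilon_i = \varepsilon/n$ in $\ln(2\varepsilon_i^{-1})$) actually produces.
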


Before we prove this result, we recall a standard RIP result for subgaussian matrices (Theorem 9.2 in \cite{foucart2013book}):
\begin{atheorem}\label{HF} Let $\Abf$  be an $m\times N$ subgaussian random matrix. Then there exists a constant $C>0$ (depending only on subgaussian parameters $\beta$, $k$) such that the RIP constant of $\frac{1}{\sqrt{m}} \Abf$ satisfies $\delta_s \leq \delta$ with probability at least $1-\varepsilon$, if 
\[ m \geq C \delta^{-2} \(s \ln(eN/s) +\ln(2\epsilon^{-1})\).\]
\end{atheorem}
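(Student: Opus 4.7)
The plan is to follow the standard three-ingredient argument for RIP of subgaussian matrices: (i) a concentration inequality for a single fixed vector, (ii) a covering/net argument to pass from a single vector to an entire $s$-dimensional subspace, and (iii) a union bound over the $\binom{N}{s}$ choices of support. The goal is to bound the number of rows $m$ so that with probability at least $1-\varepsilon$ the normalized matrix $\frac{1}{\sqrt m}\Abf$ satisfies $(1-\delta)\|\xbf\|_2^2 \le \frac{1}{m}\|\Abf\xbf\|_2^2 \le (1+\delta)\|\xbf\|_2^2$ simultaneously for every $s$-sparse vector $\xbf$.

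\emph{Step 1 (single-vector concentration).} Fix $\xbf \in \Kbb^N$ with $\|\xbf\|_2=1$. Since each row of $\Abf$ is an independent subgaussian vector with parameters $\beta,k$, the random variable $\|\Abf\xbf\|_2^2$ is a sum of $m$ i.i.d.\ subexponential variables with mean depending only on $k$ (after normalization by $k$). A Bernstein-type inequality for subexponential sums yields
\begin{equation*}
\Pbb\!\left[\, \left|\tfrac{1}{m}\|\Abf\xbf\|_2^2 - 1\right| \ge t \,\right] \;\le\; 2\exp(-c\, m\, t^2), \qquad t\in(0,1),
\end{equation*}
for a constant $c>0$ depending only on $\beta,k$. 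This is the scalar concentration step; I would cite the standard subgaussian Hanson--Wright / Bernstein estimate rather than reprove it.

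\emph{Step 2 (net on a single support).} Fix a support $S\subset\{1,\dots,N\}$ with $|S|=s$ and let $\Sigma_S$ denote the corresponding $s$-dimensional coordinate subspace. By a standard volumetric argument, its unit sphere admits a $\tfrac14$-net $\mathcal{N}_S$ with $|\mathcal{N}_S|\le 9^s$. Applying Step 1 to each $\ubf\in\mathcal{N}_S$ with $t=\delta/2$ and taking a union bound gives that, with probability at least $1-2\cdot 9^s\exp(-c\, m\, \delta^2/4)$, every $\ubf\in\mathcal{N}_S$ satisfies $\bigl|\tfrac{1}{m}\|\Abf\ubf\|_2^2-1\bigr|\le \delta/2$. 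The standard net-to-sphere lemma (bounding $\sup_{\|\xbf\|_2=1,\,\mathrm{supp}(\xbf)\subset S}\bigl|\tfrac{1}{m}\|\Abf\xbf\|_2^2-1\bigr|$ by $2$ times its value on a $\tfrac14$-net) then upgrades this to $\bigl|\tfrac{1}{m}\|\Abf\xbf\|_2^2-1\bigr|\le \delta$ uniformly over unit vectors supported on $S$.

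\emph{Step 3 (union bound over supports and rearranging).} There are $\binom{N}{s}\le(eN/s)^s$ supports of size $s$. Taking a union bound over all such $S$, the failure probability is bounded by
\begin{equation*}
2\,(eN/s)^s\cdot 9^s\cdot\exp\!\bigl(-c\, m\, \delta^2/4\bigr)\;=\;2\exp\!\Bigl(s\ln(9eN/s)-c\, m\, \delta^2/4\Bigr).
\end{equation*}
Requiring this to be at most $\varepsilon$ and solving for $m$ gives a bound of the form $m \ge C\delta^{-2}\bigl(s\ln(eN/s)+\ln(2\varepsilon^{-1})\bigr)$ for a constant $C$ absorbing the $\ln 9$ factor and $1/c$, which depends only on $\beta,k$. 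This is the claimed condition.

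The only genuine subtlety is Step 1, namely getting the subgaussian-to-subexponential passage with the correct constants depending purely on $\beta,k$; everything else (covering number $9^s$, the $\binom{N}{s}\le(eN/s)^s$ bound, and the net-to-sphere $\tfrac14$-argument) is routine. Since the excerpt explicitly cites this result as Theorem~9.2 of \cite{foucart2013book}, I would in the write-up defer the concentration inequality to that reference and present Steps 2--3 as the combinatorial skeleton of the proof.
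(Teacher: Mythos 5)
The paper does not actually prove this statement: it is quoted verbatim as Theorem 9.2 of \cite{foucart2013book} and used as a black box in the proof of Theorem~\ref{HFadaptDRIP}, so there is no internal proof to compare against. Your three-step argument --- a Bernstein-type concentration bound for $\tfrac1m\|\Abf\xbf\|_2^2$ at a fixed unit vector, a $\tfrac14$-net of the unit sphere of each $s$-dimensional coordinate subspace with the standard factor-two upgrade from the net to the whole sphere, and a union bound over the $\binom{N}{s}\le (eN/s)^s$ supports --- is the standard and correct derivation of exactly this bound (it is essentially the proof in \cite{baraniuk2008RIP}, which the paper also cites), and your bookkeeping in Step 3 correctly absorbs the $\ln 9$ and $4/c$ factors into $C$ using $\ln(eN/s)\ge 1$. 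The only point worth making explicit in a write-up is that Step 1 requires the rows of $\Abf$ to be isotropic (unit-variance entries), which is part of what the subgaussian parameters $\beta,k$ encode in the cited reference; with that understood, the proposal is complete.
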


\begin{proof}(of Theorem~\ref{HFadaptDRIP})

 For some constants $\delta_i \in (0,1)$, let $E$ be the event ``$\frac{1}{\sqrt{m}}\Abf$ does not satisfy P-RIP with respect to $\Wcal$ with constants $\delta_1, \cdots, \delta_n$''.
Applying a union bound it follows that
\begin{equation*}
\Pbb(E) = \Pbb(\exists i \in \{1, \cdots, n\}: \delta_{s_i} > \delta_i) \leq \sum_{i = 1}^n \Pbb(\delta_{s_i} > \delta_i) = \sum_{i = 1}^n \varepsilon_i
\end{equation*}

 For some $\varepsilon_i$ such that $\varepsilon_1 + \cdots + \varepsilon_n = \varepsilon \in (0,1)$,
 since the sensing matrix is the same for every sensors, there exists a unique $C > 0$ (depending on $\beta,k$) such that $\frac{1}{\sqrt{m}}\Abf$ satisfies the RIP locally within the subset $W_i$ and for a sparsity $s_i$ with constant $\delta_{s_i} \leq \delta_i$, provided that $m \geq  \max_{1\leq i \leq n} m_i$, with 
 \begin{equation*}\label{condGausrip}
   m_i \geq  C \delta_i^{-2} (s_i \ln(eN/s_i) +\ln(2\epsilon_i^{-1}) ).
 \end{equation*}  
 Additionally, since the function $s \to s\ln(eN/s)$ is monotonically increasing on $(0,N)$ and $s \ll N$, 
 $\max_{1 \leq i \leq n}m_i \geq C \max_{1 \leq i \leq n}\delta_i^{-2}\( \max_{1 \leq i \leq n}s_i\ln(eN/\max_{1 \leq i \leq n}s_i) + \ln(2\varepsilon_i^{-1}) \)$.
 Considering $\varepsilon_1 = \cdots = \varepsilon_n = \varepsilon/n$ concludes the proof.
\end{proof}
Note: all Gaussian and Bernoulli random matrices are subgaussian random matrices, so Theorem~\ref{HFadaptDRIP}  holds true for  Gaussian and Bernoulli random matrices.

\subsection{Recovery in general fusion frames settings}
With the tools introduced above, we show that any signals with sparsity pattern $\sbf$ can be recovered in a stable and robust manner via the fusion frame approach described in the previous sections. 

\subsubsection{RP-NSP based results}
Our first recovery guarantee results generalizes Thm. 3.9 from~\cite{ABL2017Conm} and is based on the robust partial NSP, introduced in Definition~\ref{def:rpnsp}.

\begin{atheorem}
\label{thm:rdnspRecovery}
Let $\Abf \in \Kbb^{m \times N}$ and $\Wcal = \( W_i, P_i \)_{i=1}^n$ a fusion frame with frame bounds $0 < C \leq D < \infty$ and frame operator $S$. 
Let $\left(\ybf^{(i)}\right)_{i=1}^n$ be the linear measurements $\ybf^{(i)} = \Abf P_i \xbf + \ebf^{(i)}$, $1\leq i \leq n$ for some noise vectors $\ebf^{(i)}$ such that $\|\ebf^{(i)}\|_2 \leq \eta_i$. 
Denote by $\widehat{\xbf^{(i)}}$ the solution to the local Basis Pursuit problems~\eqref{eq:fcsbpdn} and let $\widehat{\xbf} = S^{-1} \sum_i \widehat{\xbf^{(i)}}$.
If the matrix $\Abf$ satisfies the $\lsp^1$-RP-NSP with sparsity pattern $\sbf = (s_1,\cdots,s_n)$ with constants $0 < \rho_1, \cdots, \rho_n < 1$ and $\tau_1, \cdots, \tau_n > 0$ with respect to $\Wcal$, then the estimation $\widehat{\xbf}$ approximates $\xbf$ in the following sense: 
\begin{equation}
\label{eq:rdnspRecovery}
\|\widehat{\xbf} - \xbf \|_2 \leq \frac{2}{C}\( \langle \vec{\rho}, \sigma_{\sbf}^{\Wcal}(\xbf)_1 \rangle + \langle \vec{\tau}, \vec{\eta} \rangle\),
\end{equation}
where $\vec{\rho} = \(\frac{1+\rho_i}{1-\rho_i}\)_{i=1}^n$, $\vec{\tau} = \(\frac{2 \tau_i}{1-\rho_i}\)_{i=1}^n$, and $\vec{\eta} = \( \eta_i \)_{i=1}^n$.
\end{atheorem}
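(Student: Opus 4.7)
The plan is to reduce the problem to the usual single-sensor $\ell^1$ recovery analysis applied independently on each subspace $W_i$, then stitch the pieces together through the inverse fusion frame operator.

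First I would set up the local error vector. Fix $i$ and let $\vbf_i := \widehat{\xbf^{(i)}} - P_i \xbf$. Since $P_i\xbf$ itself is feasible for problem~\eqref{eq:fcsbpdn} (indeed $\Abf P_i(P_i\xbf) = \Abf P_i\xbf$ and $\|\Abf P_i\xbf - \ybf^{(i)}\|_2 = \|\ebf^{(i)}\|_2 \le \eta_i$), the optimality of $\widehat{\xbf^{(i)}}$ yields $\|\widehat{\xbf^{(i)}}\|_1 \le \|P_i\xbf\|_1$. By Lemma~\ref{lemma:suppSolution}, $\widehat{\xbf^{(i)}} \in W_i$, so $P_i \vbf_i = \vbf_i$. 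The triangle inequality together with the two feasibility constraints also gives $\|\Abf \vbf_i\|_2 \le 2\eta_i$.

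Next I would derive the per-subspace $\ell^1$ bound
\begin{equation*}
\|\vbf_i\|_1 \;\le\; \frac{2(1+\rho_i)}{1-\rho_i}\,\sigma_{s_i}(P_i\xbf)_1 \;+\; \frac{4\tau_i}{1-\rho_i}\,\eta_i,
\end{equation*}
by replaying the standard Foucart--Rauhut argument (cf.~Theorem 4.19 in~\cite{foucart2013book}) in the RP-NSP setting. Concretely, let $S_i$ index the $s_i$ largest-magnitude entries of $P_i\xbf$. From $\|\widehat{\xbf^{(i)}}\|_1 \le \|P_i\xbf\|_1$ and two applications of the triangle inequality on $S_i$ and $\overline{S_i}$ one gets $\|\vbf_i|_{\overline{S_i}}\|_1 - \|\vbf_i|_{S_i}\|_1 \le 2\sigma_{s_i}(P_i\xbf)_1$. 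The $\ell^1$-RP-NSP applied to $\vbf_i$ (using $P_i\vbf_i = \vbf_i$ and $|S_i| \le s_i$) gives $\|\vbf_i|_{S_i}\|_1 \le \rho_i\|\vbf_i|_{\overline{S_i}}\|_1 + \tau_i\|\Abf\vbf_i\|_2$; chaining the two inequalities and substituting $\|\Abf\vbf_i\|_2 \le 2\eta_i$ yields the claimed local bound.

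Finally I would fuse the subspace estimates. Writing $\xbf = S^{-1}\sum_i P_i\xbf$ and $\widehat{\xbf} = S^{-1}\sum_i \widehat{\xbf^{(i)}}$, we have $\xbf - \widehat{\xbf} = -S^{-1}\sum_i \vbf_i$. Since $S \succeq C\,\Ibf$ we have $\|S^{-1}\|_{2\to 2} \le 1/C$, hence
\begin{equation*}
\|\xbf-\widehat{\xbf}\|_2 \;\le\; \tfrac{1}{C}\Bigl\|\sum_i \vbf_i\Bigr\|_2 \;\le\; \tfrac{1}{C}\sum_i \|\vbf_i\|_2 \;\le\; \tfrac{1}{C}\sum_i \|\vbf_i\|_1,
\end{equation*}
where the last inequality uses the elementary bound $\|\cdot\|_2 \le \|\cdot\|_1$. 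Summing the per-subspace bound and recognizing the two inner products gives precisely $\frac{2}{C}\bigl(\langle \vec{\rho}, \sigma_{\sbf}^{\Wcal}(\xbf)_1\rangle + \langle \vec{\tau}, \vec{\eta}\rangle\bigr)$.

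The only step requiring real care is the derivation of the per-subspace $\ell^1$ bound: it is structurally a copy of the classical proof, but one must verify that the RP-NSP applies to $\vbf_i$ (which requires $\vbf_i \in W_i$, hence the appeal to Lemma~\ref{lemma:suppSolution}) and that the $\ell^1$-norm majorant of the Euclidean norm in the fusion step is not too wasteful. The wastefulness is unavoidable here because the hypothesis is only the $\ell^1$-version of the partial NSP; sharper control on $\|\vbf_i\|_2$ would require an $\ell^2$-RP-NSP, which the statement does not assume.
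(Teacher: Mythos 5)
Your proposal is correct and follows essentially the same route as the paper's proof: a per-subspace $\lsp^1$ error bound obtained by replaying the Foucart--Rauhut NSP argument on $\vbf_i = \widehat{\xbf^{(i)}} - P_i\xbf$ (using Lemma~\ref{lemma:suppSolution} to place $\vbf_i$ in $W_i$), followed by fusion through $S^{-1}$ with $\|S^{-1}\|_{2\to 2} \leq 1/C$ and the bound $\|\cdot\|_2 \leq \|\cdot\|_1$. Your constant bookkeeping is in fact slightly cleaner than the paper's intermediate display and lands exactly on the stated bound.
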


\begin{proof}
The solution is given by the fusion process $\widehat{\xbf} = S^{-1}\( \sum_{i=1}^n \widehat{\xbf^{(i)}} \)$ with $\widehat{\xbf^{(i)}}$ the solutions to the local problems~\eqref{eq:fcsbpdn}. 
It holds
\begin{equation*}
\|\xbf - \widehat{\xbf}\|_2 = \left\|S^{-1}\(\sum_{i = 1}^nP_{i}\xbf - \sum_{i = 1}^n\widehat{\xbf^{(i)}}\)\right\|_2 \leq C^{-1}\sum_{i = 1}^n \left\|P_{i}\xbf - \widehat{\xbf^{(i)}}\right\|_2 \leq C^{-1}\sum_{i = 1}^n \left\|P_{i}\xbf - \widehat{\xbf^{(i)}}\right\|_1.
\end{equation*}
For a particular $i \in \{1, \cdots, n\}$, we estimate the error on the subspace $W_i$ in the $\lsp^1$ sense. 
We follow the proof techniques from~\cite[Section 4.3]{foucart2013book} with the adequate changes. 
With $\vbf := P_i\xbf - \widehat{\xbf^{(i)}}$ and $S_i \subset W_i$ the set of best $s_i$ components of $\xbf$ supported on $W_i$, the $\lsp^1$-RP-NSP yields 
\begin{equation*}
\|(P_{i}\vbf)_{S_i}\|_1 \leq \rho_i\left\|\(P_{i}\vbf\)_{\overline{S_i}}\right\|_1 + \tau_i\|\Abf \vbf\|_2.
\end{equation*}
Combining with~\cite[Lemma 4.15]{foucart2013book} stating 
\begin{equation*}
\left\|\(P_{i}\vbf\)_{\overline{S_i}}\right\|_1 \leq \left\|P_{i}\widehat{\xbf^{(i)}}\right\|_1 - \left\|P_{i}\xbf\right\|_1 + \left\|\(P_{i}\vbf\)_{S_i}\right\|_1 + 2 \left\|\(P_{i}\xbf\)_{\overline{S_i}}\right\|_1.
\end{equation*}
we arrive at
\begin{equation*}
(1-\rho_i)\left\|\(P_{i}\vbf\)_{\overline{S_i}}\right\|_1 \leq \|P_{i}\widehat{\xbf^{(i)}}\|_1 - \|P_{i}\xbf\|_1 + 2 \left\|\(P_{i}\xbf\)_{\overline{S_i}}\right\|_1 + \tau_i\|\Abf\vbf\|_2 .
\end{equation*}
Applying once again the $\lsp^1$-RP-NSP, it holds
\begin{align*}
\left\|P_{i}\vbf\right\|_1 &= \left\| (P_{i}\vbf)_{S_i} \right\|_1 + \left\| \(P_{i}\vbf\)_{\overline{S_i}} \right\|_1 \leq \rho_i\left\| \(P_{i}\vbf\)_{\overline{S_i}} \right\|_1 + \tau_i\|\Abf \vbf\|_2 + \left\| \(P_{i}\vbf\)_{\overline{S_i}} \right\|_1 \\ 
 &\leq \(1+\rho_i\)\left\| \(P_{i}\vbf\)_{\overline{S_i}} \right\|_1 + \tau_i\|\Abf\vbf\|_2 \\ 
 &\leq \frac{1+\rho_i}{1-\rho_i}\( \|P_{i}\widehat{\xbf^{(i)}}\|_1 - \|P_{i}\xbf\|_1 + 2 \left\|\(P_{i}\xbf\)_{\overline{S_i}}\right\|_1\) + \frac{4\tau_i}{1-\rho_i}\|\Abf\vbf\|_2. 
\end{align*}
We now remember Lemma~\ref{lemma:suppSolution} and notice that $P_i \widehat{\xbf^{(i)}} = \widehat{\xbf^{(i)}}$. 
$\widehat{\xbf^{(i)}}$ being the optimal solution to~\eqref{eq:fcsbpdn}, it is clear that $\|\widehat{\xbf}\|_1 \leq \|P_{i}\xbf\|_1$ from what we can conclude that
\begin{equation*}
\left\|P_i \xbf - \widehat{\xbf^{(i)}}\right\|_1 = \|P_{i}\vbf\|_1 \leq 2\frac{1+\rho_i}{1-\rho_i}\sigma_{\sbf}^{\Wcal}(\xbf)_{1,i} + \frac{4\tau_i}{1-\rho_i}\|\Abf\vbf\|_2.
\end{equation*}
Summing up the contributions for all $i$ in $\{1, \cdots, n\}$ and applying the inverse frame operator finishes the proof.
\end{proof}

Similarly, assuming $\lsp^q$-RP-NSP, one can adapt the proof techniques from~\cite[Theorems 4.22, 4.25]{foucart2013book} to the local problems. 
This yields the following result
\begin{atheorem}
\label{thm:rdnspRecoveryl2NSP}
Let $\Abf \in \Kbb^{m \times N}$ and $\Wcal = \( W_i, P_i \)_{i=1}^n$ a fusion frame with frame bounds $0 < C \leq D < \infty$ and frame operator $S$. 
Let $\left(\ybf^{(i)}\right)_{i=1}^n$ be the linear measurements $\ybf^{(i)} = \Abf P_i \xbf + \ebf^{(i)}$, $1\leq i \leq n$ for some noise vectors $\ebf^{(i)}$ such that $\|\ebf^{(i)}\|_2 \leq \eta_i$. 
Denote by $\widehat{\xbf^{(i)}}$ the solution to the local Basis Pursuit problems~\eqref{eq:fcsbpdn} and let $\widehat{\xbf} = S^{-1} \sum_i \widehat{\xbf^{(i)}}$.
If the matrix $\Abf$ satisfies the $\lsp^2$-RP-NSP with sparsity pattern $\sbf = (s_1,\cdots,s_n)$ with constants $0 < \rho_1, \cdots, \rho_n < 1$ and $\tau_1, \cdots, \tau_n > 0$ with respect to $\Wcal$, then the estimation $\widehat{\xbf}$ approximates $\xbf$ in the following sense: 
\begin{equation}
\label{eq:rdnspRecoveryl2NSP}
\|\widehat{\xbf} - \xbf \|_p \leq \frac{1}{C}\( \frac{\langle \vec{\rho}, \sigma_{\sbf}^{\Wcal}(\xbf)_1 \rangle}{s^{1-1/p}} + \frac{\langle \vec{\tau}, \vec{\eta} \rangle}{s^{1/2-1/p}}\), \quad 1 \leq p \leq 2,
\end{equation}
where $\vec{\rho} = \(\frac{2(1+\rho_i)^2}{1-\rho_i}\)_{i=1}^n$, $\vec{\tau} = \(\frac{3-\rho_1}{1-\rho_i}\tau_i\)_{i=1}^n$, and $\vec{\eta} = \( \eta_i \)_{i=1}^n$.
\end{atheorem}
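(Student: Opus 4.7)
The plan is to mirror the structure of the proof of Theorem~\ref{thm:rdnspRecovery}: reduce the global $\ell^p$ error to a sum of local $\ell^p$ errors via the fusion frame inverse, and then bound each local error by adapting the standard $\ell^2$-NSP machinery of \cite[Theorems 4.22 and 4.25]{foucart2013book} to the projected setting.

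First, I would use the fusion identity $\widehat{\xbf}-\xbf = S^{-1}\sum_i \bigl(\widehat{\xbf^{(i)}}-P_i\xbf\bigr)$ together with the lower fusion frame bound $S^{-1}\preceq C^{-1}\Ibf$ to get
\begin{equation*}
\|\widehat{\xbf}-\xbf\|_p \;\leq\; \frac{1}{C}\sum_{i=1}^n \bigl\|\widehat{\xbf^{(i)}}-P_i\xbf\bigr\|_p,
\end{equation*}
which is valid for $1\leq p\leq 2$ (taking the $\ell^2\to\ell^p$ triangle estimate inside the operator norm bound). So the problem reduces to controlling $\vbf^{(i)}:=\widehat{\xbf^{(i)}}-P_i\xbf$ in the $\ell^p$ norm for each $i$.

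Next, I would fix $i$ and apply the $\ell^2$-RP-NSP to the vector $\vbf^{(i)}$, using Lemma~\ref{lemma:suppSolution} (so $\vbf^{(i)}\in W_i$ and $P_i\vbf^{(i)}=\vbf^{(i)}$) together with the feasibility of both $\widehat{\xbf^{(i)}}$ and $P_i\xbf$ in~\eqref{eq:fcsbpdn}, which yields $\|\Abf\vbf^{(i)}\|_2\leq 2\eta_i$. The key step is to transport the argument of \cite[Thm. 4.22]{foucart2013book} into our local framework: if $S_i\subset\operatorname{supp}(W_i)$ denotes an index set of size $s_i$ capturing the $s_i$ largest entries of $P_i\xbf$, one first shows, exactly as in the classical proof but with the projection $P_i$ inserted throughout, that the $\ell^2$-RP-NSP upgrades to a restricted $\ell^p$ stability estimate for $1\leq p\leq 2$. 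Combining this with the minimality of $\widehat{\xbf^{(i)}}$ (which gives $\|\widehat{\xbf^{(i)}}\|_1\leq\|P_i\xbf\|_1$ via Lemma~\ref{lemma:suppSolution}) and the Stechkin-type comparison used in \cite[Thm. 4.25]{foucart2013book} provides, after collecting constants,
\begin{equation*}
\bigl\|\vbf^{(i)}\bigr\|_p \;\leq\; \frac{2(1+\rho_i)^2}{1-\rho_i}\,\frac{\sigma_{s_i}(P_i\xbf)_1}{s_i^{1-1/p}} \;+\; \frac{(3-\rho_i)\tau_i}{1-\rho_i}\, s_i^{1/p-1/2}\,\|\Abf\vbf^{(i)}\|_2 .
\end{equation*}
Injecting $\|\Abf\vbf^{(i)}\|_2\leq 2\eta_i$ and summing over $i$, then normalizing by $s=\|\sbf\|_1$ (so that $s_i^{-(1-1/p)}\leq s^{-(1-1/p)}$ and similarly for the noise term), produces exactly the announced bound~\eqref{eq:rdnspRecoveryl2NSP} once we absorb the factor $2$ into the definition of $\vec{\rho},\vec{\tau}$.

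The main obstacle I anticipate is the local adaptation of the Foucart--Rauhut $\ell^p$ argument. In the classical setting the NSP constrains the whole vector, whereas here the RP-NSP only constrains $P_i\vbf$; one needs to verify that every intermediate inequality in Theorems 4.22 and 4.25 of \cite{foucart2013book} can be reproduced after replacing each occurrence of $\vbf$ by $P_i\vbf$, which works precisely because $\widehat{\xbf^{(i)}}\in W_i$ (Lemma~\ref{lemma:suppSolution}) and $P_i\xbf\in W_i$. A secondary delicate point is the normalization by the global sparsity $s$ rather than the local $s_i$, which follows from the monotonicity $s^{-(1-1/p)}\leq s_i^{-(1-1/p)}$ for $s_i\leq s$ and $1\leq p\leq 2$; handling this requires only that the comparison goes in the right direction, but it must be stated carefully to avoid an incorrect bound.
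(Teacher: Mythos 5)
Your overall route is the same one the paper intends: the paper gives no detailed argument for this theorem, only the remark that one should adapt \cite[Theorems 4.22, 4.25]{foucart2013book} to each local problem (using Lemma~\ref{lemma:suppSolution} and the feasibility bound $\|\Abf\vbf^{(i)}\|_2\le 2\eta_i$, exactly as you do) and then fuse via $S^{-1}$ as in Theorem~\ref{thm:rdnspRecovery}. However, two of your steps fail as written. First, your normalization from the local sparsities $s_i$ to the global sparsity $s=\|\sbf\|_1$ goes in the wrong direction: for $1<p\le 2$ the exponent $1-1/p$ is positive, so $s_i\le s$ gives $s_i^{-(1-1/p)}\ \ge\ s^{-(1-1/p)}$, not $\le$ as you claim. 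Consequently the local bounds $\sigma_{s_i}(P_i\xbf)_1/s_i^{1-1/p}$ cannot be upper-bounded by the global-$s$ expression appearing in \eqref{eq:rdnspRecoveryl2NSP}; what this argument honestly delivers is the bound with $s_i^{1-1/p}$ inside the sum (the noise term is unproblematic, since $1/p-1/2\ge 0$ makes the replacement $s_i^{1/p-1/2}\le s^{1/p-1/2}$ legitimate). This looseness is arguably present in the paper's own statement, but you explicitly asserted the reversed inequality and claimed it "produces exactly the announced bound", which it does not.

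Second, your opening reduction $\|\widehat{\xbf}-\xbf\|_p\le C^{-1}\sum_i\|\widehat{\xbf^{(i)}}-P_i\xbf\|_p$ is not justified for $p<2$: the fusion frame bound only gives $\|S^{-1}\|_{2\to 2}\le 1/C$, and the $\ell^p\to\ell^p$ norm of $S^{-1}$ can exceed $1/C$ for a general (non-diagonal) fusion frame operator. In the paper's proof of Theorem~\ref{thm:rdnspRecovery} this is circumvented by first using the $\ell^2$ operator bound and then $\|\cdot\|_2\le\|\cdot\|_1$; for $p\in(1,2]$ the analogous monotonicity $\|\cdot\|_p\ge\|\cdot\|_2$ goes the wrong way, so the parenthetical "$\ell^2\to\ell^p$ triangle estimate inside the operator norm bound" does not close this step. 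You would need either to measure the fused error in $\ell^2$ (keeping the $\ell^p$-type right-hand sides from the local estimates) or to supply an additional argument controlling $S^{-1}$ on $\ell^p$.
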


\subsubsection{P-RIP based recovery}
One can show that the P-RIP is sufficient for stable and robust recovery by combining Theorem~\ref{thm:rdnspRecovery} with the following result, showing the existence of random matrices satisfying the RP-NSP, from an RIP argument.
\begin{atheorem}\label{secondthm}
Let $\Abf \in \Kbb^{m \times N}$ be a matrix satisfying the P-RIP($2\sbf$,$\delta$), with $\sbf = (s_1,\cdots,s_n)$ and $\delta = (\delta_1,\cdots,\delta_n)$ and $\delta_i < 4/\sqrt{41}$, for all $1 \leq i \leq n$. 
Then, $\Abf$ satisfies the $\lsp^2$-RP-NSP with constants $(\rho_i,\tau_i)_{i=1}^n$ where 
\begin{equation}
\label{eq:RIPtoNSP}
\begin{array}{l}
\rho_i := \frac{\delta_i}{\sqrt{1-\delta_i^2} - \delta_i/4} < 1 \\
\tau_i := \frac{\sqrt{1+\delta_i}}{\sqrt{1-\delta_i^2} - \delta_i/4}.
\end{array}
\end{equation}
\end{atheorem}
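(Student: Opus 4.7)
The plan is to reduce Theorem~\ref{secondthm} to $n$ independent instances of the classical fact that $\operatorname{RIP}(2s,\delta)$ with $\delta<4/\sqrt{41}$ implies the $\lsp^2$-robust null space property with parameters $\rho=\delta/(\sqrt{1-\delta^2}-\delta/4)$ and $\tau=\sqrt{1+\delta}/(\sqrt{1-\delta^2}-\delta/4)$ (this is Theorem~6.13 in Foucart--Rauhut). The P-RIP hypothesis is precisely that $\Abf$ satisfies $\operatorname{RIP}(2s_i,\delta_i)$ on $W_i$, so running the classical argument separately on each subspace should deliver the $\lsp^2$-RP-NSP in the form required, with constants as in~\eqref{eq:RIPtoNSP}.

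Fixing $i$, I would set $\ubf:=P_i\vbf\in W_i$ and pick $S_i$ of size at most $s_i$ inside the index set of $W_i$. The main technical step is the shifted-inequality argument: partition $\overline{S_i}$ (within $W_i$) into consecutive blocks $T_1,T_2,\ldots$ of size $s_i$ ordered by decreasing magnitude of the entries of $\ubf$. Monotonicity yields $\|\ubf_{T_{k+1}}\|_2\leq\|\ubf_{T_k}\|_1/\sqrt{s_i}$, so $\sum_{k\geq 2}\|\ubf_{T_k}\|_2\leq\|\ubf_{\overline{S_i}}\|_1/\sqrt{s_i}$. Applying the local RIP to the $2s_i$-sparse vector $\ubf_{S_i\cup T_1}\in W_i$, expanding
\begin{equation*}
(1-\delta_i)\|\ubf_{S_i\cup T_1}\|_2^2 \leq \|\Abf\ubf_{S_i\cup T_1}\|_2^2 = \langle\Abf\ubf_{S_i\cup T_1},\,\Abf\ubf\rangle - \sum_{k\geq 2}\langle\Abf\ubf_{S_i\cup T_1},\Abf\ubf_{T_k}\rangle,
\end{equation*}
and controlling the cross terms via the polarized form of RIP (which is where the sharp $1/4$ factor arises, rather than a naive $1$) should lead to
\begin{equation*}
\sqrt{1-\delta_i^2}\,\|\ubf_{S_i}\|_2 \leq \frac{\delta_i}{4}\,\frac{\|\ubf_{\overline{S_i}}\|_1}{\sqrt{s_i}} + \sqrt{1+\delta_i}\,\|\Abf\ubf\|_2.
\end{equation*}
Dividing by $\sqrt{1-\delta_i^2}-\delta_i/4$, which is positive precisely when $\delta_i<4/\sqrt{41}$ (the threshold comes from requiring $\rho_i<1$, equivalently $5\delta_i/4<\sqrt{1-\delta_i^2}$, i.e.\ $\delta_i^2<16/41$), gives $\rho_i$ and $\tau_i$ as claimed in~\eqref{eq:RIPtoNSP}.

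The hardest piece is already handled by the classical polarized-RIP lemma producing the sharp $1/4$ constant; the only verification needed is that this lemma transfers verbatim to the restricted-to-$W_i$ setting, which is immediate because only vectors \emph{supported in} $W_i$ are tested, and the P-RIP supplies exactly that restricted isometry. One subtlety I would flag in passing: the argument naturally produces the inequality with $\|\Abf P_i\vbf\|_2=\|\Abf\ubf\|_2$ on the right, while Definition~\ref{def:rpnsp} is stated with $\|\Abf\vbf\|_2$. The downstream use in Theorem~\ref{thm:rdnspRecovery} applies the RP-NSP only to vectors already in $W_i$ (by Lemma~\ref{lemma:suppSolution}, so $P_i\vbf=\vbf$), and the two expressions coincide on that set; this equivalence should be noted explicitly to reconcile the statement with the proof.
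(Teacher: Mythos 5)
Your proposal matches the paper's proof, which consists precisely of applying the classical RIP-to-$\lsp^2$-robust-NSP implication (Theorem 6.13 of Foucart--Rauhut) independently on each subspace $W_i$; your unpacking of the shifted-block and polarized-RIP argument is just the interior of that cited theorem. Your closing remark about $\|\Abf P_i\vbf\|_2$ versus $\|\Abf\vbf\|_2$ is a fair observation the paper glosses over, and as you note it is harmless downstream since Theorem~\ref{thm:rdnspRecovery} only invokes the property on vectors already lying in $W_i$.
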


\begin{proof}
The proof of this results consists in simply applying~\cite[Theorem 6.13]{foucart2013book} to every subspaces independently.
\end{proof}

From this result and Theorem~\ref{thm:rdnspRecoveryl2NSP}, it follows.
\begin{atheorem}
\label{thm:fCS_RIPresults}
Let $\Wcal = (W_i,P_i)_{i=1}^n$ be a fusion frame for $\Kbb^N$ with frame operator $S$ and frame bounds $0 < C \leq D < \infty$. 
Let $\Abf \in \Kbb^{m \times N}$ be a matrix satisfying the P-RIP($2\sbf$, $\delta$) where $\sbf = (s_1, \cdots, s_n$) and $\delta = (\delta_1, \cdots, \delta_n)$ with $\delta_i < 4/\sqrt{41}$, for all $1 \leq i \leq n$. 
Then any distributed-sparse vector $\xbf \in \Sigma_\sbf^{(\Wcal)}$ can be  recovered 
by solving $n$ \eqref{eq:bpdn} problems.

Assuming the noise in each \eqref{eq:bpdn} problem is controlled by  $\|\ebf^{(i)}\|_2 \leq \eta_i$, $1\leq i \leq n$, and set $vec{\eta} = \( \eta_i \)_{i=1}^n$. 
Let $\widehat{\xbf} = S^{-1}\sum \widehat{\xbf^{(i)}}$. Then
 
\begin{equation*}
 \|\widehat{\xbf} - \xbf \|_2 \leq \frac{1}{C}  \sum_{i=1}^n \alpha_i\frac{\sigma_{\sbf}^{\Wcal}(\xbf)_{1,i}}{\sqrt{s_i}} + \beta_i\eta_i
\end{equation*}
where $\alpha_i$ and $\beta_i$ depend only on the RIP constants $\delta_i$.
\end{atheorem}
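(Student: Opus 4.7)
The plan is to combine the two preceding tools in sequence: Theorem~\ref{secondthm}, which converts the P-RIP hypothesis into an $\lsp^2$-RP-NSP statement on each subspace, and then a local version of Theorem~\ref{thm:rdnspRecoveryl2NSP} applied subspace-by-subspace. The fusion bound $\|S^{-1}\|_{2\to 2}\le 1/C$ then assembles the local estimates into the claimed global inequality.

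First I would invoke Lemma~\ref{lemma:suppSolution}, which guarantees that every solution $\widehat{\xbf^{(i)}}$ of~\eqref{eq:fcsbpdn} already lies in $W_i$. This is crucial: it means the analysis of $\widehat{\xbf^{(i)}}$ versus $P_i\xbf$ can be carried out entirely inside the subspace $W_i$, with $\Abf$ restricted to $W_i$. Since the P-RIP of $\Abf$ with respect to $\Wcal$ gives, by definition, the classical $\operatorname{RIP}(2s_i,\delta_i)$ on each $W_i$, Theorem~\ref{secondthm} (whose proof is the usual Foucart-Rauhut Theorem~6.13 applied locally) provides, for each $i$, the robust $\lsp^2$-null space property with constants
\begin{equation*}
\rho_i = \frac{\delta_i}{\sqrt{1-\delta_i^2} - \delta_i/4}<1, \qquad \tau_i = \frac{\sqrt{1+\delta_i}}{\sqrt{1-\delta_i^2} - \delta_i/4},
\end{equation*}
valid for any support set $S_i\subset W_i$ with $|S_i|\le s_i$ and any vector $\vbf\in\Kbb^N$. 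The condition $\delta_i<4/\sqrt{41}$ is precisely what ensures $\rho_i<1$.

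Next I would run the local version of the argument proving Theorem~\ref{thm:rdnspRecoveryl2NSP} on each subspace separately. Setting $\vbf_i := P_i\xbf - \widehat{\xbf^{(i)}}$, both $P_i\xbf$ and $\widehat{\xbf^{(i)}}$ live in $W_i$, so $\vbf_i\in W_i$ and $P_i\vbf_i = \vbf_i$. The optimality of $\widehat{\xbf^{(i)}}$ in~\eqref{eq:fcsbpdn} gives $\|\widehat{\xbf^{(i)}}\|_1\le \|P_i\xbf\|_1$, while admissibility provides $\|\Abf\vbf_i\|_2 \le 2\eta_i$. Combining these with the local $\lsp^2$-RP-NSP and the comparison lemma ($\ell^1$ cone condition on $\vbf_i$ with respect to the best-$s_i$-support of $P_i\xbf$) yields, for $p=2$, a local bound
\begin{equation*}
\|P_i\xbf - \widehat{\xbf^{(i)}}\|_2 \;\le\; \alpha_i\,\frac{\sigma_{s_i}(P_i\xbf)_1}{\sqrt{s_i}} + \beta_i\,\eta_i,
\end{equation*}
with $\alpha_i = \frac{2(1+\rho_i)^2}{1-\rho_i}$ and $\beta_i = \frac{(3-\rho_i)\tau_i}{1-\rho_i}$, hence functions of $\delta_i$ only.

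Finally, the fusion step glues everything together. From $\widehat{\xbf} = S^{-1}\sum_i\widehat{\xbf^{(i)}}$ and $\xbf = S^{-1}\sum_i P_i\xbf$, together with $\|S^{-1}\|_{2\to 2}\le 1/C$ from Definition~\ref{def:fusionframes}, the triangle inequality gives
\begin{equation*}
\|\xbf - \widehat{\xbf}\|_2 \;\le\; \frac{1}{C}\sum_{i=1}^n \|P_i\xbf - \widehat{\xbf^{(i)}}\|_2 \;\le\; \frac{1}{C}\sum_{i=1}^n \left(\alpha_i\,\frac{\sigma_{\sbf}^{\Wcal}(\xbf)_{1,i}}{\sqrt{s_i}} + \beta_i\,\eta_i\right),
\end{equation*}
which is exactly the claimed inequality. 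The main technical obstacle is step two: one must be careful that the passage from the RP-NSP to a per-subspace $\lsp^2$ recovery estimate does not silently invoke the global null space property (which would mix the $n$ subspaces), but only its partial version restricted to $W_i$. This is where Lemma~\ref{lemma:suppSolution} is essential, since it lets us treat each $\vbf_i$ as a vector whose behavior off $W_i$ is irrelevant and whose on-$W_i$ cone inequality matches the classical argument verbatim. The remainder of the constant bookkeeping is routine.
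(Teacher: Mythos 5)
Your proposal is correct and follows essentially the same route as the paper: convert the P-RIP into the $\lsp^2$-RP-NSP subspace-by-subspace via Theorem~\ref{secondthm}, apply the local recovery estimate underlying Theorem~\ref{thm:rdnspRecoveryl2NSP} (with $p=2$, using Lemma~\ref{lemma:suppSolution} and the feasibility bound $\|\Abf\vbf_i\|_2\le 2\eta_i$), and fuse with $\|S^{-1}\|_{2\to2}\le 1/C$. The paper's argument is exactly this composition, so no further comment is needed.
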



 
\section{Local sparsity in general dictionaries and frames}
\label{ShidongSection}

The redundancy inherent to frame structures (and their generalization to frames of subspaces) makes them appealing to signal analysis task. 
So far, we have used the redundancy of the fusion frame process in order to increase the global sparsity of the original vector $\xbf$ as well as increase the robustness to noise. 
We investigate now the use of local dictionaries in order to use the redundancy within subspaces, using the local frames for the representation of the partial information. 
A common scenario in applications is when $f \in \Hcal$ has a sparse frame representation $f=D\xbf$, i.e. $\xbf$ is sparse, and the multiple measurements are given by
\[
  \ybf^{(i)} = \Abf P_i f + \ebf^{(i)} = \Abf P_i D \xbf + \ebf^{(i)}, \ \ \ 1 \leq i \leq n.
\]
Here $P_i$ can be any projection onto a subspace of $\mathcal{H}$. 
In practical applications such as in SAR radar imaging, $P_i$ can just be a projection or spatial filter onto $W_i\equiv \operatorname{span}\{\dbf_k\}_{k\in \Omega_i}$, where $\dbf_k$ is the $k^{th}$ column of $D$.
Such an operation can potentially reduce the number of nonzero entries of $\xbf$ in the $i^{th}$ observation, when the respective column vectors $\dbf_k$ are in $\operatorname{ker} P_i$.
In particular, let us denote by $\Gamma_i\equiv\{k \; | \; k\not\in\Omega_i,\ \dbf_k\in W_i\}$, and $\Lambda_i\equiv \{l \; | \; \dbf_l\in \operatorname{ker} P_i\}$.
\begin{eqnarray*}
P_if &=& P_i D \xbf =P_i\left(\{\dbf_k\}_{k\in\Omega_i}, \{\dbf_j\}_{j\in\Gamma_i}, \{\dbf_l\}_{l\in\Lambda_i}\right) \xbf \\
  &=& \left(\{\dbf_k\}_{k\in\Omega_i}, \{y_j=P_i \dbf_j\}_{j\in\Gamma_i}, \{0's\}_{l\in\Lambda_i}\right)\left(\begin{array}{c}
                                        \xbf_{\Omega_i} \\
                                        \xbf_{\Gamma_i} \\
                                        \xbf_{\Lambda_i}
                                      \end{array}\right) \\
  &=& \left(\{\dbf_k\}_{k\in\Omega_i}, \{y_l=P_i\dbf_l\}_{l\in\Gamma_i} \right)\left(\begin{array}{c}
                    \xbf_{\Omega_i} \\
                    \xbf_{\Gamma_i}
               \end{array}\right)\\
  &=& D_i \xbf^{(i)},
\end{eqnarray*}
where $D_i\equiv \left(\{\dbf_k\}_{k\in\Omega_i}, \{y_k=P_i\dbf_k\}_{k\in\Gamma_i} \right)$, and
\[
\xbf^{(i)}\equiv \left(\begin{array}{c}
                    \xbf_{\Omega_i} \\
                    \xbf_{\Gamma_i}
               \end{array}\right).
\]

As a result, the $i^{th}$ measurement becomes
\begin{equation*}
\label{eqn_Ba}
 \ybf^{(i)} = \Abf D_i \xbf^{(i)} + \ebf^{(i)}, \quad 1 \leq i \leq n,
\end{equation*}
or
\begin{equation*}\label{eqn_Bb}
 \ybf^{(i)} = \Abf f^{(i)} + \ebf^{(i)}, \quad f^{(i)}=D_i\xbf^{(i)}, \quad 1 \leq i \leq n.
\end{equation*}
Note that the first version suggests the use of $\lsp^1$ synthesis methods, while the second one looks at $\lsp^1$ analysis tools. 
$\lsp^1$ synthesis corresponds to the usual sparse recovery, via a dictionary $D$:
\begin{equation*} 
\label{eq:dict_l0}
\min_{\zbf} \|\zbf\|_0, \quad \text{ subject to } \|\Abf D\zbf - \ybf\|_2 \leq \eta.
\end{equation*}
The solution $\widehat{f}$ is later computed as $\widehat{f} = D\widehat{\xbf}$. 
In the $\lsp^1$ analysis approach, we do not care for a particular (sparse) representation of $f$. 
We just ask for this representation to have high fidelity with the data:
\begin{equation}
\label{eq:dict_analysis}
\min_{\gbf} \|D^*\gbf\|_0, \quad \text{ subject to } \|\Abf \gbf - \ybf\|_2 \leq \eta
\end{equation}
Here, $D^*$ denotes the canonical dual frame. 
Both approaches are further detailed in the next sections. 
We comment that if the choice of $P_i$ is allowed, one strategy is again to use random projections by randomly selecting the index set $\Omega_i$ to set the subspaces $W_i=\operatorname{span}\{\dbf_j\}_{j\in\Omega_i}$.

\subsection{Recovery via general \texorpdfstring{$\lsp^1$}{TEXT}-analysis method}
As introduced above, we try to recover $f$ that has {\bf a} sparse representation by solving Problem~\eqref{eq:dict_analysis}. 
While the problem is written in terms of the canonical dual frame, there is no obligation in using this particular dual frame. 
One may instead optimize the dual frame considered and use the sparsity-inducing dual frame \cite{liu2012optimaldual}, computed as part of the optimization problem:
\begin{equation}\label{eqn_B1}
  \widehat{f^{(i)}} =\argmin_{g,\, D\tilde{D}_i^*=\Ibf}\|\tilde D^*_i g\|_1 \quad \text{s.t. } \|\Abf g -\ybf^{(i)}\|_2\le \eta_i, \quad 1 \leq i \leq n.
\end{equation} 
The sparsity-inducing frame $\tilde D_i$ can be uniform across all $i$ but not necessarily.
The following result is known to hold for any dual frame~\cite{liu2012optimaldual}.
\begin{atheorem}\label{thm1}
  Let $D$ be a general frame of $\mathbb{R}^{N}$ with frame bounds $0<A\leq B<\infty$.  Let $\tilde{D}$ be an alternative
  dual frame of $D$ with frame bounds $0<\tilde{A}\leq \tilde{B}<\infty$, and let $\rho=s/b$. Suppose that the matrix $\Abf$ satisfies the following D-RIP condition
\begin{equation}\label{eqn_C1}
     \(1-\sqrt{\rho B \tilde{B}}\)^2 \cdot \delta_{s+a} +
\rho B \tilde{B}\cdot\delta_{b} < 1 - 2\sqrt{\rho B
\tilde{B}}
\end{equation}
for some positive integers $a$ and $b$ satisfying $0< b-a\leq
3a$. Let $\widehat{f}$ be the solution to the typical $\lsp^1$-analysis problem
\[
  \displaystyle \widehat{f} =\argmin_{g}\|\tilde D^* g\|_1 \quad \text{s.t. } \|\Abf g -\ybf\|_2\le \eta.
\]
Then
  \begin{equation}
    \label{eq15} \Vert \widehat{f}-f \Vert_{2} \leq \alpha\eta +
    \beta\frac{\Vert\tilde{D}^{*}f-(\tilde{D}^{*}f)_{s_i}\Vert_{1}}{\sqrt{s}},
  \end{equation}
 where $\alpha$ and $\beta$ are some constants and $(\tilde{D}^{*}f)_{s}$
 denotes the vector consisting the $s$ largest entries in magnitude of
 $\tilde{D}^{*}f$.
\end{atheorem}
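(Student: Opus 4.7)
The plan is to adapt the Candes--Eldar--Needell--Randall template for D-RIP based recovery to the setting of a general (non-canonical) dual frame $\tilde{D}$. First I would introduce the error vector $h := \widehat{f} - f$ and observe that both $f$ and $\widehat{f}$ are feasible for the $\lsp^1$-analysis program, so $\|\Abf h\|_2 \leq 2\eta$. Because $\tilde{D}$ is a dual frame of $D$, the identity $D\tilde{D}^* = \Ibf$ holds on $\Rbb^{N}$, so $h = D\tilde{D}^* h$ and the upper frame bound of $D$ gives $\|h\|_2 \leq \sqrt{B}\,\|\tilde{D}^* h\|_2$. It therefore suffices to bound $\|\tilde{D}^* h\|_2$ and then transfer back to $h$.

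Next I would set $T_0$ to be the support of the $s$ largest entries in magnitude of $\tilde{D}^* f$, and use the $\lsp^1$-minimality of $\tilde{D}^* \widehat{f}$ to derive the familiar cone inequality
\begin{equation*}
\|(\tilde{D}^* h)_{T_0^c}\|_1 \leq \|(\tilde{D}^* h)_{T_0}\|_1 + 2\,\|\tilde{D}^* f - (\tilde{D}^* f)_{s}\|_1.
\end{equation*}
I would then perform a shifted block decomposition of $T_0^c$: let $T_1$ consist of the $a$ indices of largest magnitude of $(\tilde{D}^* h)_{T_0^c}$, and partition the remainder into successive blocks $T_2, T_3, \dots$ of size $b - a$, ordered by decreasing magnitude. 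The standard decreasing-rearrangement estimate then produces
\begin{equation*}
\sum_{j \geq 2} \|(\tilde{D}^* h)_{T_j}\|_2 \leq \frac{\|(\tilde{D}^* h)_{T_0^c}\|_1}{\sqrt{b-a}},
\end{equation*}
which combined with the cone inequality is controlled by $\|(\tilde{D}^* h)_{T_0 \cup T_1}\|_2$ and the best $s$-term error of $\tilde{D}^* f$.

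The core of the argument is then a D-RIP estimate on $\Abf h = \Abf D \tilde{D}^* h$, splitting $\tilde{D}^* h$ as the head piece on $T_0 \cup T_1$ (of sparsity $s + a$) plus the tail pieces on $T_j$, $j \geq 2$ (each of sparsity $b$). Applying D-RIP of order $s+a$ to the head and order $b$ to each tail block, together with the Cauchy--Schwarz-type bound on cross inner products $\langle \Abf D(\tilde{D}^* h)_{T_0 \cup T_1}, \Abf D (\tilde{D}^* h)_{T_j}\rangle$, yields an inequality of the form
\begin{equation*}
c(\delta_{s+a}, \delta_b, B, \tilde{B}, \rho)\,\|(\tilde{D}^* h)_{T_0 \cup T_1}\|_2 \leq \|\Abf h\|_2 + (\text{best $s$-term error terms}),
\end{equation*}
where the coefficient $c$ is strictly positive precisely under the hypothesis \eqref{eqn_C1}. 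Combining with $\|\Abf h\|_2 \leq 2\eta$, with the tail estimate, and with $\|h\|_2 \leq \sqrt{B}\,\|\tilde{D}^* h\|_2$ yields \eqref{eq15}.

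The main obstacle will be the careful bookkeeping of the constants $B$ and $\tilde{B}$ in the cross-term inner products between the head and tail contributions. The factor $\sqrt{\rho B \tilde{B}}$ in \eqref{eqn_C1} arises exactly from this step: one must use the upper frame bound $B$ of $D$ to pass from $\|D z\|_2$ into the D-RIP regime, together with the fact that $\|\tilde{D}^* h\|_2$ inherits an upper bound $\sqrt{\tilde{B}}\,\|h\|_2$ from the analysis side, and the ratio $\rho = s/b$ that records how many tail blocks appear. Apart from this bookkeeping, the argument is a routine adaptation of the standard analysis-$\lsp^1$ proof; the constraint $0 < b - a \leq 3a$ is used only to close the geometric sum arising from the tail decomposition and to ensure that head and tail sparsity levels match up in the D-RIP application.
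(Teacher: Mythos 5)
The paper itself does not prove Theorem~\ref{thm1}: it is imported verbatim from Liu, Mi and Li \cite{liu2012optimaldual} (``The following result is known to hold for any dual frame''), so the only meaningful comparison is with the proof in that reference. Your sketch has the same skeleton as that proof: the error vector $h=\widehat f-f$ with the tube bound $\|\Abf h\|_2\le 2\eta$, the identity $h=D\tilde D^*h$ giving $\|h\|_2\le\sqrt{B}\,\|\tilde D^*h\|_2$ together with $\|\tilde D^*h\|_2\le\sqrt{\tilde B}\,\|h\|_2$, the cone inequality relative to the $s$ largest entries of $\tilde D^*f$, a block decomposition of $T_0^c$, and a head/tail D-RIP estimate whose positivity is exactly the hypothesis \eqref{eqn_C1}. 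So the strategy is the right one, and the bookkeeping you flag (the product $B\tilde B$ coming from the synthesis bound of $D$ and the analysis bound of $\tilde D$) is indeed where those constants enter.

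There is, however, a genuine quantitative gap in your tail step. With the crude decreasing-rearrangement bound $\sum_{j\ge 2}\|(\tilde D^*h)_{T_j}\|_2\le\|(\tilde D^*h)_{T_0^c}\|_1/\sqrt{b-a}$, the parameter that propagates into the final condition is $s/(b-a)$, not $\rho=s/b$, so you would not recover \eqref{eqn_C1} as stated but only an analogous theorem under a different (weaker) D-RIP condition. The cited proof obtains the normalization $1/\sqrt{b}$ via the Cai--Wang--Xu shifting inequality, applied to overlapping windows of length $b$ (the last $a$ entries of one block followed by the $b-a$ entries of the next), and this is precisely where the hypothesis $0<b-a\le 3a$ is used. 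Your closing remark that this constraint serves ``only to close the geometric sum'' mischaracterizes its role: no geometric series appears in this argument, and the crude tail estimate requires no relation between $a$ and $b$ at all; the constraint is what licenses the shifting inequality and hence the constant $\sqrt{\rho B\tilde B}$ in \eqref{eqn_C1}. (Applying $\delta_b$ to tail blocks of size $b-a$ is harmless by monotonicity of the RIP constants, so that part of your bookkeeping is fine.)
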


In particular, the bound~\eqref{eq15} applied to the local pieces of information $\widehat{f^{(i)}}$ analyzed with the local frames $\tilde{D}^*_i$, for $1 \leq i \leq n$, obtained as solution to Problem~\eqref{eqn_B1} yields the following error bound for the reconstruct signal $\widehat{f} = S^{-1}\sum_{i=1}^n\widehat{f^{(i)}}$:
\begin{align*}
\|\widehat{f}-f\|_2 &\leq \|S^{-1}\|_{2 \to 2}\sum_{i=1}^n \|\widehat{f^{(i)}}-f^{(i)}\|_2 \leq \alpha \eta + \beta\( \sum_{i=1}^n \frac{\|\tilde{D}^*_i f^{(i)}-(\tilde{D}^*_i f^{(i)})_{s_i}\|_1}{\sqrt{s_i}} \),
\end{align*}
where $\alpha\equiv \operatorname{sup}_i\alpha_{i}\|S^{-1}\|_{2 \to 2}$,  $\eta\equiv \sum_i \eta_i$, and $\beta \equiv \operatorname{sup}_i{\beta_{i}}\|S^{-1}\|_{2 \to 2}$
This bound is reminiscent of the traditional bounds in CS where the (local) error decays as $\sigma_s(\xbf)_1/\sqrt{s}$ except that the contributions of each subspace are added together and normalized by the norm of the inverse frame operator.

\subsection{Recovery via \texorpdfstring{$\lsp^1$}{TEXT}-synthesis method}
The gap between recovery via $\lsp^1$ synthesis and $\lsp^1$ analysis has long been studied. 
Interestingly, it can be shown~\cite{liu2012performance} that, when using the sparsity-inducing frames described above~\eqref{eqn_B1}, both approaches are equivalent. 
Denote by $\tilde D_{i,o}$ the resultant optimal dual frame and suppose that $\Abf$ satisfies a D-RIP property~\eqref{eqn_C1}. 
Then it follows from~\cite{liu2012performance} 
\begin{equation*}\label{eqn_B1a}
\|\widehat{f^{(i)}} - f^{(i)}\|_2\leq \alpha_{i}\eta_i + \beta_{i}\frac{\|\tilde D^*_{i, o} f^{(i)}-(\tilde D^*_{i,o} f^{(i)})_{s_i}\|_1}{\sqrt{s_i}}
\end{equation*}
for some positive constants $\alpha_{i}$ and $\beta_{i}$.
Considering the fusion of the local information $\widehat{f}=S^{-1}\left(\sum_{i = 1}^n \widehat{f^{(i)}}\right)$, the following
result holds true:

\begin{aprop}
Let $S$ be the invertible fusion frame operator. Suppose $\Abf$ satisfies condition~\eqref{eqn_C1}.
Then the fused solution $\widehat{f}$ has an error bound given by
\begin{equation*}
\label{eqn_B2}
\|\widehat{f} - f\|_2\leq \alpha\eta + \beta\left(\sum_{i = 1}^n \frac{\|\tilde D^*_{i,o} f^{(i)}-(\tilde D^*_{i,o} f^{(i)})_{s_i}\|_1}{\sqrt{s_i}}\right),
\end{equation*}
where $\alpha\equiv \operatorname{sup}_i\alpha_{i}\|S^{-1}\|_{2 \to 2}$,  $\eta\equiv \sum_i \eta_i$, and $\beta \equiv \operatorname{sup}_i{\beta_{i}}\|S^{-1}\|_{2 \to 2}$.
\end{aprop}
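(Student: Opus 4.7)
The plan is to mimic the fusion-frame error analysis already used in Theorem~\ref{thm:rdnspRecovery} and Proposition~\ref{prop:recoveryfixrank}, but now feeding in the local $\ell^1$-synthesis error bound from~\cite{liu2012performance} rather than an NSP- or pseudo-inverse-based bound. The structure is: express the global error as $S^{-1}$ applied to a sum of local errors, pass the operator norm outside, and apply the single-channel estimate subspace by subspace.

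Concretely, I would first invoke the fusion reconstruction identity $f = S^{-1}\sum_{i=1}^n P_i f = S^{-1}\sum_{i=1}^n f^{(i)}$, which is available because $(W_i,P_i)_{i=1}^n$ is a fusion frame with invertible frame operator $S$, and note that by construction $\widehat{f} = S^{-1}\sum_{i=1}^n \widehat{f^{(i)}}$. Subtracting gives
\begin{equation*}
\widehat{f} - f \;=\; S^{-1}\!\left(\sum_{i=1}^n \bigl(\widehat{f^{(i)}} - f^{(i)}\bigr)\right).
\end{equation*}
Taking $\|\cdot\|_2$ and using $\|S^{-1} v\|_2 \leq \|S^{-1}\|_{2\to 2}\,\|v\|_2$ together with the triangle inequality in the sum yields
\begin{equation*}
\|\widehat{f}-f\|_2 \;\leq\; \|S^{-1}\|_{2\to 2}\sum_{i=1}^n \|\widehat{f^{(i)}}-f^{(i)}\|_2.
\end{equation*}

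Next I would plug in the per-channel estimate recalled just above the proposition, namely
\begin{equation*}
\|\widehat{f^{(i)}}-f^{(i)}\|_2 \;\leq\; \alpha_i\,\eta_i + \beta_i\,\frac{\|\tilde D^*_{i,o} f^{(i)} - (\tilde D^*_{i,o} f^{(i)})_{s_i}\|_1}{\sqrt{s_i}},
\end{equation*}
which is valid because $\Abf$ is assumed to satisfy the D-RIP condition~\eqref{eqn_C1} used in Theorem~\ref{thm1}, and because each $\widehat{f^{(i)}}$ arises as the solution of the sparsity-inducing $\ell^1$-analysis problem~\eqref{eqn_B1}. Summing the $n$ bounds, bounding $\alpha_i \leq \sup_j \alpha_j$ and $\beta_i \leq \sup_j \beta_j$, and pulling these suprema out of the sum gives exactly the claimed inequality with the stated values of $\alpha$, $\beta$, and $\eta = \sum_i \eta_i$.

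There is essentially no deep obstacle here: the statement is a fusion of a reconstruction identity (a one-line calculation) with a pre-existing local recovery theorem. The only subtle point is to confirm that the local bound may legitimately be applied to each $\widehat{f^{(i)}}$, i.e.\ that the D-RIP hypothesis on $\Abf$ guarantees the single-channel result once the problem has been reparameterized in the local dictionary $D_i$ via $f^{(i)} = D_i \xbf^{(i)}$; this is inherited directly from the equivalence of $\ell^1$-analysis with optimal dual and $\ell^1$-synthesis established in~\cite{liu2012performance}. If desired, one could sharpen the constant $\|S^{-1}\|_{2\to 2}$ to $1/C$ using the fusion frame bound from Definition~\ref{def:fusionframes}, matching the form used in Proposition~\ref{prop:recoveryfixrank} and Theorem~\ref{thm:rdnspRecovery}.
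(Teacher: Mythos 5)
Your argument is correct and is essentially identical to the paper's own proof: both write $\widehat{f}-f = S^{-1}\sum_i(\widehat{f^{(i)}}-f^{(i)})$, bound by $\|S^{-1}\|_{2\to 2}$ times the sum of local errors, insert the per-channel estimate from~\cite{liu2012performance}, and pull out the suprema of $\alpha_i$, $\beta_i$. No gaps to report.
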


\begin{proof}
Write $f =S^{-1}S f =S^{-1}\sum_{i=1}^n P_i f =S^{-1}\sum_{i=1}^n f^{(i)}$.  
Direct computation shows
\begin{eqnarray*}
\|\widehat{f}-f\|_2 &=& \|S^{-1}\sum_{i = 1}^n (\widehat{f^{(i)}} - f^{(i)})\|_2 \\
&\leq & \|S^{-1}\|_{2 \to 2}\sum_{i = 1}^n \|(\widehat{f^{(i)}} - f^{(i)})\|_2 \\
&\leq & \|S^{-1}\|_{2 \to 2}\sum_{i=1}^n \left(\alpha_{i}\eta_i + \beta{i}\frac{\|\tilde D^*_{i, o} f^{(i)}-(\tilde D^*_{i,o} f^{(i)})_{s_i}\|_1}{\sqrt{s_i}}\right)  \\
&\leq & \operatorname{sup}_i{\alpha_{i}}\|S^{-1}\|_{2 \to 2}\sum_{i=1}^n \eta_i + \\
& & \operatorname{sup}_i{\beta_{j}}\|S^{-1}\|_{2 \to 2} \left(\sum_{i=1}^n \frac{\|\tilde D^*_{i, o} f^{(i)}-(\tilde D^*_{i,o} f^{(i)})_{s_i}\|_1}{\sqrt{s_i}}\right).
\end{eqnarray*}
The result follows directly by setting $\alpha \equiv \operatorname{sup}_i{\alpha_{i}}\|S^{-1}\|_{2 \to 2}$,  $\eta \equiv \sum_i \eta_i$, and $\beta \equiv \operatorname{sup}_i{\beta_{j}}\|S^{-1}\|_{2 \to 2}$.
\end{proof}

We comment that this result is not surprising due to the equivalence between the two problems described in this section when dealing with the sparsity-inducing dual frames.

\section{Examples}
We provide in this Section some examples of applications of our approach where it yields, without any fine tunning of the parameters,  comparable -- if not better -- results with common methods (which are precisely targeted for the given problems).

\subsection{Wavelet frames and recovery of Doppler signals}
This section is intended as an illustrative example and proof-of-concept of the tools developed so far. 
In particular, we want to show that using a fairly \emph{poor} quality device, we are capable to recover a signal of fairly high complexity. 
For the sake of reproducible research, all the experiments presented in this section can be obtained and reproduced from of the named authors' Github page\footnote{See https://github.com/jlbouchot/FFCS for all the self implemented files. These files require, as described in the README  file, to have the access to a Haar matrix function and to have CVX ~\cite{CVX1,CVX2} installed.}.
In this experiment, we try to recover a noisy Doppler signal (see Fig.~\ref{fig:noisyDoppler}) using sensors with very few measurements. 

\begin{figure}
    \centering
    \subfigure[Original noisy Doppler signal]{
        \includegraphics[width=0.8\textwidth]{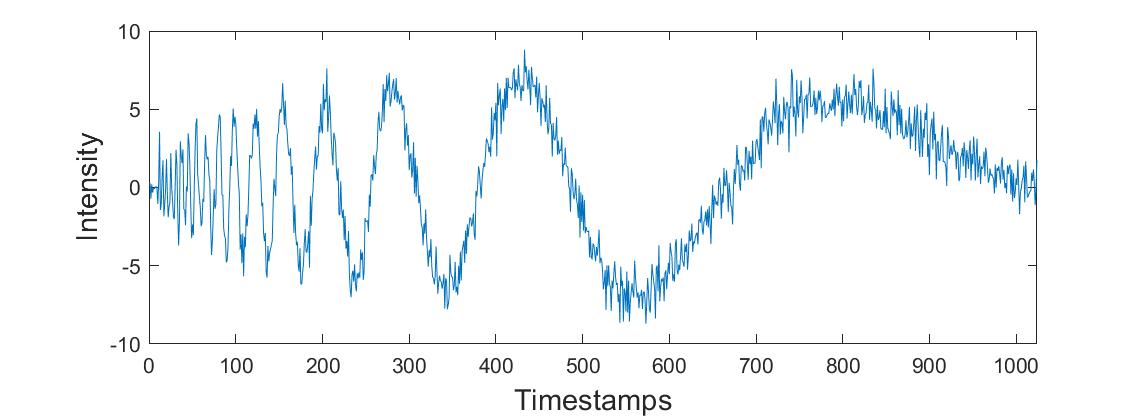}
        \label{fig:noisyDoppler}
    } \\
    ~ 
    \subfigure[Recovery from a distributed sensing approach -- $L^2$ error: $37.18$]{
        \includegraphics[width=0.8\textwidth]{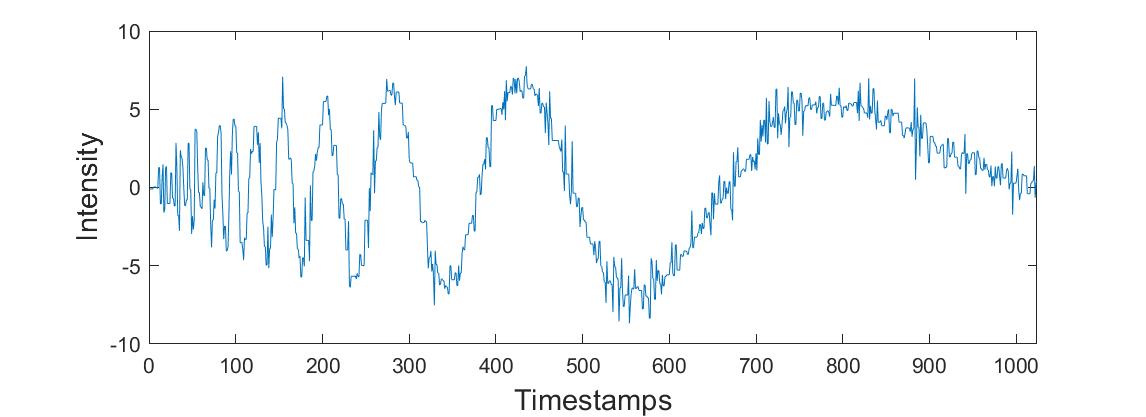}
        \label{fig:haarFF}
    } \\
    ~ 
    \subfigure[Reconstruction with an $\ell^1$ analysis method -- $L^2$ error: $82.72$]{
        \includegraphics[width=0.8\textwidth]{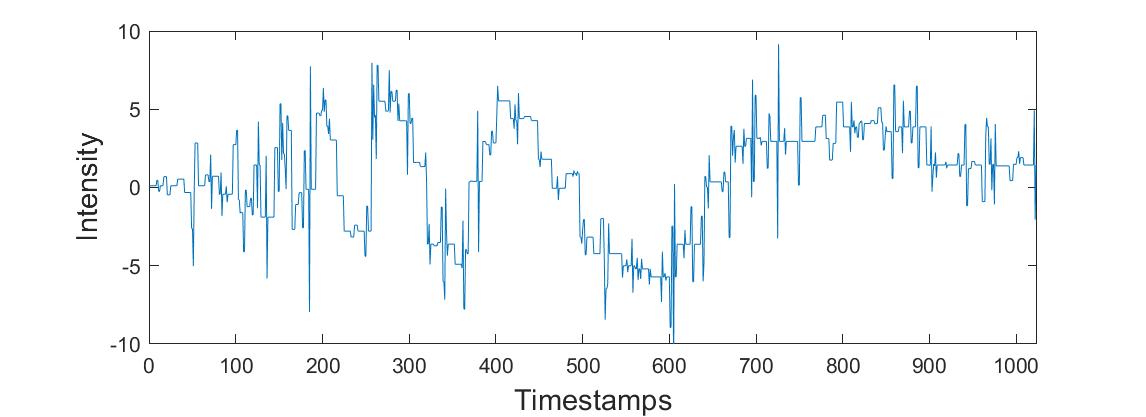}
        \label{fig:haarCS}
    } 
    \caption{A noisy Doppler signal (~\ref{fig:noisyDoppler}) and its reconstructions via a Haar wavelet frame approach \ref{fig:haarFF} and a direct $\ell^1$ analysis on the Haar system~\ref{fig:haarCS}}
		\label{fig:proofofconcept}
\end{figure}

To produce the Figures, we have considered a $1024$ dimensional noisy Doppler signal (obtained directly via Matlab's Wavelet toolbox). 
We compare a traditional $\ell^1$ analysis approach, where the sparsifying basis is chosen as the Haar system. 
Similarly, we consider the Haar wavelet systems as subspaces for our fusion frame approach. 
Given that a signal $s(t)$ can be decomposed as 
$$ s(t) = a_l(t) + \sum_{j \geq l}d_j(t) $$
where $a_l$ corresponds to the approximation coefficients at level $l$ and $d_j$ corresponds to the details at level $j$.
We therefore consider our projections $P_j$ to be onto the subspaces of details $W_j$ at each level (using the usual definition from multiresolution analysis), and a remaining low frequency approximation part. 
This gives us (in our setting) a grand total of $n = 9$ projections, $8$ of which ($P_1, \cdots, P_8$) correspond to detail coefficients, and the last one ($P_0$) corresponds to the low-frequency approximation. 
This system being orthogonal, it gives us the easy reconstruction formula
\begin{equation*}
\widehat{{\bf x}} = \sum_{i = 0}^8 \widehat{\xbf^{(i)}}.
\end{equation*}
All the local information are recovered with an $\ell^1$ analysis procedure, where the sparsifying basis are precisely chosen to be the Haar wavelets at the appropriate levels. 
Finally, the sparsity is set uniformly to $25$ on each subspace and a random Gaussian matrix with $m = 174$ rows is created and some normal noise with variance $0.05$ is added to the measurement vectors, all drawn independently from one another. 

Fig.~\ref{fig:haarFF} shows the reconstructed signal using our $\ell^1$-analysis fusion frame approach described in the previous section, while Fig.~\ref{fig:haarCS} is the reconstructed signal obtained from a traditional $\ell^1$-analysis approach. 
Both of them have been reconstructed using measurements obtained from the exact same matrix. 
It is important to note however, that we have not tried to optimize our parameters in this example. 
In particular, bearing in mind that the subspaces of higher-frequency details have higher dimensions (in fact, $\operatorname{rank}(P_j) = \mathcal{O}(2^j)$, for $1 \leq j \leq 8$), a better choice would be to set the local sparsity to be higher in these subspaces to improve our results. 
This level dependent-sampling is nothing new in the community and our results corroborate those from other approaches~\cite{adcock2017sparsityinlevel}. 
The Figures clearly show the reconstruction capabilities of our approach even with a sub-optimal setting and not making use of any kind of redundancy. 
Note that mostly the high frequency components of the signal are missing, which can be overcome by a better setting of the sparsity per subspace. 

\subsection{Natural image processing}
This section and the next one empirically studies the applicability of our proposed approach~\footnote{All experiments can be obtained from https://github.com/jlbouchot/FFCS}.
They detail some basic yet illustrative examples of the use of our fused sensing framework in imaging sciences. 
They prove both the easiness to implement all sorts of applications and the generality of the suggested approach.
Further examples can be found in~\ref{appendix}.

The first imaging application is concerned with natural images. 
To this end, we consider the traditional \emph{cameraman image} depicted on the left of Fig.~\ref{fig:camera:light}. 
We simulate a moving field of view in the $x$ direction, obtained by generating $n = 8$ traveling sine waves, as shown in Fig.~\ref{fig:camera:light}. 
This is motivated for its use in multi-channel MR Imaging~\cite{Pruessmann1999Sense,Ma2017SoS}, for which an example is given in the next section.

\begin{figure}[htbp]
\includegraphics[width=0.90\linewidth]{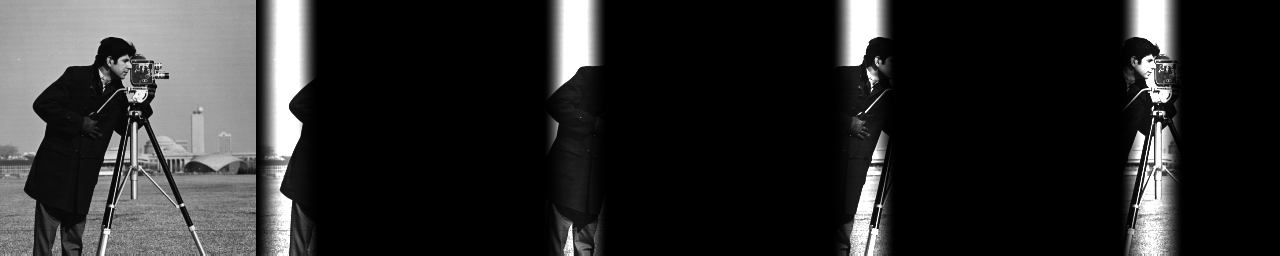}
\caption{Original cameraman image with the first four scene illuminations considered.}
\label{fig:camera:light}
\end{figure}

For this experiment, we compare the fused compressed sensing method described above with the traditional $\ell^1$ analysis approach and the usual $\ell^1$ minimization -- albeit knowing that the traditional compressed sensing is not suited for this task. 
The minimization is computed with a self-implemented primal-dual algorithm which we use for all the following experiments.
We exploit the fused sensing method based on the $\ell^1$ analysis model presented in Section~\ref{ShidongSection} using a Daubechies-4 wavelet decomposition as sparsifying dictionary. 
The measurements are obtained by subsampling the Fourier domain at random according to a Gaussian distribution. 
We consider a subsampling ratio of $4.84\%$, which means we are taking $3174$ samples from a $256 \times 256$ dimensional image. 
All the measurements are assumed to contain some randomly generated additive Gaussian noise with variance $0.05$. 
Fig.~\ref{fig:camera} compares the results obtained from our approach against the single sensor approaches in the $\ell^1$ analysis and traditional compressed sensing. 

\begin{figure}[htb]
\centering
\subfigure[Recovery using traditional compressed sensing]{
\includegraphics[width=0.30\linewidth]{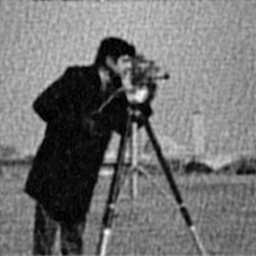}
\label{fig:camera:CS}
}
\subfigure[Recovery using an $\ell^1$ analysis approach with $db4$ wavelets]{
\includegraphics[width=0.30\linewidth]{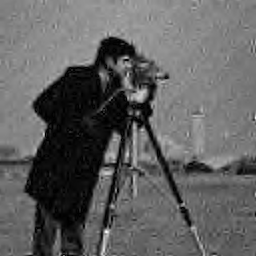}
\label{fig:camera:l1A}
}
\subfigure[Fused compressed sensing recovery via $\ell^1$ analysis model]{
\includegraphics[width=0.30\linewidth]{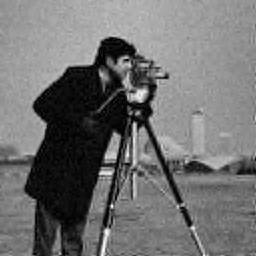}
\label{fig:camera:FCS} 
} \\ 
\subfigure[Pointwise error using traditional compressed sensing]{
\includegraphics[width=0.30\linewidth]{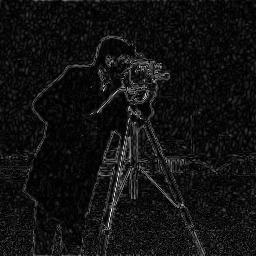}
\label{fig:camera:CS_error}
}
\subfigure[Pointwise error via $\ell^1$ analysis recovery with $db4$ wavelets]{
\includegraphics[width=0.30\linewidth]{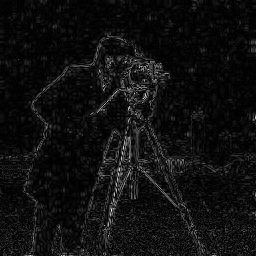}
\label{fig:camera:l1A_error}
}
\subfigure[Pointwise error of Fused $\ell^1$ analysis model]{
\includegraphics[width=0.30\linewidth]{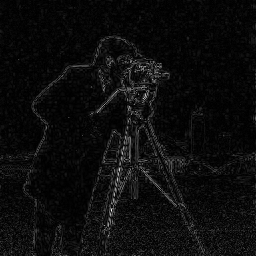}
\label{fig:camera:FCS_error}
}
\caption{Recovery results from single and multi-sensor set ups in natural images. Our approach allows to recover fine details whilst preserving the main structures of the subject.}
\label{fig:camera}
\end{figure}
As can be noted, even in this drastic undersampling situation, our method is capable of handling accurate reconstructions.
It can be seen that the details on the man are kept while making the noise less obvious. 
It is particularly visible in certain small details such as the hands of the cameraman, or the white parts of the feet of the tripod. 
The background also appears to be better defined in the fused image than in the classical and $\ell^1$ analysis methods. 

\begin{table}[htb]
\centering
\begin{tabular}{l|ccc}
\hline
               & Compressed sensing & $\ell^1$ analysis & Fused $\ell^1$ analysis \\ \hline
SSIM           &     $0.5473$       &     $0.5967$      &      $0.6838$           \\
PSNR           &     $23.076$       &     $22.893$      &      $24.342$           \\
$\ell^2$ error &     $17.965$       &     $18.348$      &      $15.530$           \\ \hline
\end{tabular}
\caption{Accuracy of the various recovery methods of the natural scene measured with the Structural SIMilarity index, Peak Signal to Noise Ratio, and pointwise $\ell^2$ error.}
\end{table}

\subsection{MR Image reconstruction}
The second imaging application is concerned with medical imaging. 
Following the ideas from ~\cite{Ma2017SoS,Pruessmann1999Sense}, we simulate a multi-channel MRI sensing set-up, first using the same sine waves as in the previous experiment (results displayed in Fig~\ref{fig:phantom_sine}), and then using spherical beam pattern (see in Fig.~\ref{fig:phantom_gauss} for the results). 
We generate the classical Shep-Logan phantom, and set its dimension to $1024 \times 1024$. 

The first experiment, depicted in Fig.~\ref{fig:phantom_sine} shows the results obtained by our method when using a Daubechies 4 sparsifying dictionary. 
Our results are compared to the single sensor $\ell^1$ analysis, and the results obtained by the Sum-of-Squares recovery~\cite{Pruessmann1999Sense,Ma2017SoS}. 

\begin{figure}[htb]
\centering
\subfigure[Recovery using the single sensor $\ell^1$ analysis]{
\includegraphics[width=0.30\linewidth]{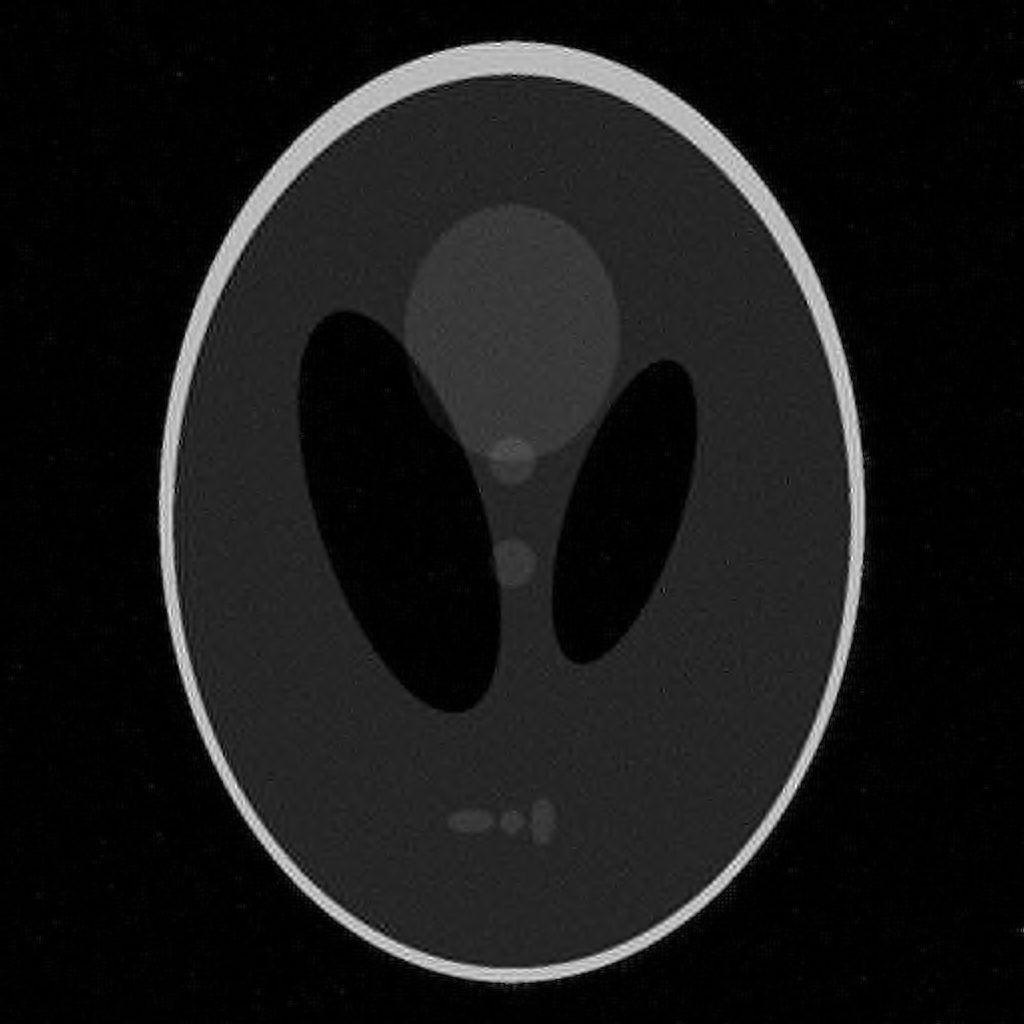}
\label{fig:phantom_sine:l1A}
}
\subfigure[Recovery using Sum-of-Squares approach]{
\includegraphics[width=0.30\linewidth]{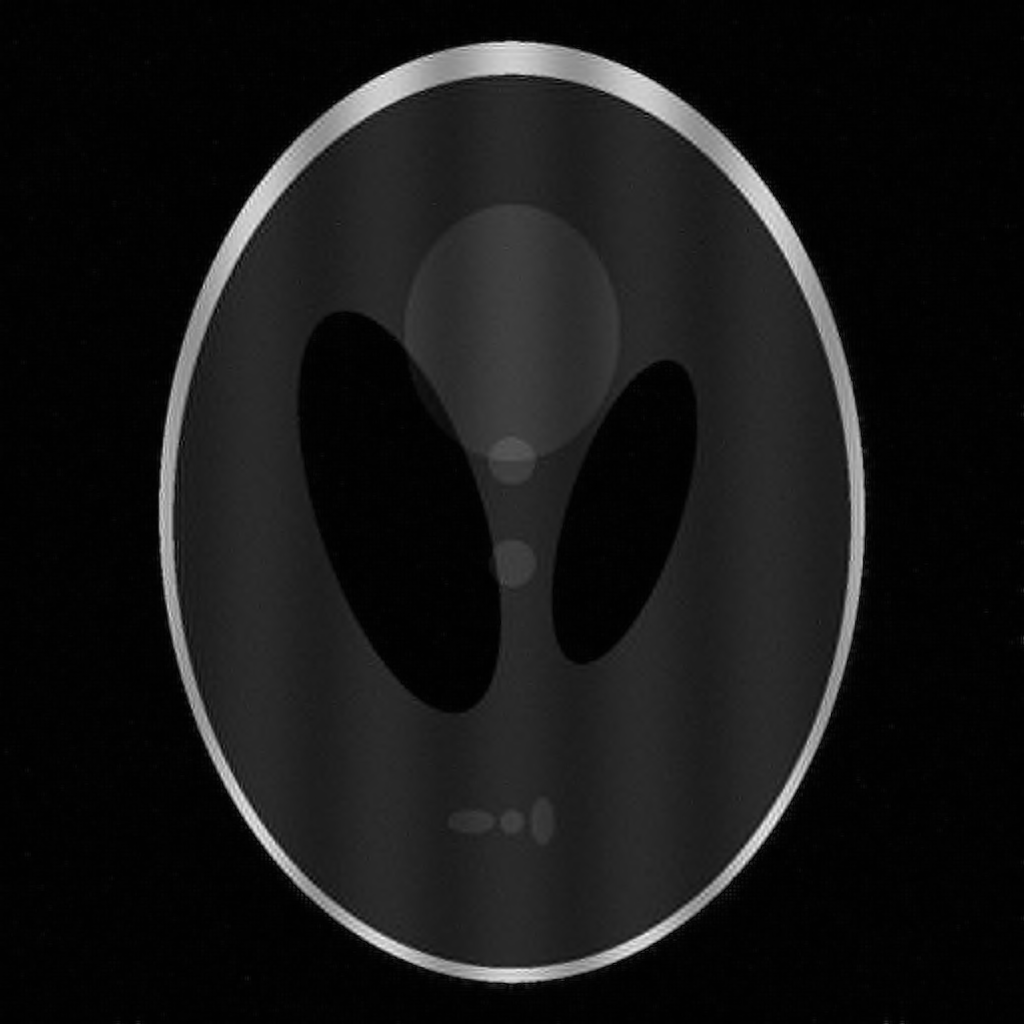}
\label{fig:phantom_sine:sos}
}
\subfigure[Fused compressed sensing recovery via $\ell^1$ analysis model]{
\includegraphics[width=0.30\linewidth]{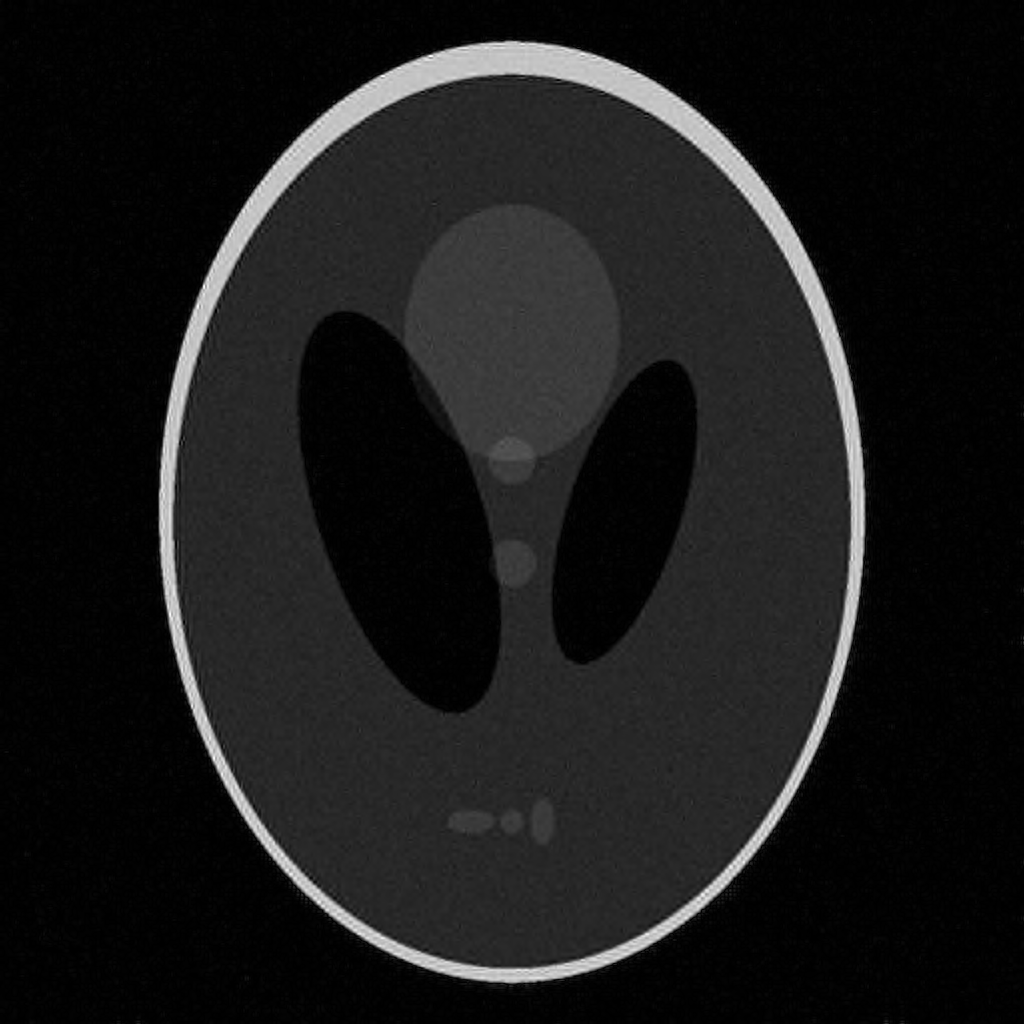}
\label{fig:phantom_sine:fcs}
} \\
\subfigure[Pointwise error via single sensor $\ell^1$ analysis]{
\includegraphics[width=0.30\linewidth]{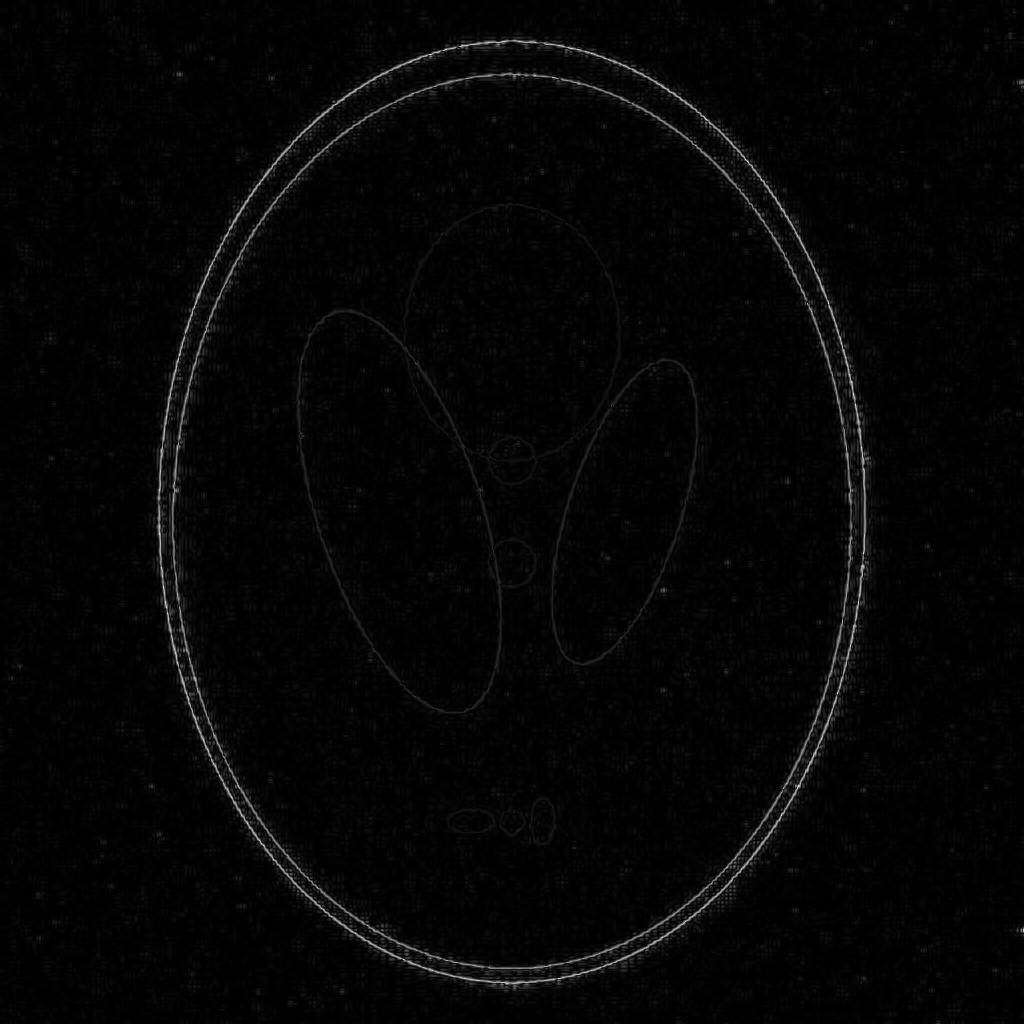}
\label{fig:phantom_sine:l1A_error}
}
\subfigure[Pointwise error Sum-of-Squares approach]{
\includegraphics[width=0.30\linewidth]{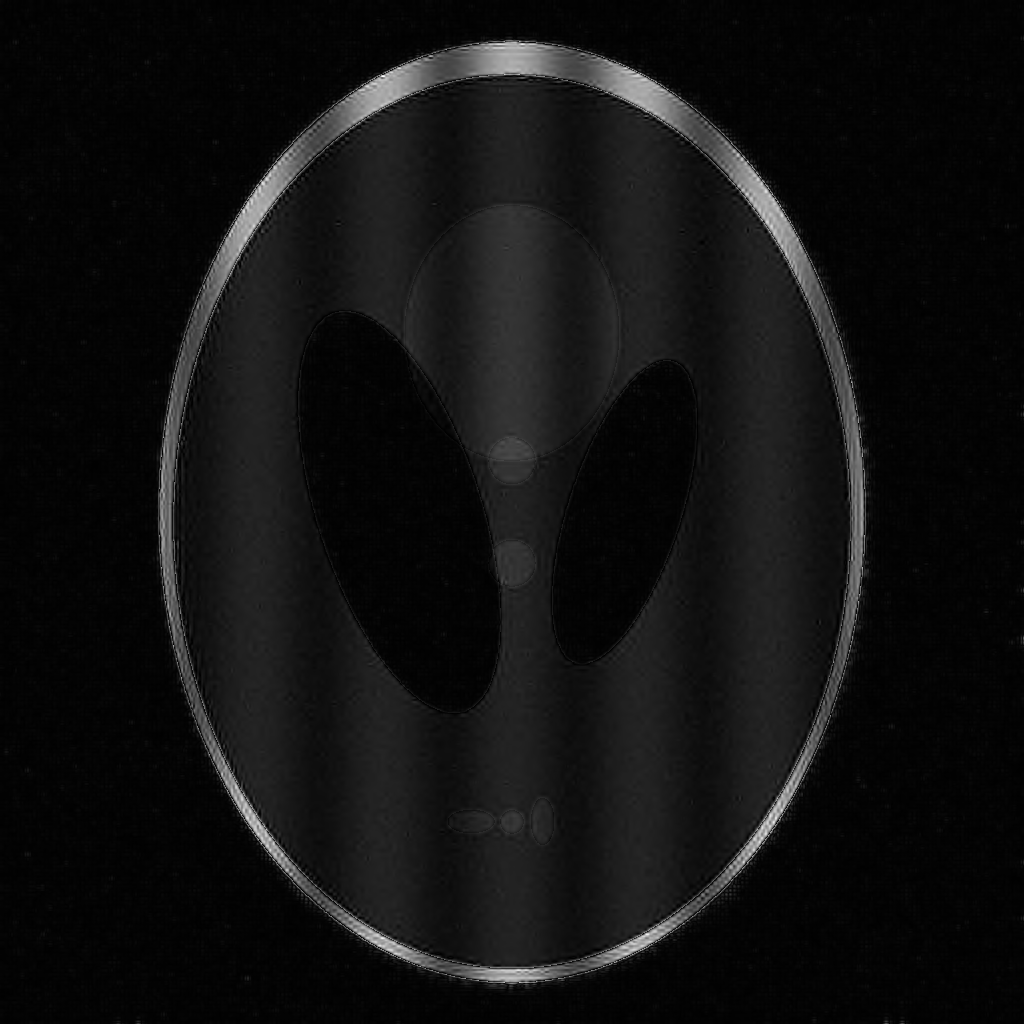}
\label{fig:phantom_sine:sos_error}
}
\subfigure[Pointwise error Fused $\ell^1$ analysis model]{
\includegraphics[width=0.30\linewidth]{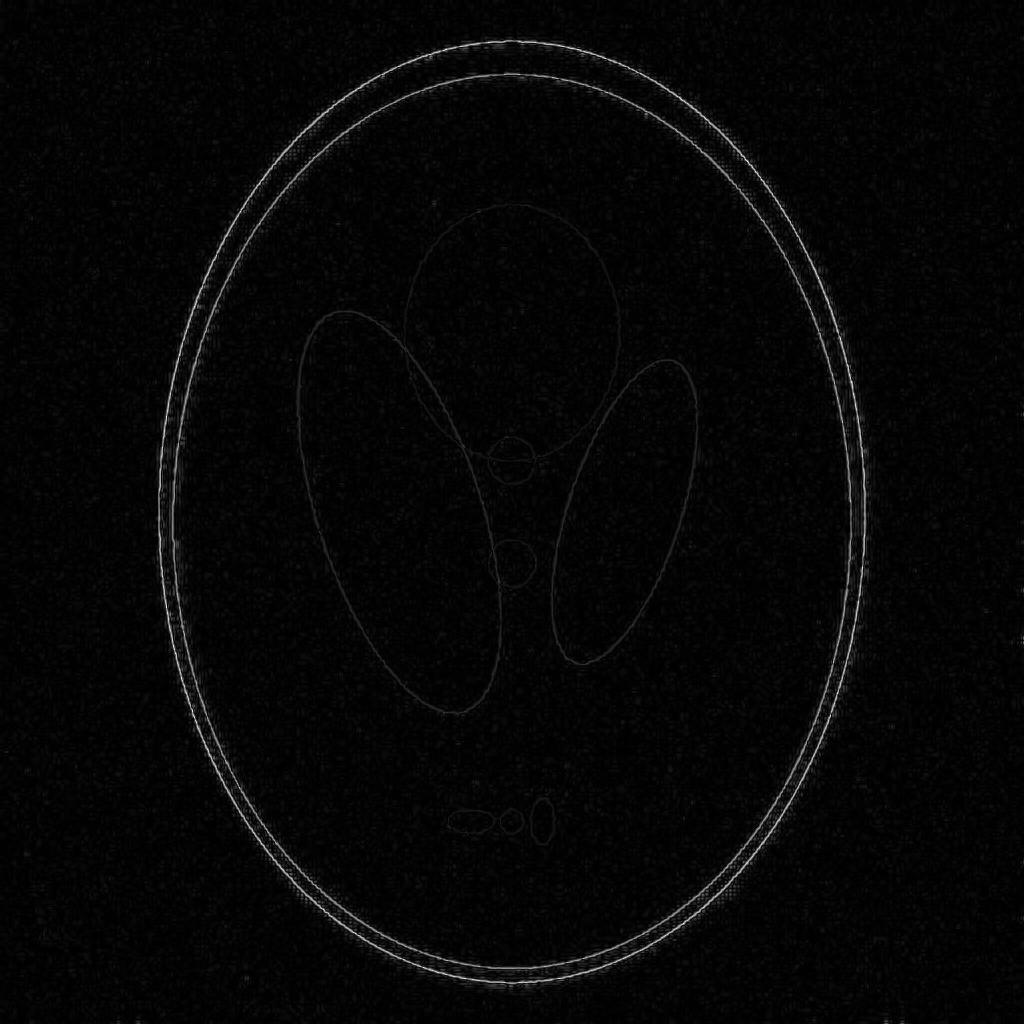}
\label{fig:phantom_sine:fcs_error}
}
\caption{Recovery of the phantom MR Image of size $1024 \times 1024$ from highly undersampled and noisy measurements. The sparse minimization is done on Daubechies 4 wavelet coefficients. Our approach allows~\ref{fig:phantom_sine:fcs} allows better recovery than other comparable methods~\ref{fig:phantom_sine:sos}.}
\label{fig:phantom_sine}
\end{figure}

The recovery is obtained from noisy measurements, in which some additive Gaussian noise with variance $0.05$ has been added. 
The measurements are obtained by subsampling $2.45\%$ ($25667$ samples) of the Fourier transform. 
We see here that the fused compressed sensing is better capable of handling a multi-channel problem with unusual illumination (compared to the Sum-of-Squares method). 
Moreover, more details are preserved, when compared to the single sensor $\ell^1$ analysis method. 

Some noise still appears in the image, but can easily be thresholded further if needed. 
One important aspect of MR Images that hasn't been considered in this research, is the fact that they are sparse in gradient. 
One usually prefers to minimize the Total Variation instead of the $\ell^1$ norm or another frame using $\ell^1$ analysis. 

Although we have not explicitly written the theory here, the recovered images in Fig.~\ref{fig:phantom_gauss} show the results when using a TV minimization instead of the $\ell^1$ analysis. 

\begin{table}[htb]
\centering
\begin{tabular}{l|ccc}
\hline
               & $\ell^1$ analysis  & Sum of Square fusion & Fused $\ell^1$ analysis \\ \hline
SSIM           &     $0.8022$       &       $0.4381$       &      $0.7672$           \\
PSNR           &     $27.917$       &       $14.893$       &      $28.484$           \\
$\ell^2$ error &     $41.159$       &       $184.36$       &      $38.558$           \\ \hline
\end{tabular}
\caption{Accuracy of the various recovery methods of the phantom image measured with the Structural SIMilarity index, Peak Signal to Noise Ratio, and pointwise $\ell^2$ error.}
\end{table}

\begin{figure}[htb]
\centering
\subfigure[Recovery using the single sensor $\ell^1$ analysis]{
\includegraphics[width=0.22\linewidth]{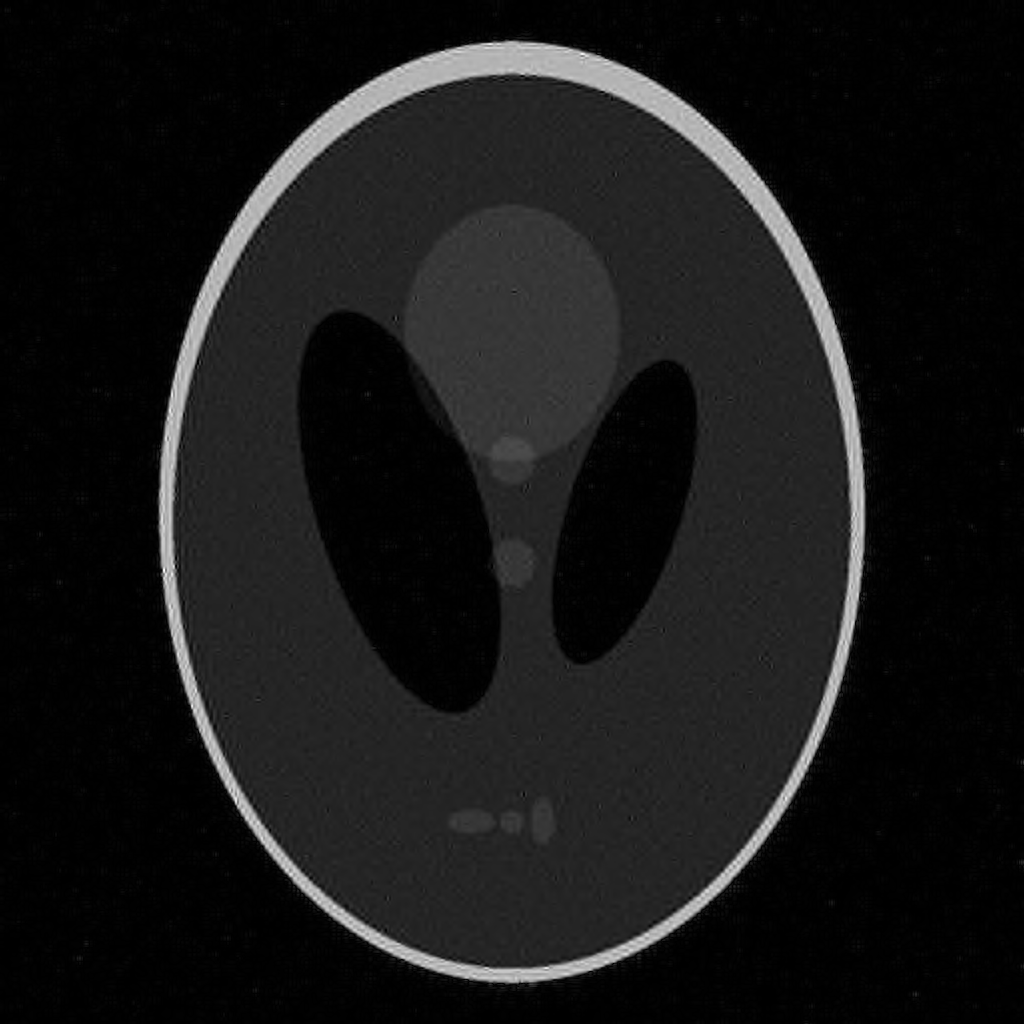}
\label{fig:phantom_gauss:l1A}
}
\subfigure[Recovery using the Sum-of-Squared approach]{
\includegraphics[width=0.22\linewidth]{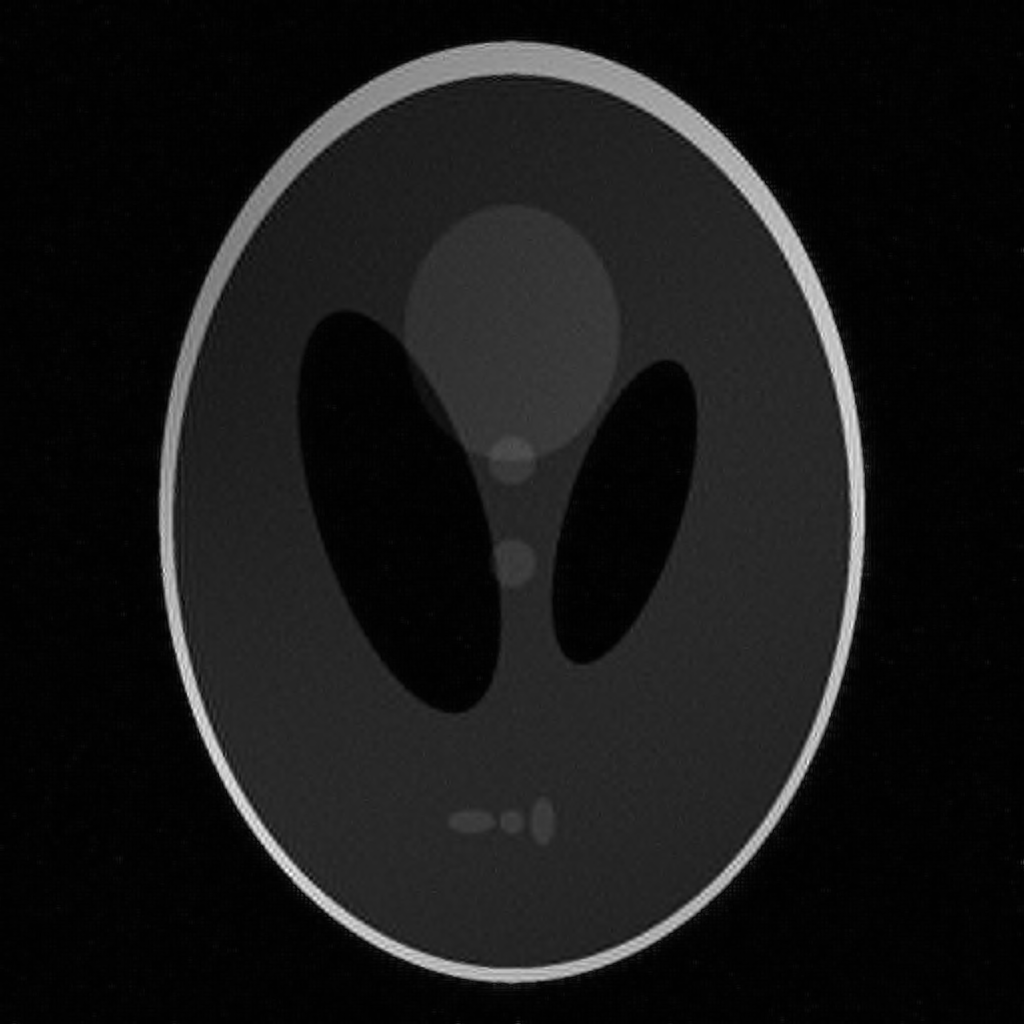}
\label{fig:phantom_gauss:sos}
}
\subfigure[Recovery using the fused $\ell^1$ recovery]{
\includegraphics[width=0.22\linewidth]{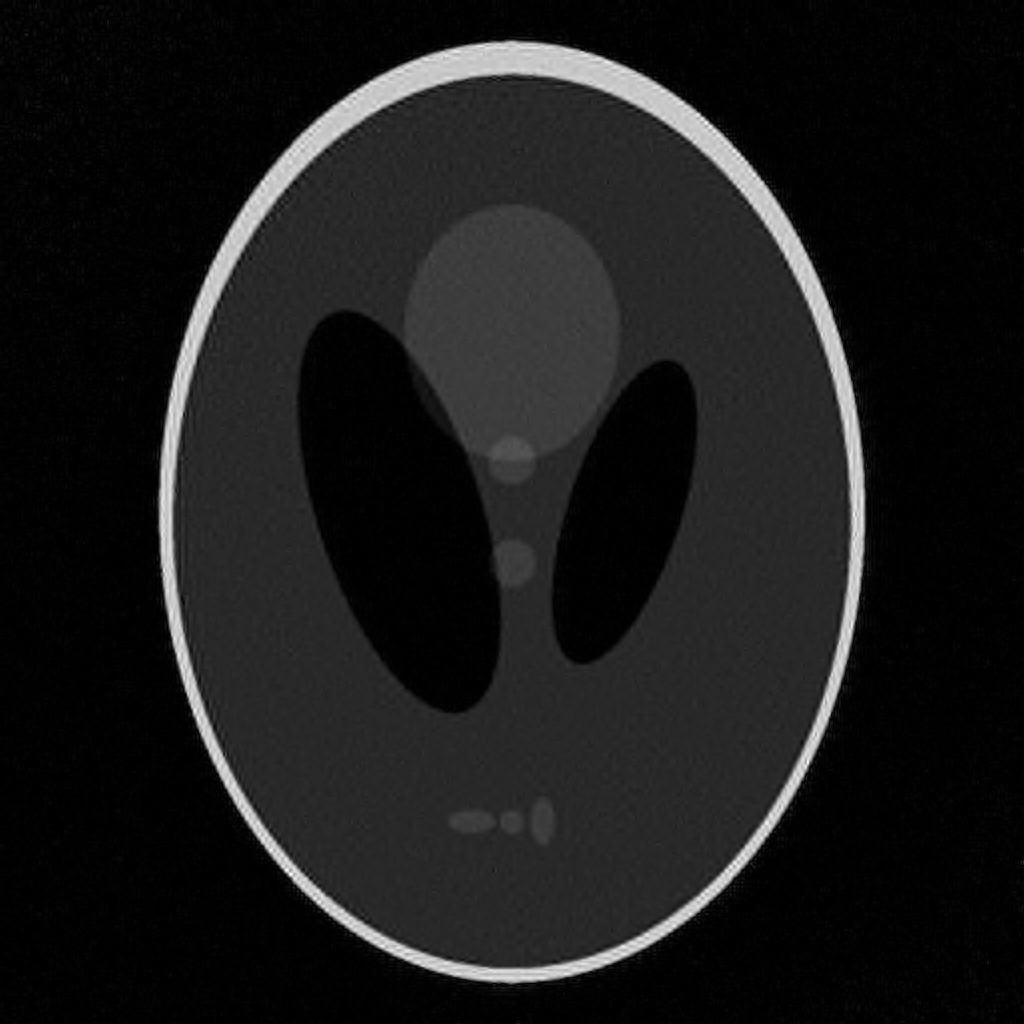}
\label{fig:phantom_gauss:fcs}
}
\subfigure[Recovery using a fused total variation minimization]{
\includegraphics[width=0.22\linewidth]{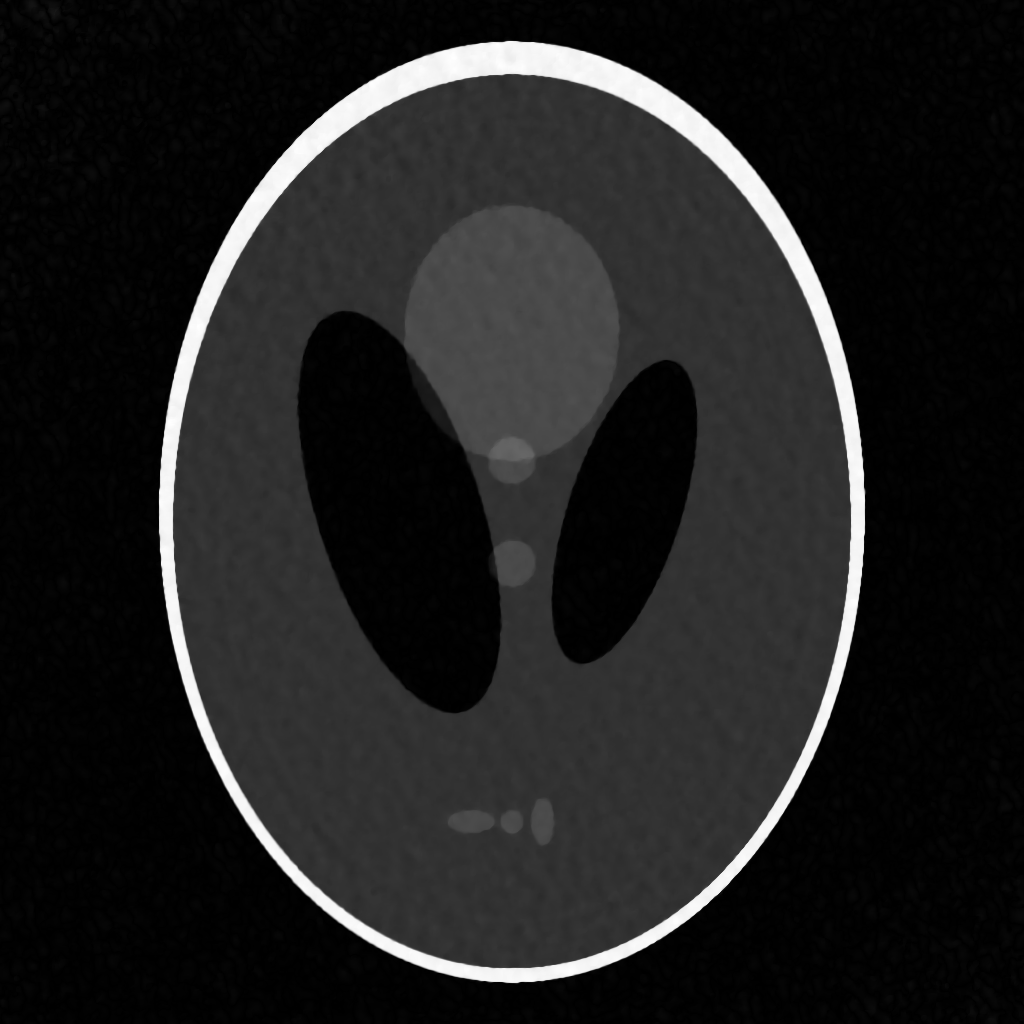}
\label{fig:phantom_gauss:fcstv}
} \\ 
\subfigure[Pointwise error single sensor $\ell^1$ analysis]{
\includegraphics[width=0.22\linewidth]{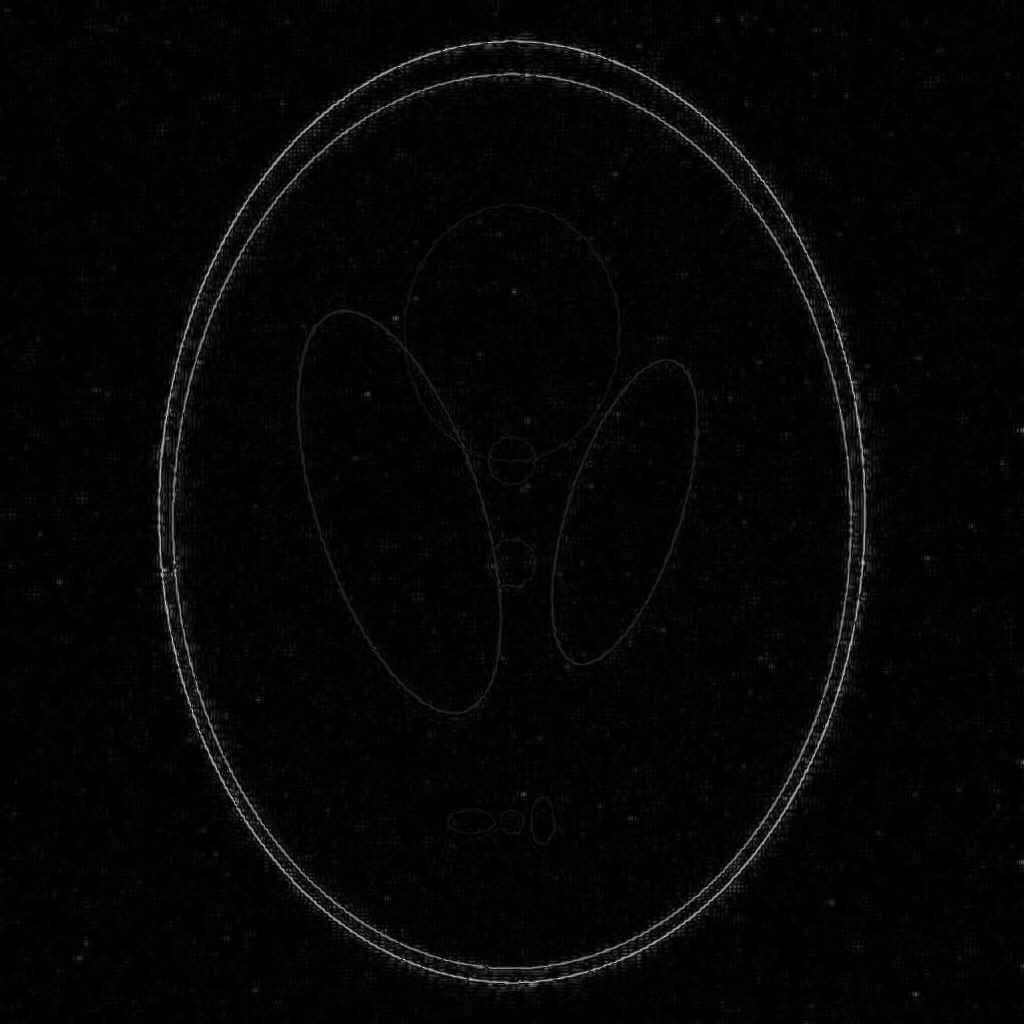}
\label{fig:phantom_gauss:l1_error}
}
\subfigure[Pointwise error using the Sum-of-Squared approach]{
\includegraphics[width=0.22\linewidth]{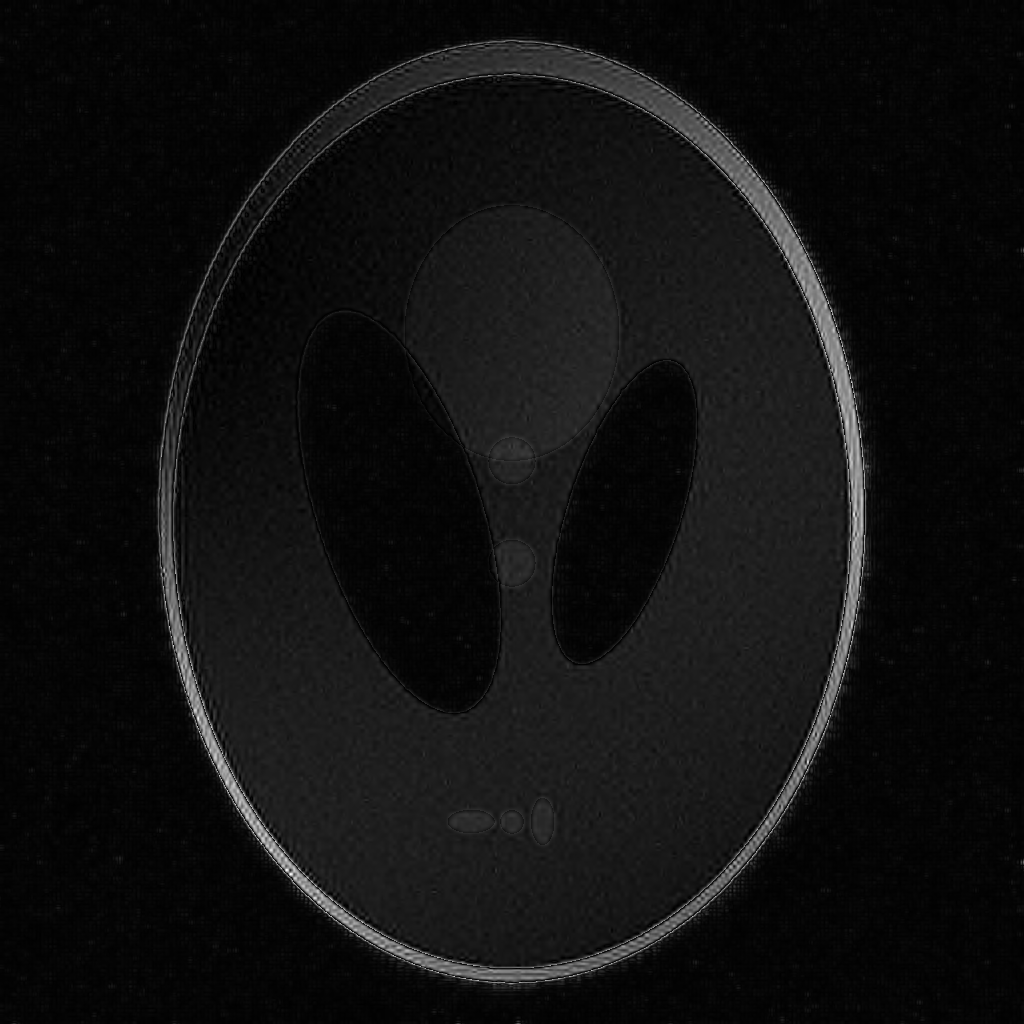}
\label{fig:phantom_gauss:sos_error}
}
\subfigure[Pointwise error using the fused $\ell^1$ analysis]{
\includegraphics[width=0.22\linewidth]{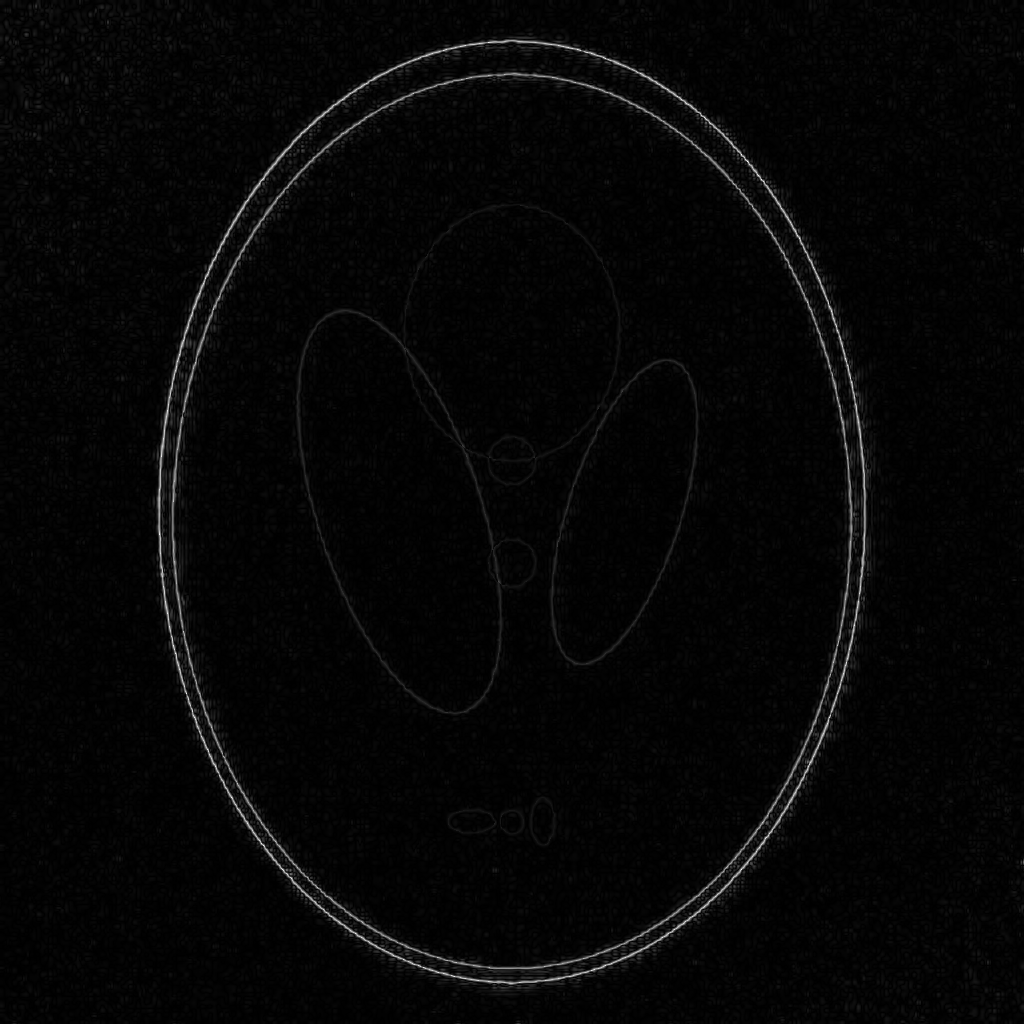}
\label{fig:phantom_gauss:fcs_error}
}
\subfigure[Pointwise error using a fused total variation minimization]{
\includegraphics[width=0.22\linewidth]{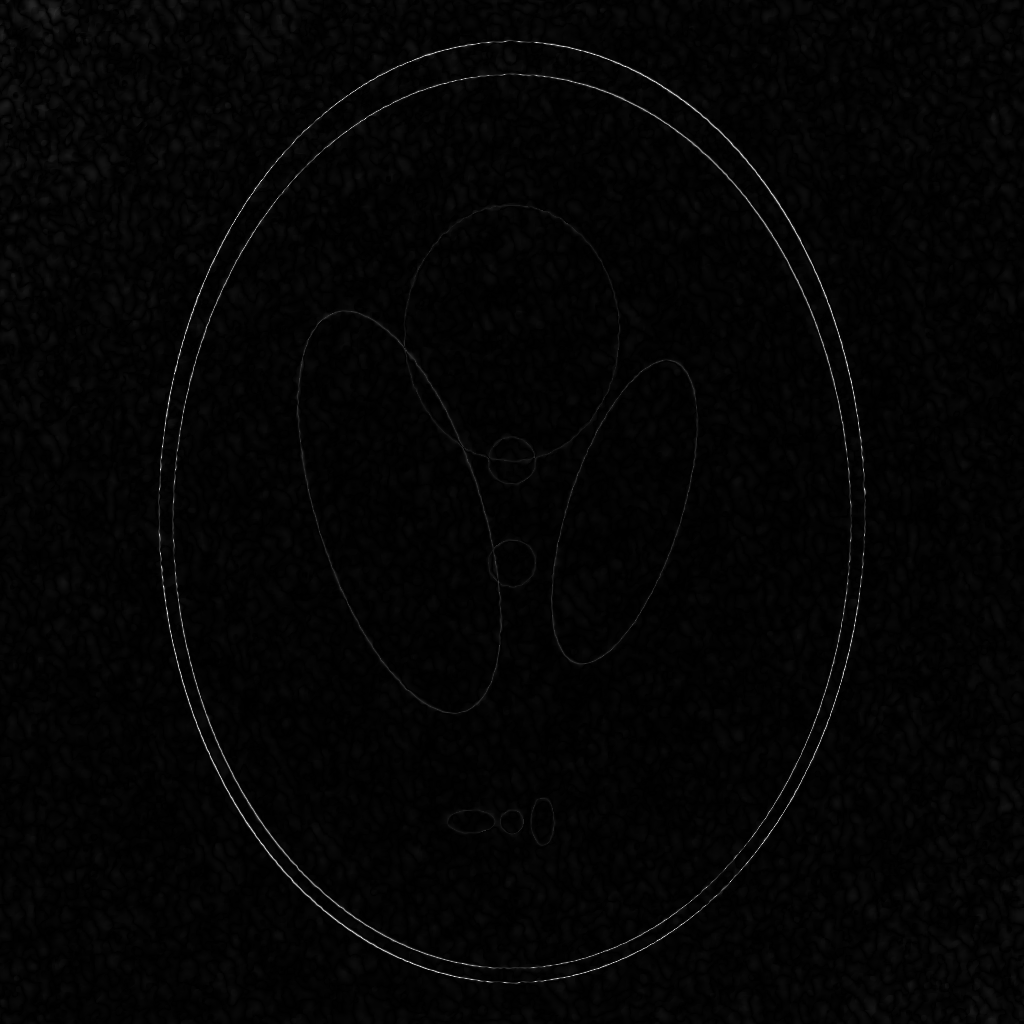}
\label{fig:phantom_gauss:fcstv_error}
}
\caption{Recovery of the phantom MR Image of size $1024 \times 1024$ from highly undersampled and noisy measurements (additive Gaussian noise, variance $0.05$). The sparse minimization is done on Daubechies 4 wavelet coefficients and on the gradient for the rightmost image. }
\label{fig:phantom_gauss}
\end{figure}
The results were obtained from a spherical beam illuminating the input image. 
The measurements are obtained by sampling $2.45\%$ ($25664$ samples) of the Fourier coefficients at random iid from a Gaussian distribution. 
Again, some Gaussian noise with variance $0.05$ is added to every measurements.

\begin{table}[htb]
\centering
\begin{tabular}{l|cccc}
\hline
               & $\ell^1$ analysis  & Sum of Square fusion & Fused $\ell^1$ analysis & Fused total variation \\ \hline
SSIM           &      $0.7972$      &      $0.4518$        &        $0.7872$         &       $0.8225$        \\
PSNR           &      $27.907$      &      $18.719$        &        $27.909$         &       $32.606$        \\
$\ell^2$ error &      $41.207$      &      $118.67$        &        $41.196$         &       $23.988$        \\ \hline
\end{tabular}
\caption{Accuracy of the various recovery methods of the phantom image measured with the Structural SIMilarity index, Peak Signal to Noise Ratio, and pointwise $\ell^2$ error.}
\end{table}

As presented in these examples, our method is capable of handling highly complex signals in potentially many dimensions yet keeping a very low number of samples taken. 
We can also empirically verify that the presented approach enjoys more robustness to noise and variations in scene illuminations than other known methods.

\section{Conclusion}
This paper introduces a novel way to look at signal sensing and reconstruction by utilizing fusion frames. 
The suggested method is very general by nature and can be applied in many situations, as illustrated in the previous section. 
Our research suggests that it is possible to recover \correction{high complexity} from many low dimensional pieces of information. 
This splitting of information can be either enforced by design of the sensors, or by the physical nature of the problem being considered. 

The present article proves that the number of measurements required locally is inversely proportional to the number of sensors, thereby allowing to use many cheap sensors. 
Empirically, we verified that our approach allowed to handle noise even in the case of strong undersampling of the scenes. 

This work can be further developed by integrating various sensors instead of a single one, repeated $n$ times. 
Further real-world applications, for instance in SAR systems, can also be implemented but the details are left for future research.

\appendix
\section{Further numerical results}
\label{appendix}

\subsection{Comparison with Douglas-Rachford optimization and variable density sampling (DR-VDS)}
In this set of tests, we numerically assess how our approach behaves in the presence of a variable density sampling scheme which contains a low-frequency deterministic part and different random sampling patterns. 
This approach is motivated by results in~\cite{chauffert2014vds,boyer2015compressed}, in which the sampling pattern is also highly structured. 
In particular, we look at basic variable density sampling scheme in which low frequencies are deterministically selected and the high-frequency components are sampled at random, according to either a radial pattern (see Figs.~\ref{fig:Chauffert_CScan_radial},\ref{fig:Chauffert_RealBrain_radial}) or a spiraling pattern (see Figs.~\ref{fig:Chauffert_CScan_spiral},\ref{fig:Chauffert_RealBrain_spiral}). 

\begin{figure}[htb]
\centering
\subfigure[Original image]{
\includegraphics[width=0.22\linewidth]{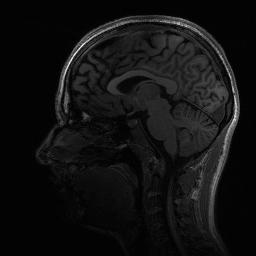}
\label{fig:sagital_image:radial:original}
}
\subfigure[Recovery using the Douglas-Rachford optimization provided in~\cite{chauffert2014vds}]{
\includegraphics[width=0.22\linewidth]{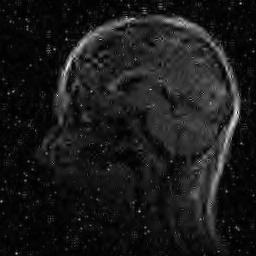}
\label{fig:sagital_image:radial:vds}
}
\subfigure[Recovery using the fused $\ell^1$ recovery]{
\includegraphics[width=0.22\linewidth]{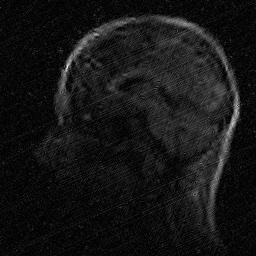}
\label{fig:sagital_image:radial:fcs}
}
\subfigure[Recovery using a fused total variation minimization]{
\includegraphics[width=0.22\linewidth]{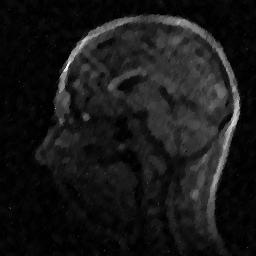}
\label{fig:sagital_image:radial:fcstv}
} \\ 
\subfigure[Sampling pattern of the $k$-space]{
\includegraphics[width=0.22\linewidth]{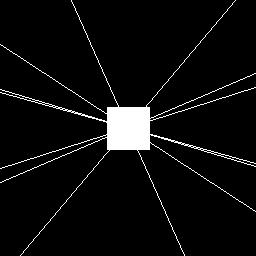}
\label{fig:sagital_image:radial:pattern}
}
\subfigure[Pointwise error using the Douglas-Rachford optimization provided in~\cite{chauffert2014vds}]{
\includegraphics[width=0.22\linewidth]{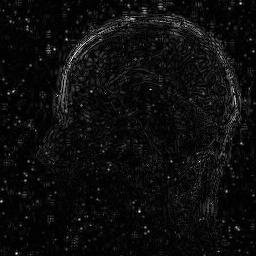}
\label{fig:sagital_image:radial:vds_error}
}
\subfigure[Pointwise error using the fused $\ell^1$ recovery]{
\includegraphics[width=0.22\linewidth]{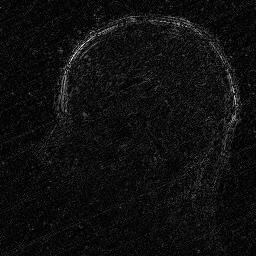}
\label{fig:sagital_image:radial:fcs_error}
}
\subfigure[Pointwise error using a fused total variation minimization]{
\includegraphics[width=0.22\linewidth]{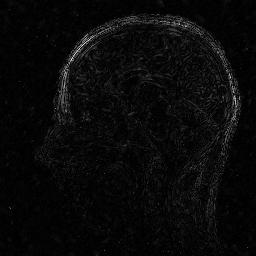}
\label{fig:sagital_image:radial:fcstv_error}
}
\caption{Recovery of a sagital brain MRI from 1/20 undersampled measurements taken as radial lines in the Fourier space and a deterministic low-frequency area. The sparse minimization is done on symmlet wavelet coefficients and on the gradient for the rightmost image. }
\label{fig:Chauffert_CScan_radial}
\end{figure}

\begin{figure}[htb]
\centering
\subfigure[Original image]{
\includegraphics[width=0.22\linewidth]{CScan_Original.jpg}
\label{fig:sagital_image:spiral:original}
}
\subfigure[Recovery using the Douglas-Rachford optimization provided in~\cite{chauffert2014vds}]{
\includegraphics[width=0.22\linewidth]{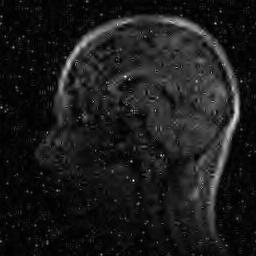}
\label{fig:sagital_image:spiral:vds}
}
\subfigure[Recovery using the fused $\ell^1$ recovery]{
\includegraphics[width=0.22\linewidth]{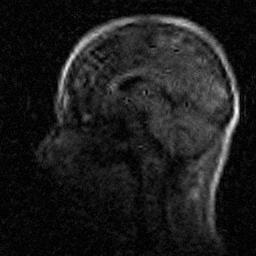}
\label{fig:sagital_image:spiral:fcs}
}
\subfigure[Recovery using a fused total variation minimization]{
\includegraphics[width=0.22\linewidth]{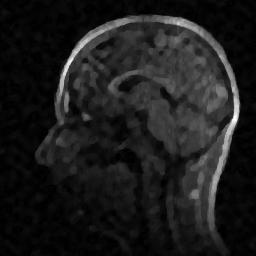}
\label{fig:sagital_image:spiral:fcstv}
} \\ 
\subfigure[Sampling pattern of the $k$-space]{
\includegraphics[width=0.22\linewidth]{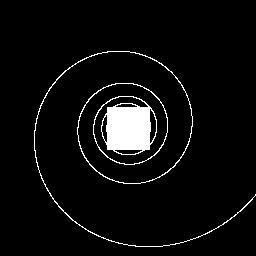}
\label{fig:sagital_image:spiral:pattern}
}
\subfigure[Pointwise error using the Douglas-Rachford optimization provided in~\cite{chauffert2014vds}]{
\includegraphics[width=0.22\linewidth]{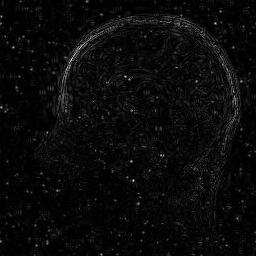}
\label{fig:sagital_image:spiral:vds_error}
}
\subfigure[Pointwise error using the fused $\ell^1$ recovery]{
\includegraphics[width=0.22\linewidth]{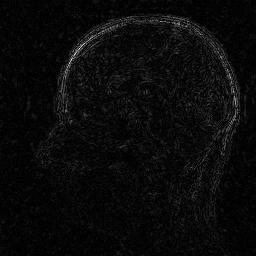}
\label{fig:sagital_image:spiral:fcs_error}
}
\subfigure[Pointwise error using a fused total variation minimization]{
\includegraphics[width=0.22\linewidth]{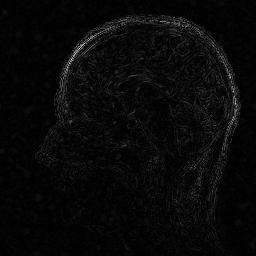}
\label{fig:sagital_image:spiral:fcstv_error}
}
\caption{Recovery of a sagital brain MRI from 1/20 undersampled measurements taken as a spirale in the Fourier space and a determinist low-frequency area. The sparse minimization is done on symmlet wavelet coefficients and on the gradient for the rightmost image. }
\label{fig:Chauffert_CScan_spiral}
\end{figure}

\begin{figure}[htb]
\centering
\subfigure[Original image]{
\includegraphics[width=0.22\linewidth]{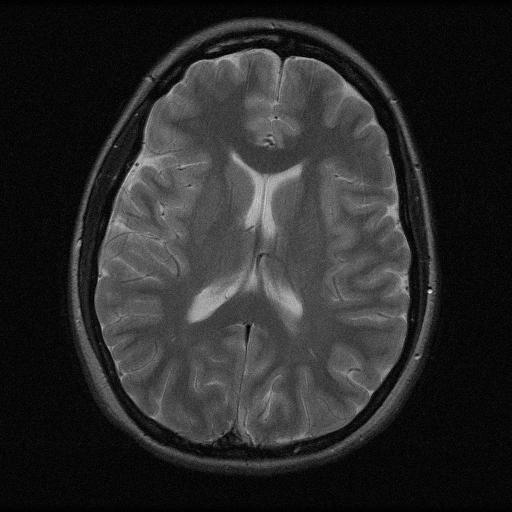}
\label{fig:sagital_image:radial:original}
}
\subfigure[Recovery using the Douglas-Rachford optimization provided in~\cite{chauffert2014vds}]{
\includegraphics[width=0.22\linewidth]{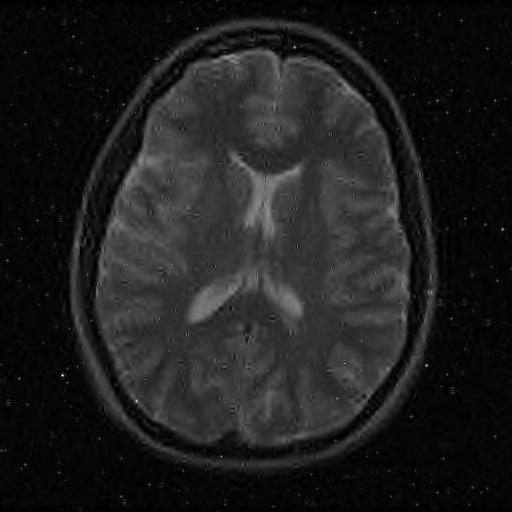}
\label{fig:sagital_image:radial:vds}
}
\subfigure[Recovery using the fused $\ell^1$ recovery]{
\includegraphics[width=0.22\linewidth]{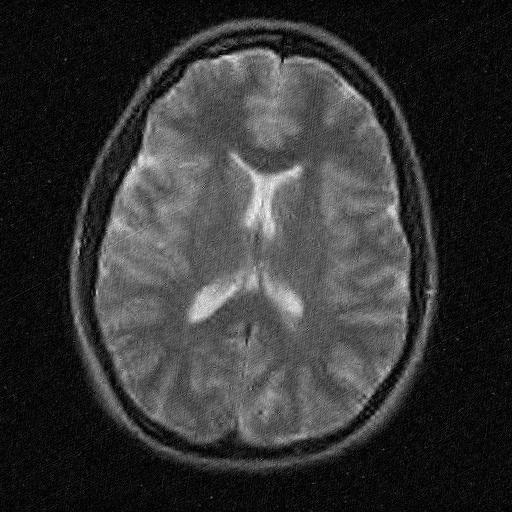}
\label{fig:sagital_image:radial:fcs}
}
\subfigure[Recovery using a fused total variation minimization]{
\includegraphics[width=0.22\linewidth]{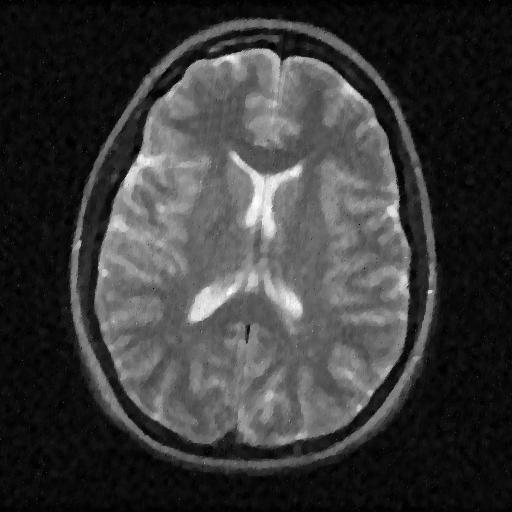}
\label{fig:sagital_image:radial:fcstv}
} \\ 
\subfigure[Sampling pattern of the $k$-space]{
\includegraphics[width=0.22\linewidth]{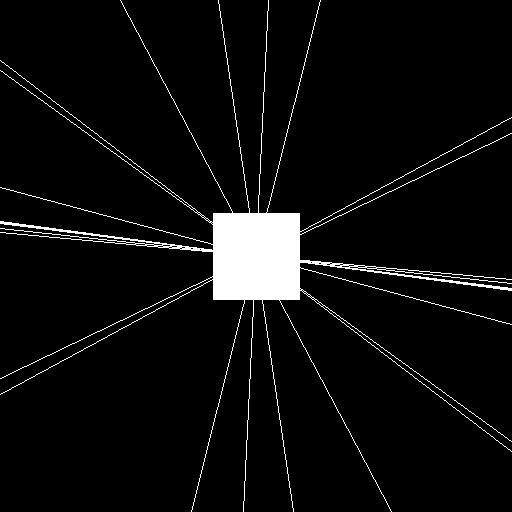}
\label{fig:sagital_image:radial:pattern}
}
\subfigure[Pointwise error using the Douglas-Rachford optimization provided in~\cite{chauffert2014vds}]{
\includegraphics[width=0.22\linewidth]{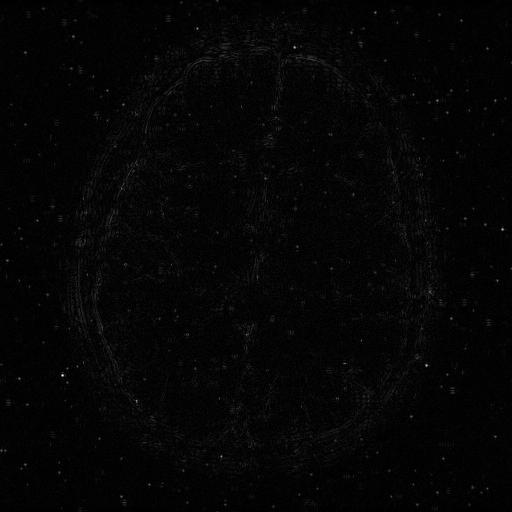}
\label{fig:sagital_image:radial:vds_error}
}
\subfigure[Pointwise error using the fused $\ell^1$ recovery]{
\includegraphics[width=0.22\linewidth]{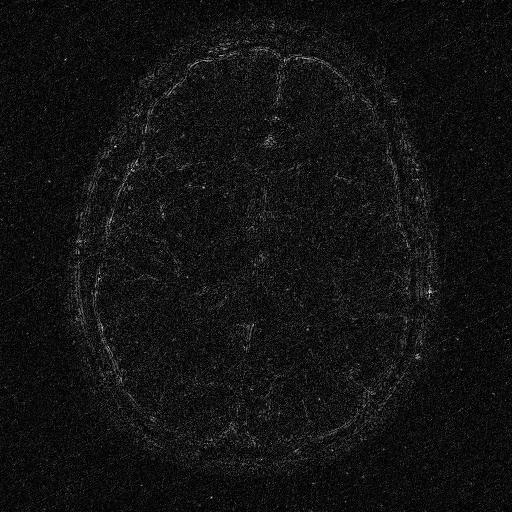}
\label{fig:sagital_image:radial:fcs_error}
}
\subfigure[Pointwise error using a fused total variation minimization]{
\includegraphics[width=0.22\linewidth]{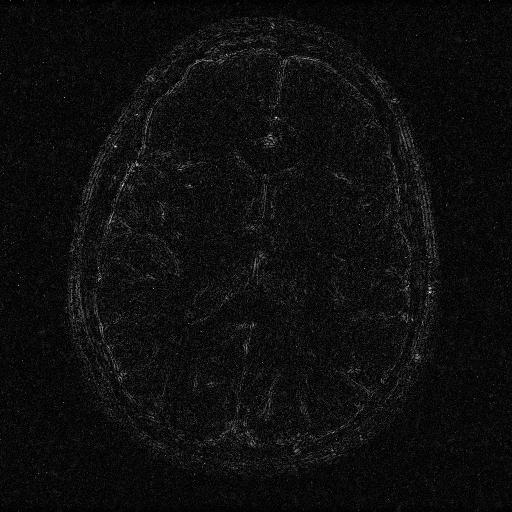}
\label{fig:sagital_image:radial:fcstv_error}
}
\caption{Recovery of an axial brain MRI from 1/20 undersampled measurements taken as radial lines in the Fourier space and a determinist low-frequency area. The sparse minimization is done on symmlet wavelet coefficients and on the gradient for the rightmost image. }
\label{fig:Chauffert_RealBrain_radial}
\end{figure}

\begin{figure}[htb]
\centering
\subfigure[Original image]{
\includegraphics[width=0.22\linewidth]{RealBrain_Original.jpg}
\label{fig:sagital_image:spiral:original}
}
\subfigure[Recovery using the Douglas-Rachford optimization provided in~\cite{chauffert2014vds}]{
\includegraphics[width=0.22\linewidth]{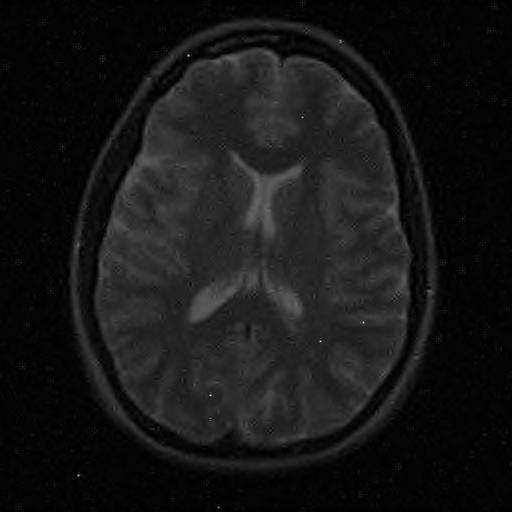}
\label{fig:sagital_image:spiral:vds}
}
\subfigure[Recovery using the fused $\ell^1$ recovery]{
\includegraphics[width=0.22\linewidth]{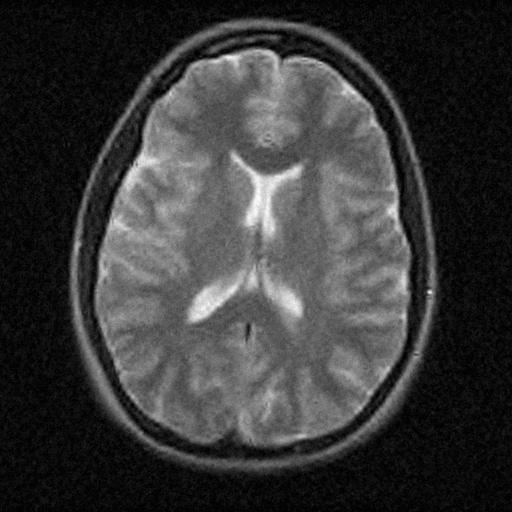}
\label{fig:sagital_image:spiral:fcs}
}
\subfigure[Recovery using a fused total variation minimization]{
\includegraphics[width=0.22\linewidth]{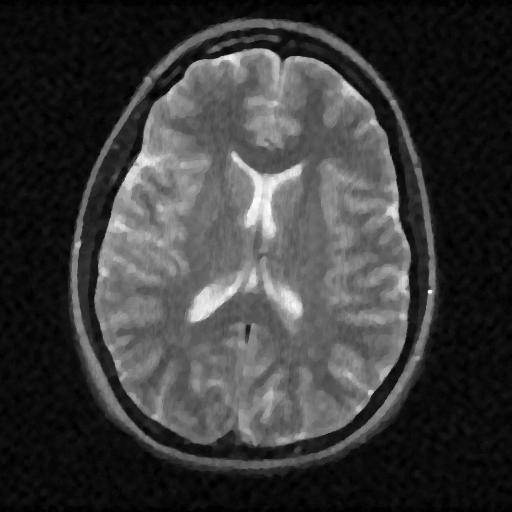}
\label{fig:sagital_image:spiral:fcstv}
} \\ 
\subfigure[Sampling pattern of the $k$-space]{
\includegraphics[width=0.22\linewidth]{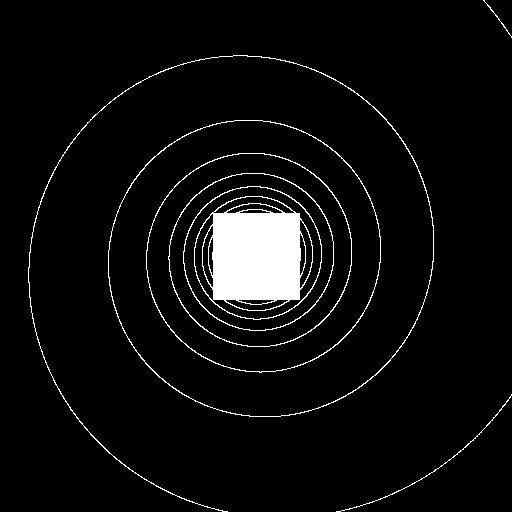}
\label{fig:sagital_image:spiral:pattern}
}
\subfigure[Pointwise error using the Douglas-Rachford optimization provided in~\cite{chauffert2014vds}]{
\includegraphics[width=0.22\linewidth]{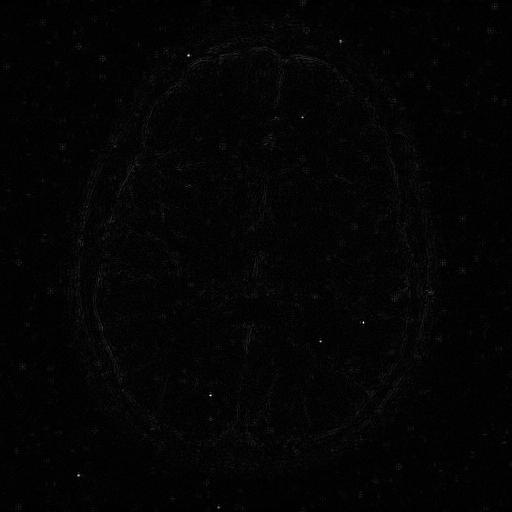}
\label{fig:sagital_image:spiral:vds_error}
}
\subfigure[Pointwise error using the fused $\ell^1$ recovery]{
\includegraphics[width=0.22\linewidth]{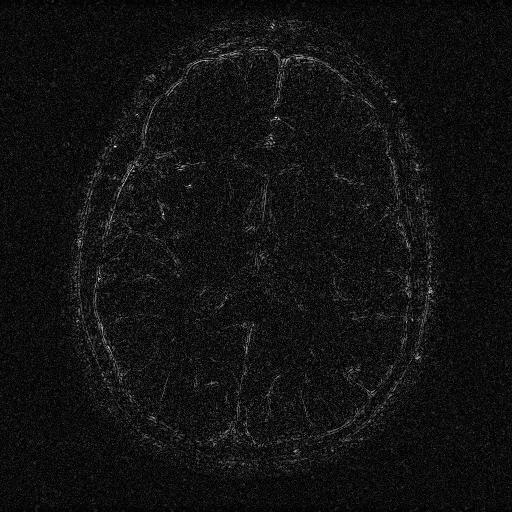}
\label{fig:sagital_image:spiral:fcs_error}
}
\subfigure[Pointwise error using a fused total variation minimization]{
\includegraphics[width=0.22\linewidth]{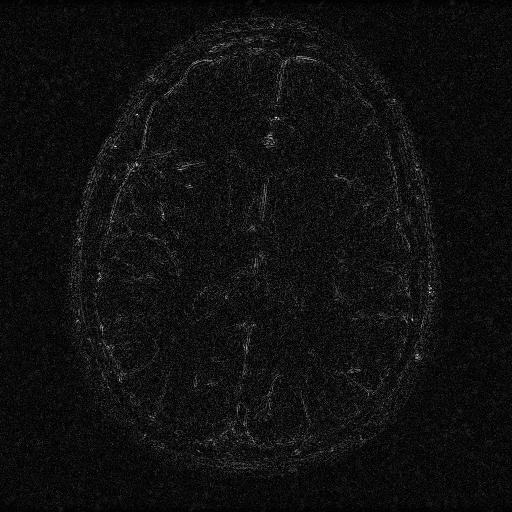}
\label{fig:sagital_image:spiral:fcstv_error}
}
\caption{Recovery of an axial brain MRI from 1/20 undersampled measurements taken as radial lines in the Fourier space and a determinist low-frequency area. The sparse minimization is done on symmlet wavelet coefficients and on the gradient for the rightmost image. }
\label{fig:Chauffert_RealBrain_spiral}
\end{figure}

For this test, we have reused the software already available from the authors of~\cite{chauffert2014vds}\footnote{At the time of writing, the code is available at \url{http://chauffertn.free.fr/codes.html}. }
The samples were obtained by subsampling $5$\% of the Fourier coefficients and adding normal noise with $0$ mean and $0.05$ variance. 
The accuracies can be found in Table~\ref{tab:cscan} for the results depicted in Figures~\ref{fig:Chauffert_CScan_radial} and \ref{fig:Chauffert_CScan_spiral} and in Table~\ref{tab:realbrain} for results depicted in Figures~\ref{fig:Chauffert_RealBrain_radial} and~\ref{fig:Chauffert_RealBrain_spiral}. 
As it is seen from the pointwise error figures, our fusion process is less subject to small local perturbations. 
Moreover, it is clear from the tables that fusing with local recovery obtained by total variation minimization is the most suitable approach.

\begin{table}[htb]
\centering
\begin{tabular}{ll|ccc}
\hline
Sampling                &                &  DR-VDS  & Fused $\ell^1$ & Fused TV								  \\ \hline
\multirow{3}{*}{Radial} & SSIM           & $0.6099$ &   $0.4366$     & $0.7691$                 \\
                        & PSNR           & $26.772$ &   $26.071$     & $29.193$                 \\
                        & $\ell^2$ error & $3.0294$ &   $3.0971$     & $2.4104$                 \\ \hline
\multirow{3}{*}{Spiral} & SSIM           & $0.6311$ &   $0.6830$     & $0.8034$                 \\
                        & PSNR           & $27.331$ &   $28.462$     & $30.336$                 \\ 
                        & $\ell^2$ error & $2.8132$ &   $2.3014$     & $1.7364$                 \\ \hline
\end{tabular}
\caption{Recovery results for the sagital brain MRI. The results are obtained by sampling 5\% of the $k$-space.}
\label{tab:cscan}
\end{table}

\begin{table}[htb]
\centering
\begin{tabular}{ll|ccc}
\hline
Sampling                &                &  DR-VDS  & Fused $\ell^1$ & Fused TV								  \\ \hline
\multirow{3}{*}{Radial} & SSIM           & $0.6336$ &   $0.4609$     & $0.6847$                 \\
                        & PSNR           & $30.917$ &   $26.807$     & $30.101$                 \\
                        & $\ell^2$ error & $3.1783$ &   $3.1700$     & $2.1869$                 \\ \hline
\multirow{3}{*}{Spiral} & SSIM           & $0.6645$ &   $0.6150$     & $0.7174$                 \\
                        & PSNR           & $34.695$ &   $29.283$     & $31.057$                 \\ 
                        & $\ell^2$ error & $2.8709$ &   $2.4314$     & $1.6922$                 \\ \hline
\end{tabular}
\caption{Recovery results for the axial slice brain MRI. The results are obtained by sampling 5\% of the $k$-space.}
\label{tab:realbrain}
\end{table}

This behavior is true for both images, which are real-world images of brain scans. 
It is worth keeping in mind that SSIM is a structural similarity which tries to emulate the human visual perception while the other metrics are pure machinery.
As can be seen both from the figures and the tables, the proposed method performs at least as well as previous approaches, and may behave better when considering total variation minimization.

\subsection{Single recovery from distributed measurements}
We investigate here use of a single recovery procedure from multiple measurements. 
In this settings, and following the ideas developed in~\cite{adcock2016CSparallel} and described in the previous sections, we are trying to recover the unknown vector $\xbf$ from the measurements $\ybf^{(i)}$ by solving a single sparse minimization problem.

\begin{figure}[htb]
\centering
\subfigure{
\includegraphics[width=0.12\linewidth]{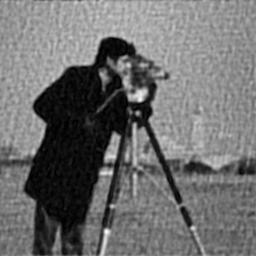}
\label{fig:adcock:adcock20}
}
\subfigure{
\includegraphics[width=0.12\linewidth]{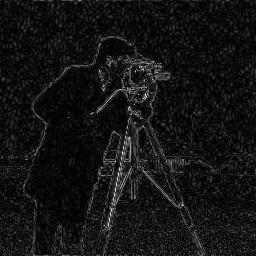}
\label{fig:adcock:adcock20_error}
}
\subfigure{
\includegraphics[width=0.12\linewidth]{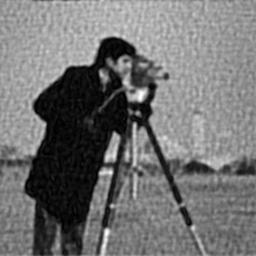}
\label{fig:adcock:adcock25}
}
\subfigure{
\includegraphics[width=0.12\linewidth]{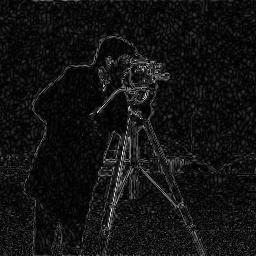}
\label{fig:adcock:adcock25_error}
}
\subfigure{
\includegraphics[width=0.12\linewidth]{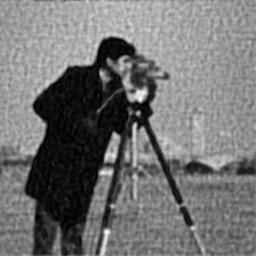}
\label{fig:adcock:adcock30}
}
\subfigure{
\includegraphics[width=0.12\linewidth]{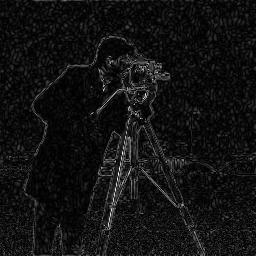}
\label{fig:adcock:adcock30_error}
} \\ 
\subfigure{
\includegraphics[width=0.12\linewidth]{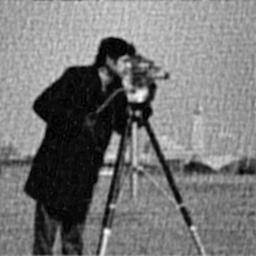}
\label{fig:adcock:cs20}
}
\subfigure{
\includegraphics[width=0.12\linewidth]{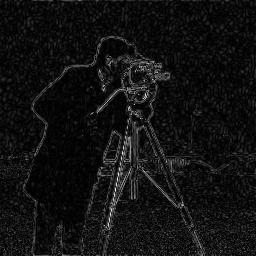}
\label{fig:adcock:cs20_error}
}
\subfigure{
\includegraphics[width=0.12\linewidth]{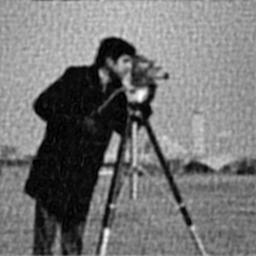}
\label{fig:adcock:cs25}
}
\subfigure{
\includegraphics[width=0.12\linewidth]{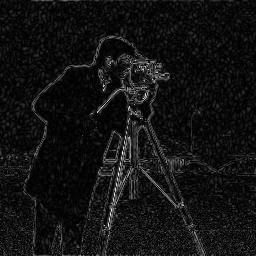}
\label{fig:adcock:cs25_error}
}
\subfigure{
\includegraphics[width=0.12\linewidth]{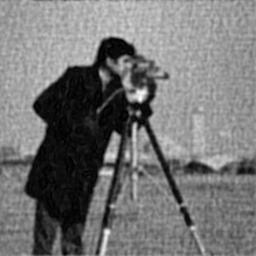}
\label{fig:adcock:cs30}
}
\subfigure{
\includegraphics[width=0.12\linewidth]{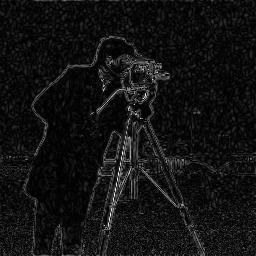}
\label{fig:adcock:cs30_error}
} \\ 
\subfigure{
\includegraphics[width=0.12\linewidth]{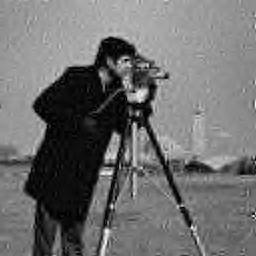}
\label{fig:adcock:l1a20}
}
\subfigure{
\includegraphics[width=0.12\linewidth]{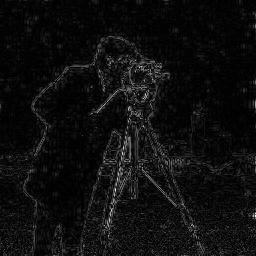}
\label{fig:adcock:l1a20_error}
}
\subfigure{
\includegraphics[width=0.12\linewidth]{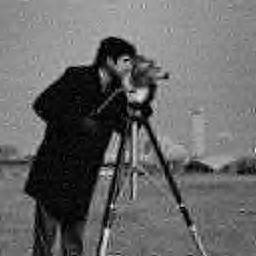}
\label{fig:adcock:l1a25}
}
\subfigure{
\includegraphics[width=0.12\linewidth]{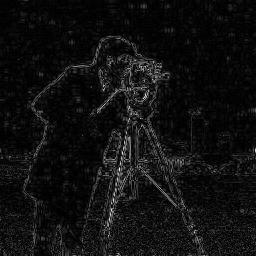}
\label{fig:adcock:l1a25_error}
}
\subfigure{
\includegraphics[width=0.12\linewidth]{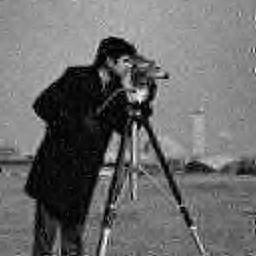}
\label{fig:adcock:l1a30}
}
\subfigure{
\includegraphics[width=0.12\linewidth]{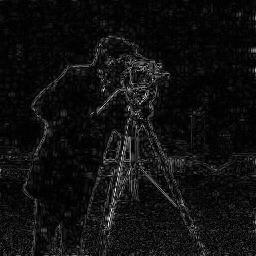}
\label{fig:adcock:l1a30_error}
} \\
\subfigure{
\includegraphics[width=0.12\linewidth]{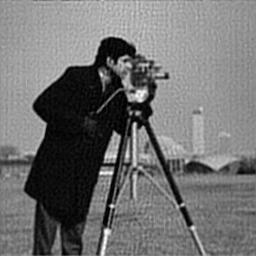}
\label{fig:adcock:fcs20}
}
\subfigure{
\includegraphics[width=0.12\linewidth]{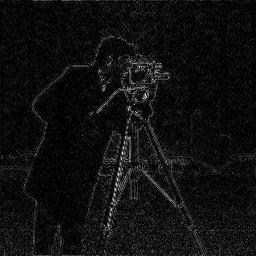}
\label{fig:adcock:fcs20_error}
}
\subfigure{
\includegraphics[width=0.12\linewidth]{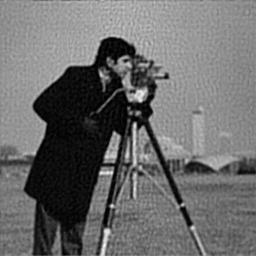}
\label{fig:adcock:fcs25}
}
\subfigure{
\includegraphics[width=0.12\linewidth]{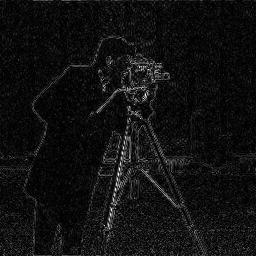}
\label{fig:adcock:fcs25_error}
}
\subfigure{
\includegraphics[width=0.12\linewidth]{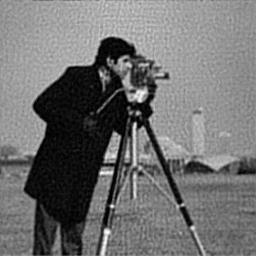}
\label{fig:adcock:fcs30}
}
\subfigure{
\includegraphics[width=0.12\linewidth]{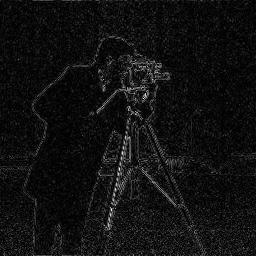}
\label{fig:adcock:fcs30_error}
}
\caption{Recovery of the camera man image when using a centralized recovery vs. local recoveries. The additive noise in the measurements is normal with variance 0.02 (left), 0.025 (central), and 0.03 (right). The first row depicts the recovery from a single, centralized, sparse approximation. The second row displays the traditional compressed sensing reocvery. The third row is the $\ell^1$ analysis model while the bottom row corresponds to the proposed fused $\ell^1$ framework.}
\label{fig:SingleCompute}
\end{figure}

\begin{table}[htb]
\centering
\begin{tabular}{ll|cccc}
\hline
Noise level             &                &   DSL1   &    CS    & L1 Analysis & Fused L1 analysis \\ \hline
\multirow{3}{*}{$0.020$}& SSIM           & $0.5247$ & $0.5693$ &   $0.6432$  &     $0.6776$      \\
                        & PSNR           & $23.028$ & $23.289$ &   $23.371$  &     $25.428$      \\
                        & $\ell^2$ error & $18.064$ & $17.530$ &   $17.365$  &     $13.703$      \\ \hline
\multirow{3}{*}{$0.025$}& SSIM           & $0.5279$ & $0.5711$ &   $0.6354$  &     $0.6478$      \\
                        & PSNR           & $23.000$ & $23.240$ &   $23.273$  &     $25.064$      \\ 
                        & $\ell^2$ error & $18.122$ & $17.629$ &   $17.562$  &     $14.290$      \\ \hline
\multirow{3}{*}{$0.030$}& SSIM           & $0.5200$ & $0.5625$ &   $0.6325$  &     $0.6231$      \\
                        & PSNR           & $22.912$ & $23.127$ &   $23.074$  &     $24.980$      \\ 
                        & $\ell^2$ error & $18.308$ & $17.860$ &   $17.970$  &     $14.429$      \\ \hline
\end{tabular}
\caption{Recovery results for the camera man image.}
\label{tab:adcock}
\end{table}

Fig.~\ref{fig:SingleCompute} shows the results obtained by various methods, namely, a single $\ell^1$ minimization computed on the multi-channel measurements (similar to the approach of~\cite{adcock2016CSparallel}) depicted on the first row, and traditional compressed sensing in the second row, an $\ell^1$ analysis model using Daubechies 4 wavelets in the third row, and our fused framework where the local recovery are done by minimizing the coefficients on a Daubechies 4 wavelet basis. 

The fusion of the local recovery shows the best of the results, which is also confirmed by the values for SSIM, PSNR, and pointwise $\ell^2$ error in Table~\ref{tab:adcock}.

\subsection{Robustness to erasure}
One advantage claimed for this framework is the robustness of the approach to local errors. 
This is believed to be a consequence of the redundancy inherent to frames. 
We demonstrate this point by considering the following setup. 
Assume given the camera man image as in the previous experiments. 
We emulate faulty measures by considering salt and pepper noise in the image prior to taking the Fourier measurements and filtering operations. 

\begin{figure}[htb]
\centering
\subfigure[$\ell^2$ error of the recovery]{
\includegraphics[width=0.25\linewidth]{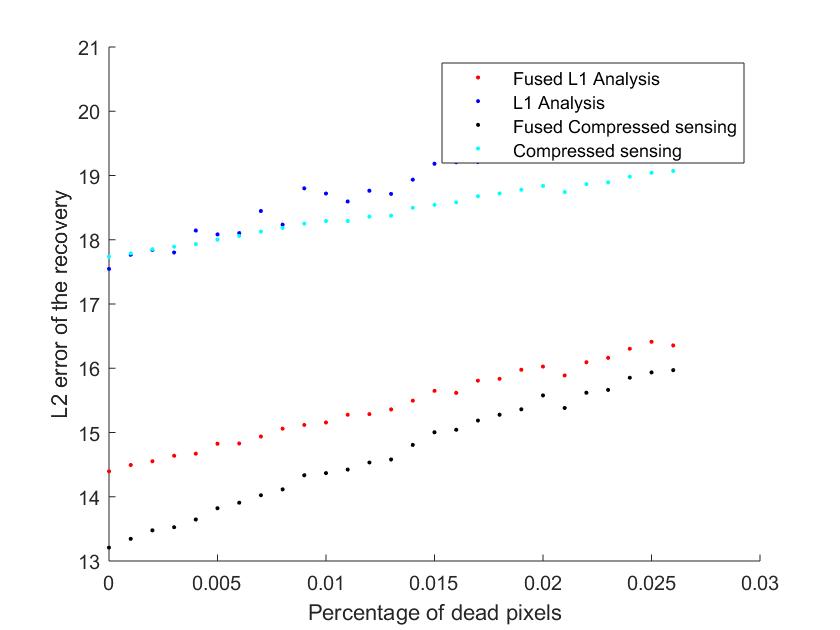}
\label{fig:SnP:l2}
}
\subfigure[PSNR of the recovery]{
\includegraphics[width=0.25\linewidth]{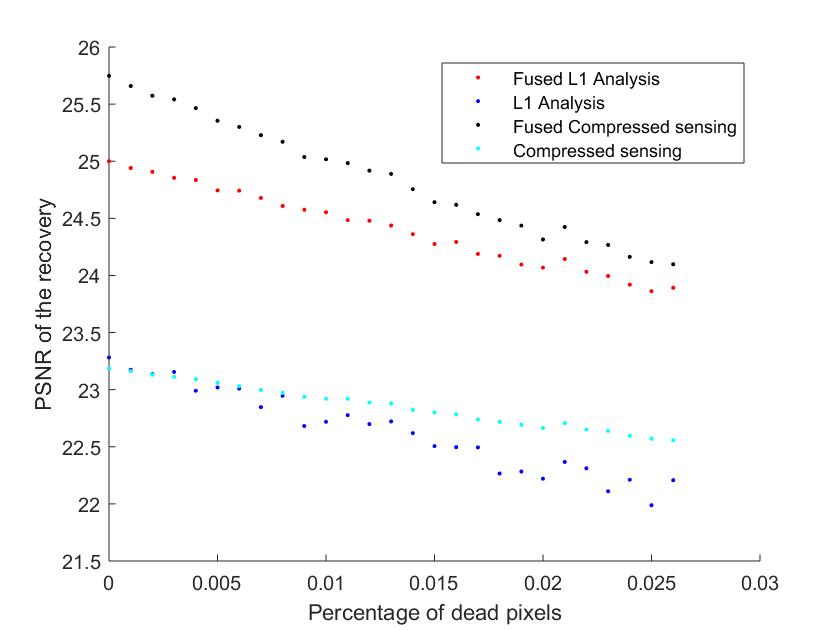}
\label{fig:SnP:psnr}
}
\subfigure[SSIM score of the recovery]{
\includegraphics[width=0.25\linewidth]{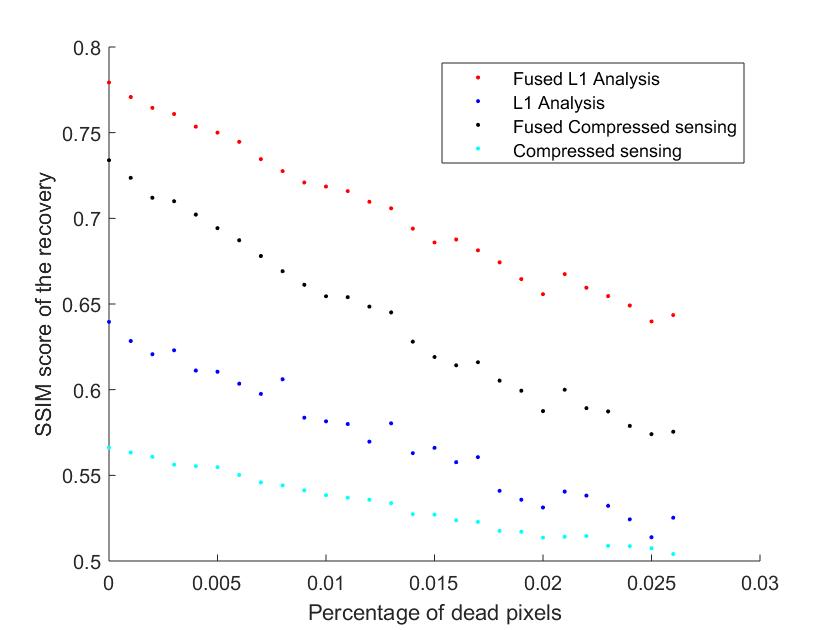}
\label{fig:SnP:ssim}
}
\caption{Recovery in presence of erasures in the input image.}
\label{fig:SnPnoise}
\end{figure}

The graphs in Fig.~\ref{fig:SnPnoise} show the evolution of the fidelity of the recovered images, as the density of faulty pixels increases. 
It is evident from these figures that the fusion process (red or black curves) favors robustness in the presence of erasure. 
Even in a regime where up to $30$\% of the pixels cannot be considered reliable, our fusion process is still able to achieve high fidelity with the original images. 
The results shown here were also obtained with some additive Gaussian noise post measurements.

\section*{Acknowledgment}
R. A. is partially supported by the BSU Aspire Research Grant ``Frame Theory and Modern Sampling Strategies'' .
J.-L. B. thanks the support of the European Research Council through the grant StG 258926 and the organizors as well as the Hausdorff Center for Mathematics for his participation in the Hausdorff Trimester Program on Mathematics of Signal Processing.
S. L. is partially supported by US NSF grants USA (DMS-1313490, DMS-1010058).


\end{document}